\documentclass[11pt]{article}
\usepackage[utf8]{inputenc}

\usepackage{amssymb, amsmath, amsthm, graphicx, authblk, bm, bbm, fullpage, soul}
\usepackage[round]{natbib}
\usepackage{enumitem}
\usepackage{subcaption}
\usepackage{float}
\usepackage{comment}
\usepackage{xcolor}
\usepackage{booktabs}

\usepackage{hyperref}
\hypersetup{
    colorlinks=true,
    linkcolor=blue,
    filecolor=magenta,
    urlcolor=blue,
    citecolor=blue,
}
\usepackage{cleveref}

\usepackage{algorithm}
\usepackage{algorithmic}

\DeclareMathOperator*{\argmin}{arg\,min}

\DeclareMathOperator*{\lag}{lag}
\DeclareMathOperator*{\train}{train}

\DeclareMathOperator*{\diag}{diag}

\DeclareMathOperator*{\corr}{Corr}

\newcommand{\ri}{{\rho}}

\newcommand{\F}{{\mathrm F}}

\DeclareMathOperator*{\rank}{Rank}

\DeclareMathOperator*{\op}{op}

\DeclareMathOperator*{\svd}{SVD}
\DeclareMathOperator*{\rsvd}{RSVD}

\newcommand{\p}{\mathbb P}

\newcommand{\ii}{{(i)}}

\newcommand{\dint}{ \mathrm{d}}
\newcommand{\bb}{{\mathfrak b}}

\newcommand{\lt}{{\mathbf L_2}}
\newcommand{\E}{\mathbb{E}}


\newtheorem{lemma}{Lemma}
\newtheorem{theorem}{Theorem}
\newtheorem{definition}{Definition}
\newtheorem{remark}{Remark}

\title{Online Change Point  Detection for Multivariate Inhomogeneous  Poisson Processes Time Series}

\begin{document}

\author[1]{Xiaokai Luo}
\author[2]{Haotian Xu}
\author[3]{Carlos Misael Madrid Padilla}
\author[4]{Oscar Hernan Madrid Padilla}
\affil[1]{Department of Applied and Computational Mathematics and Statistics, University of Notre Dame}
\affil[2]{Department of Mathematics and Statistics, Auburn University}
\affil[3]{Department of Statistics and Data Science,  Washington University in Saint Louis}
\affil[4]{Department of Statistics, University of California, Los Angeles}

\maketitle
\begin{abstract}
  We study online change point detection for multivariate inhomogeneous Poisson point process time series. This setting arises commonly in applications such as earthquake seismology, climate monitoring, and epidemic surveillance, yet  remains underexplored in the machine learning and statistics literature.   We propose a method that  uses   low-rank matrices  to represent the  multivariate Poisson intensity functions, resulting in an adaptive nonparametric detection procedure. Our algorithm is single-pass and requires only constant computational cost per new observation, independent of the elapsed length of the time series.  We  provide theoretical guarantees to control  the overall false alarm probability and  characterize  the detection delay under temporal dependence. We also develop a new Matrix Bernstein inequality for temporally dependent Poisson point process time series, which may be of independent interest.
  Numerical experiments demonstrate that our method is both statistically  robust and computationally efficient.
\end{abstract}

\section{Introduction}
An inhomogeneous Poisson point process (PPP) provides a flexible model for random events  occurring in space with location-dependent rates. 
Applications include forest fires \citep{stoyan2000recent}, earthquakes \citep{bray2013assessment}, citywide crime incidents \citep{baddeley2021analysing}, and epidemic outbreaks \citep{AlDousari2021NHPPKuwait}. In these examples, the \emph{intensity function} is central, as it specifies the expected event density and encodes the spatial structure of the phenomenon.
Formally, a PPP sample  $X\subset \mathbb R^d$ is said to be sampled from an intensity $\lambda:\mathbb R^d\to\mathbb R_+$ if:

\noindent\textbf{(1)} for every   set $S\subseteq\mathbb R^d$, the count $|S\cap X|$ is a Poisson random variable  with mean $\int_S \lambda(x)\,dx$;   and

\noindent\textbf{(2)} for disjoint   sets $S_1,\ldots,S_n\subset\mathbb R^d$, the random variables $|S_1\cap X|,\ldots,|S_n\cap X|$ are independent.

Estimation for a \emph{single} PPP is well studied: \citet{reynaud2003adaptive} derive minimax rates of estimating the intensity function in one dimension; \citet{flaxman2017poisson} estimate intensities using reproducing kernel Hilbert space (RKHS); and \citet{xu2025supp} leverage tensor structure for multivariate intensity estimation.

In this work, we study online change point detection for PPP time series. Specifically, we assume that at every instance $i$, we observe a PPP sample $ X^{(i)}$, and the underlying marginal intensity function can changes over time in an abrupt manner.   A key feature of real data is that consecutive samples  $X^{(i)}$ are temporally  dependent. For example, environmental persistence   and latent dynamics can induce temporal dependence across time points even when each sample is well approximated by a PPP marginally \citep{baddeley2007spatial}. To model  this feature, we allow the sequence $\{X^{(i)}\}_{i= 1}^\infty $ to exhibit temporal dependence, while maintaining PPP structure within each time index $i$. We formalize this  general temporal  dependent time series models below.
\begin{definition}[PPP time series   with temporal dependence]
\label{def:model_ppp_beta_mixing}
Let $\mathbb X\subset\mathbb R^d$ be compact. Let $\{X^{(i)}\}_{i=1}^\infty $ be a
 sequence of point processes on $\mathbb X$.
Assume the following two conditions hold.

\noindent {\bf (A) Poisson marginals.}  
For each time index $i$, the marginal distribution of $X^{(i)}$ corresponds to an inhomogeneous Poisson point process on $\mathbb{X}$ with   intensity $\lambda_i^*:\mathbb{X}\to\mathbb{R}_+$, satisfying
\begin{equation*}
\sup_{i\in\mathbb Z}\|\lambda_i^*\|_{\infty}<\infty \quad \text{and} \quad \sup_{i\in\mathbb Z}\|\lambda_i^*\|_{W^{ 2, \gamma  }}<\infty,
\end{equation*}
where $\|\cdot\|_{W^{2,\gamma}}$ is the Sobolev norm defined in \eqref{eq:W-norm}.

\noindent {\bf (B) Geometric $\beta$-mixing across time.}  
There exists a constant $c>0$ such that
\[
\beta(k)\le e^{-c(k-1)}\qquad\text{for all }k =  1,2,3,\ldots.
\]
Here, $\{\beta(k)\}_{k= 1}^\infty $ are the $\beta$-mixing coefficients of $\{X^{(i)}\}_{i=1}^\infty$, as defined in
\eqref{def:beta-mixing}.

\noindent Under {\bf (A)} and {\bf(B)}, we consider the  possible two scenarios.

\noindent {\bf (M0) No change point.}  
There exists an intensity $\lambda^*:\mathbb X\to\mathbb R_+$ such that
\[
\lambda_i^*=\lambda^* \text{ for all } i\in\mathbb Z.
\]
\noindent {\bf (M1) Single change point.}  
There exist intensities $\lambda^*,\lambda_a^*:\mathbb X\to\mathbb R_+$ with $\lambda^*\neq \lambda_a^*$ and
a change point $\bb\in\mathbb Z_+$ such that
\[
\lambda_i^*=\lambda^* \text{ for all } i\le \bb,
\quad \text{and} \quad 
\lambda_i^*=\lambda_a^* \text{ for all } i> \bb.
\]
\end{definition}
While real-world time series may contain multiple change points, we follow the online change point detection literature and focus on the above at most one change point settings. In practice, the detection algorithm is simply restarted immediately after a change is declared.

\subsection{Related Works}
Nonparametric change detection for general distributions has been studied through a range of modern approaches. Representative lines of work include kernel methods that embed distributions into reproducing kernel Hilbert spaces and enable sequential two-sample testing \citep{harchaoui2008kernel,li2015m,arlot2019kernel,wei2022online}, discrepancy measures based on energy distances and characteristic functions \citep{matteson2014nonparametric}, and learning-based procedures such as neural network detectors \citep{li2024automatic,gong2022neural}. Additional perspectives include functional kernel approaches \citep{romano2023fast}, random forest methods \citep{londschien2023random}, and graph-based two-sample statistics \citep{chen2015graph,chu2019asymptotic}. Anytime-valid methodology via e-values has also been emphasized recently, providing online change detectors with rigorous error control under minimal assumptions \citep{shin2022detectors}.
From a theoretical standpoint, minimax-optimal results for offline nonparametric change point detection and localization have been established for changes in smooth distributions \citep{padilla2021optimal,madrid2023change}.

Change point detection for point process time series has wide-ranging applications, including earthquake seismology \citep{ogata2011significant}, wildfire monitoring \citep{xu2011point}, epidemic surveillance \citep{hohl2020daily} and DNA sequencing \citep{zhang2016scan}. 
{To the best of our knowledge, existing point process change point methods are largely tailored to parametric temporal or spatio-temporal models, or to offline estimation, and therefore are not directly applicable to our online nonparametric multivariate PPP-intensity setting.
In particular, \citet{wang2023sequential} consider sequential detection for self- and mutually-exciting point processes (specifically, Hawkes networks) using a parametric CUSUM/likelihood-based construction on temporal event data. Similarly, \citet{zhang2023online} study online score statistics for clustered changes in multivariate Hawkes network point processes. The composite-likelihood approach of \citet{zhao2019composite} is also model-based and focuses on offline change-point estimation in piecewise stationary spatio-temporal processes. Finally, \citet{dion2023multiple} study multiple \emph{offline} change-point detection for some point processes, including inhomogeneous and marked Poisson processes, using a minimum-contrast estimator.  Complementing algorithmic developments, \citet{brandenberger2025detecting} study fundamental detection limits for point-process changes from an information-theoretic perspective.}


Despite these advances, existing change point methods for point processes often rely on strong parametric assumptions and typically assume independence across time. Our PPP time series setting poses additional challenges: each observation is an unordered random set with random cardinality, temporal dependence is present over time, and the signal corresponds to a change in an intensity function rather than a finite-dimensional parameter. 
To our knowledge, there is no general-purpose approach with provable guarantees for detecting nonparametric changes in multivariate PPP intensities under realistic temporal dependence.

\subsection{Summary of Results}

\noindent\textbf{New algorithm for online change detection.} We introduce a new computationally efficient online nonparametric detection procedure for PPP time series. The key idea is to map each PPP sample $X^{(i)}$  to a low-rank intensity matrix yielding a scalable algorithm adaptive to local changes in the underlying intensity. In particular,
the per new observation cost is constant independent of the elapsed time series length. As a result, our method is single-pass, and the total computational cost scales linearly with the number of observed samples.

\noindent\textbf{General theory.}
We develop a  theoretical  framework for online change detection in multidimensional PPPs under temporal dependence. A key technical ingredient is a new Matrix Bernstein inequality for geometrically $\beta$-mixing PPP time series (\Cref{thm:bernstein_beta_mixing_ppp_rect_nonstat} in Appendix). Combined with our low-rank intensity matrix approximation analysis, this yields sharp non-asymptotic bounds that simultaneously account for basis-truncation bias and stochastic variance in dependent PPP time series.

\noindent\textbf{Finite-sample guarantees.}
We establish finite-sample guarantees to show that our newly proposed method both controls the  overall false alarm probability and can detect a true change within a delay that depends explicitly on the change size in the intensity functions.  
 
\noindent\textbf{Empirical evidence.}
We demonstrate that our  procedure reliably identifies meaningful intensity changes while remaining computationally efficient through extensive simulations and a real-data application in modeling earthquake activity.

\subsection{Notations}

\noindent\textbf{Matrices.}
For $\mathcal M\in\mathbb R^{m\times n}$, let $\mathcal M_{(\mu,\eta)}$ be its $(\mu,\eta)$ entry.
We write $\|\mathcal M\|_{\F}$ and $\|\mathcal M\|_{\op}$ for the Frobenius and operator norms, and
$\rank(\mathcal M)$ for the rank. If $\rank(\mathcal M)=s$, let
$\sigma_1(\mathcal M)\ge \cdots \ge \sigma_s(\mathcal M)>0$ be its nonzero singular values.
With the SVD $\mathcal M=U\Sigma V^\top$, define the rank-$r$ truncation ($r\le s$) by
$$\svd(\mathcal M,r)=U\,\Sigma_{(r)}\,V^\top.$$ Here $\Sigma_{(r)}=\diag(\sigma_1(\mathcal M),\ldots,\sigma_r(\mathcal M),0,\ldots,0)\in\mathbb R^{s\times s}$.

\noindent\textbf{Function spaces.}
Let $\mathbb X=\mathbb X_1\times\cdots\times\mathbb X_d\subset\mathbb R^d$ be compact and define
 $$
\lt(\mathbb X)=\big\{f:\mathbb X\to\mathbb R:\ \|f\|_{\lt}^2=\int_{\mathbb X}f^2(x)\,\dint x<\infty\big\}.
$$
For convenience we often take $\mathbb X_1=\cdots=\mathbb X_d=\Omega$.
A family $\{\phi_k\}_{k\ge1}\subset \lt(\Omega)$ is orthonormal if
$$\int_\Omega \phi_k(x)\phi_j(x)\,\dint x=\mathbf 1\{k=j\}.$$
For an integer $\gamma\ge1$, let $W^{2,\gamma}(\mathbb X)$ be the Sobolev space of functions with weak derivatives
$f^{(b)}\in\lt(\mathbb X)$ for all multi-indices $b$ with $|b|\le\gamma$, equipped with the norm
\begin{equation}\label{eq:W-norm}
\|f\|_{W^{2,\gamma}}^2=\sum_{|b|\le\gamma}\|f^{(b)}\|_{\lt}^2.
\end{equation}

\noindent\textbf{Temporal dependence ($\beta$-mixing).}
Let $\{X_i\}_{i=-\infty}^\infty$ be a time series and define
$\mathcal F_{-\infty}^{\,j}=\sigma(X_i:i\le j)$ and
$\mathcal F_{j+k}^{\,\infty}=\sigma(X_i:i\ge j+k)$
for $j\in\mathbb Z$ and $k\ge1$. The $\beta$-mixing coefficients are
\begin{equation}\label{def:beta-mixing}
\beta(k)=\sup_{j\in\mathbb Z}\beta\big(\mathcal F_{-\infty}^{\,j}, \mathcal F_{j+k}^{\,\infty}\big),
\qquad k\ge1,
\end{equation}
where, for two $\sigma$-fields $\mathcal A$ and $\mathcal B$,
the absolute regularity coefficient is defined by
\begin{align*}
    \beta(\mathcal A,\mathcal B)
=\frac12\sup\bigg\{\sum_{i=1}^I\sum_{j=1}^J
|\mathbb P(A_i\cap B_j)-\mathbb P(A_i)\mathbb P(B_j)|:
\\
\{A_i\}_{i=1}^I\subset\mathcal A, \{B_j\}_{j=1}^J\subset\mathcal B\ \text{are   partitions}\bigg\}.
\end{align*} 
 We refer interested readers to \citet{bradley2005basic} for a detailed discussion of mixing conditions. 
\section{Online Change Point Detection for Poisson Point Processes}
We now describe the time series  model and our detection procedure.  We work under the   PPP  time series models with temporal dependence
introduced in Definition~\ref{def:model_ppp_beta_mixing}.

\noindent\textbf{Training stage.} We observe a collection of point processes
\[
X^{(i)} \subset \mathbb X = \Omega^{\otimes d} \subset \mathbb R^d,\quad i=1,\ldots,N_{\train},
\]
where each $X^{(i)}$ is generated over a fixed observation window (e.g., one day of events, one spatial
snapshot, or one short space--time interval).
We focus on the multivariate settings with $d\ge 2$. The case   $d=1$ is discussed in \Cref{remark:1d poisson}.  We assume $\{X^{(i)}\}_{i = 1}^{N_{\train}}$ to follow
\textbf{(M0)} in Definition~\ref{def:model_ppp_beta_mixing} with pre-change intensity $\lambda^*$.
The training sample is used to calibrate tuning parameters and the detection threshold.

\noindent\textbf{Post-training stage.} After the training stage, the sequence evolves according to one of two scenarios. 
In the first scenario \textbf{(M0)}, the Poisson intensity remains unchanged for the rest of the time horizon; that is,
$
\{X^{(i)}\}_{i=1}^{\infty}
$ is   stationary with common intensity function $\lambda^{*}$.
In the second scenario \textbf{(M1)}, there exists an unknown change point $\bb\ge N_{\train}$ such that
$  
\{ X^{(i)}\}_{i=N_{\train} + 1}^{\bb}
$  is  stationary with intensity $\lambda^*$ and $  
\{ X^{(i)}\}_{i=\bb+1}^{\infty}
$  is  stationary with intensity $\lambda^*_a$.
We allow temporal dependence within each regime under a geometric $\beta$-mixing condition, as specified in Definition~\ref{def:model_ppp_beta_mixing} {\bf (A)}. 

\noindent\textbf{Goal.} Given the training and post-training data, our goal is to develop an online algorithm such that (i) when there is  no change, the overall probability of a false alarm is kept small; and (ii) if a change occurs after the training phase, the algorithm raises an alarm as quickly as possible to  minimize the detection delay.

\noindent\textbf{Mapping intensity functions to matrices.}
We represent  the  infinite-dimensional  intensity functions to a  finite-dimensional  matrix via a RKHS basis.  This representation    has a distance-preserving property, enabling efficient operation over arbitrary intervals via dynamic programming, and avoids additional Monte Carlo procedures to compute $\lt$ norms of functions   in higher dimensions.

Formally, let $x=(x_1,\ldots,x_d)\in \mathbb{X}=\Omega^{\otimes d}\subset \mathbb{R}^d$, and partition the index set $[d] =\{1,\ldots,d\}$ into two disjoint subsets $\mathcal{I}_1$ and $\mathcal{I}_2$ such that
\[
[d]=\mathcal{I}_1   \cup  \mathcal{I}_2,
\qquad |\mathcal{I}_1|=p,\ \ |\mathcal{I}_2|=q,\ \ p+q=d .
\]
Define the corresponding coordinate partition as
\[
y  = (x_j)_{j\in \mathcal{I}_1}\in \Omega^{\otimes p},
\qquad
z  = (x_j)_{j\in \mathcal{I}_2}\in \Omega^{\otimes q},
\]
we write $x=(y,z)\in \Omega^{\otimes p}\times \Omega^{\otimes q}\subset \mathbb{R}^{p+q}.$

Let $\{\phi_k\}_{k =1}^\infty $ form  orthonormal univariate basis of $\lt(\Omega)$.
Then 
$  \big \{\phi_{i_1}(x_1)\cdots \phi_{i_p}(x_p)\big \}_{i_1,\ldots,i_p=1}^{\infty }      $
is a set of complete basis functions of $ \lt(\Omega^{\otimes p})$.
 For any $M \in \mathbb Z_+$,   the collection of functions $    \{\phi_{i_1}(x_1)\cdots \phi_{i_p}(x_p)\big \}_{i_1,\ldots,i_p=1}^{M}     \subset \lt (  \Omega^{\otimes p} ) $ 
 is orthonormal in $\Omega^{\otimes p}$ with cardinality $M^p$.
 Ordering the multi-indices $(i_1,\ldots, i_p)$  and $(\ell_1,\ldots, \ell_q)$,  we denote 
\begin{equation}
\label{eq:basis-blocks}
\begin{aligned}
\big\{\Phi_\mu(y)\big\}_{\mu=1}^{M^p}
&= \Big\{\phi_{i_1}(x_1)\cdots \phi_{i_p}(x_p)\Big\}_{i_1,\ldots,i_p=1}^{M},\\
\big\{\Psi_\eta(z)\big\}_{\eta=1}^{M^q}
&= \Big\{\phi_{\ell_1}(x_{p+1})\cdots \phi_{\ell_q}(x_{p+q})\Big\}_{\ell_1,\ldots,\ell_q=1}^{M}.
\end{aligned}
\end{equation} 
Let $\lambda^*: \Omega^{\otimes p} \times \Omega^{\otimes q} \to\mathbb R_+ $ satisfy $\|\lambda^*\|_{W^{ 2, \gamma  }}<\infty$. Define the matrix $\mathcal M(\lambda^*)\in\mathbb R^{M^p\times M^q}$ by
\begin{equation}
\label{eq:define M star}
\mathcal M(\lambda^*)_{(\mu,\eta)}
= \iint_{\mathbb R^{p+q}} \lambda^*(y,z)\,\Phi_\mu(y)\,\Psi_\eta(z)\dint y\dint z.
\end{equation}
Using $\mathcal M(\lambda^*)$, we can approximate $\lambda^*$ by its truncated expansion:
\begin{equation}
\label{eq:density finite matrix main}
\lambda_M^*(y,z)
= \sum_{\mu=1}^{M^p}\sum_{\eta=1}^{M^q} \mathcal M(\lambda^*)_{\mu,\eta} \,\Phi_\mu(y)\,\Psi_\eta(z).
\end{equation}
It was shown in Appendix G.1  that if $\{\phi_k\}_{k=1}^\infty $ are univariate RKHS orthonormal basis function, then
\begin{equation}  \label{eq:density finite matrix error main}
\|\lambda^*-\lambda_M^*\|_{\lt} \le C\,\|\lambda^*\|_{W^{ 2,\gamma }}\,M^{-\gamma}.
\end{equation}

\begin{remark}[Coordinate split]
It follows from \eqref{eq:density finite matrix main} and \eqref{eq:density finite matrix error main} that
$\mathcal M(\lambda^*)$ provides an accurate matrix representation of $\lambda^*$ with small approximation error.
This representation requires a coordinate partition in $\mathbb X \subset \mathbb R^d$. 
As suggested by \Cref{theorem:svd Hilbert} in Appendix, such a split is  valid for any function that admits a functional PCA representation. 
The split can also be  specified using prior
knowledge of the dataset, as demonstrated  in our real-data example. As a third option, one can partition the features into two
groups so that variables are more correlated within groups and less correlated across groups. See \Cref{remark:tuning parameters} for more details. 
\end{remark}

\subsection{Online change point detection}
For each process $X^{(i)}$, define its intensity matrix by $\widehat{\mathcal M}^{(i)}\in\mathbb R^{M^p\times M^q}$ with entries
\begin{equation}
\label{eq:A-i}
\widehat{\mathcal M}^{(i)}_{(\mu,\eta)}
= \sum_{x^{(i)}=(y^{(i)},z^{(i)})\in X^{(i)}} \Phi_\mu(y^{(i)})\,\Psi_\eta(z^{(i)}).
\end{equation}
It follows from Campbell’s Theorem (\Cref{lemma:campbell} in Appendix) that 
$ \E(\widehat{\mathcal M}^{(i)}_{(\mu,\eta) }) =\mathcal M(\lambda^*)_{  (\mu,\eta)}  $. Therefore, under the single change point scenario \textbf{(M1)}, the intensity functions admit   a change at \(\bb\), and
\begin{align} \label{eq:expected value of matrix at time t}
    \E\big(\widehat{\mathcal M}^{(i)}\big)= \begin{cases}
        \mathcal M(\lambda^*) &\text{if } i\le \bb,
        \\
        \mathcal M(\lambda^*_a) &\text{if } i> \bb.
    \end{cases}
\end{align}
  Consequently, for any $n\le \bb$, the deviation
 between  the matrices $ n^{-1}\sum_{i=1}^{n}\widehat{\mathcal M}^{(i)} $ and $  \mathcal M(\lambda^*) 
$  
can be controlled by high probability bounds in the time series  setting.

Our online change detection procedure is summarized in \Cref{algorithm:main}. 
Below we briefly explain its implementation. Using dynamic programming, at the current time $j$ and for any $k\in\{1,\ldots,W\}$, with the window size $W$ such that $N_{\mathrm{train}} + W \leq j$, we maintain
\begin{align*}
    L[k]= \sum_{i=1}^{(j-W +k) -1} \widehat{\mathcal M}^{(i)}
    \quad\text{and}\quad
    R[k]= \sum_{i=(j-W +k)}^{j} \widehat{\mathcal M}^{(i)} .
\end{align*}
Hence, for a given pair $(j,k)$, the matrix $\mathcal  D\in\mathbb R^{M^p\times M^q}$ in \Cref{algorithm:main} is the CUSUM statistic
\begin{align}
    \label{eq:expression of difference matrix}
    \mathcal D &= \frac{1}{(j-W +k) -1}\sum_{i=1}^{j-W-1+k} \widehat{\mathcal M}^{(i)}\nonumber\\
    &\quad
         - \frac{1}{\,W-k+1\,}\sum_{i=(j-W +k)}^{j} \widehat{\mathcal M}^{(i)},
\end{align}
which compares the data between  the intervals $[1, (j-W +k) -1]$ and $[(j-W +k) ,  j ]$.
 For example if $j=\bb+W$ and $k=1$, from \eqref{eq:expected value of matrix at time t} we can deduce that 
\begin{align} \label{eq:D population}
    \mathcal  D &= \frac{1}{\bb}\sum_{i=1}^{\bb} \widehat{\mathcal M}^{(i)} 
      - \frac{1}{W}\sum_{i=\bb+1}^{\bb+W} \widehat{\mathcal M}^{(i)}\nonumber\\
  &\approx \mathcal M(\lambda^*)-\mathcal M(\lambda^*_a)
  = \mathcal M(\lambda^*-\lambda^*_a),
\end{align}
where {the last equality follows from the linearity of the coefficient/matricization operator $\mathcal M$ given in \eqref{eq:define M star}, that is, $\mathcal M(f-g)=\mathcal M(f)-\mathcal M(g)$ for any square-integrable functions $f$ and $g$}. To further reduce variance when estimating
$\lambda^*-\lambda_a^*$, we apply the restricted SVD procedure   to $\mathcal  D$ as described in \Cref{algorithm:restricted svd}.

\Cref{algorithm:restricted svd} has two components:  (i) zeroing out higher-order entries of $\mathcal D$ by trimming to an adaptive basis size, and (ii) applying SVD to the trimmed matrix. The trimming is adaptive to the sample size in \eqref{eq:expression of difference matrix}: as $k$ ranges from $1$ to $W$, the effective  sample size     is $W-k+1$. The necessity of trimming comes from the fact that smaller samples  only allow us to reliably estimate  a smaller number of   matrix coefficients.

We apply SVD to $\mathcal D$ because its population counterpart $\lambda^*-\lambda^*_a$ is typically approximately low rank. Since $\lambda^*-\lambda^*_a\in \lt (\Omega^{\otimes p} \times \Omega^{\otimes q})$, the functional  SVD (\Cref{theorem:svd Hilbert} in Appendix) yields
\begin{equation}
    \label{eq:SVD difference function}
    \lambda^* - \lambda^*_a = \sum_{k=1}^{\infty} \sigma_{k}(\lambda^* - \lambda^*_a) \, f_k^*(y)\, g_k^*(z),
\end{equation}
with nonincreasing singular values $\sigma_1(\lambda^* - \lambda^*_a)\ge \sigma_2 (\lambda^* - \lambda^*_a) \ge\cdots\ge 0$ such that 
$$\sum_{k=1}^\infty  \sigma_k ^2 (\lambda^* - \lambda^*_a)  =\|\lambda^*-\lambda^*_a\|_{\lt }^2<\infty,$$
and orthonormal functions
$$\{f_k^*(y) \}\subset \lt (\Omega^{\otimes p}), \quad \{g_k^*(z) \}\subset \lt (\Omega^{\otimes q}).$$   

It is a commonly used assumption   in the literature \citep[e.g.,][]{hall2006properties,raskutti2012minimax} that,   if  
$ \|\lambda^*-\lambda^*_a\|_{W^{ 2,\gamma } } <\infty$,   then  the singular values of $\lambda^*-\lambda^*_a$ decay at a 
  polynomial  or exponential  rate. 
Since $\mathcal M(\lambda^*-\lambda^*_a)$ provides an accurate matrix representation of $\lambda^*-\lambda^*_a$,
we can anticipate  that the singular values of $\mathcal M(\lambda^*-\lambda^*_a)$, and consequently the singular values  of $\mathcal D$ in \eqref{eq:D population},  decay at the same rate, see Lemma~\ref{lem:tail-projection} in Appendix for a justification.

\smallskip
\begin{remark}[Computational cost]
Due to the dynamic programming design, for  a new observation   the  computational cost of \Cref{algorithm:main}   is
$O\big(r  W^{\,1 + d/(2\gamma + p\vee q)} \big)$. 
More precisely, at time $j$, updating each matrix in the lists $L$ and $R$ costs
$O\big(W^{\,d/(2\gamma + p\vee q)}\big)$. 
Computing the rank-$r$ SVD for each difference matrix $\mathcal D$ in \Cref{algorithm:main} costs
$O\big(r W^{\,d/(2\gamma + p\vee q)}\big)$. 
Since $L$ and $R$ each contain $W$ matrices, the total   cost is
$O\big(r W^{\,1 + d/(2\gamma + p\vee q)} \big)$.
Consequently, the method is single-pass over the sequential data, and the cost per new sample   does not grow with the past time series length.
\end{remark}

\smallskip
\begin{remark}
     [PPP   change point detection in 1D]\label{remark:1d poisson}
\Cref{algorithm:main} tackles online change point detection for Poisson point process sequential data in $\mathbb X \subset \mathbb R^d$ with $d\ge 2$.  On the other hand,  in  \Cref{section:1d} of Appendix,  we present a simplified one-dimensional variant that handles PPP time series  in  $\mathbb R$ by representing the intensity as a vector rather than a matrix. The univariate setting is substantially simpler than the multivariate setting,  as  univariate nonparametric models  do  not suffer from the curse of dimensionality.
\end{remark}

\begin{algorithm}[tb]
   \caption{Online multivariate PPP change detection}
   \label{algorithm:main}
\begin{algorithmic}
   \STATE {\bfseries Input:} Smoothness parameter $\gamma>0$; dimensionality $p,q$ with $p+q=d$; rank $r$; window size $W$; threshold constant $\mathcal C_\alpha$
   
   \STATE
   \STATE {\scriptsize $\blacktriangleright$} \textbf{Initialization Stage}
   \STATE $M \leftarrow \left\lceil (W/r)^{1/(2\gamma+p\vee q)} \right\rceil$

   \FOR{$k=1$ {\bfseries to} $W$}
      \STATE $L[k] \leftarrow \sum_{i=1}^{N_{\text{train}}-W+k-1}\widehat{\mathcal M}^{(i)} {\in\mathbb R^{M^p\times M^q}}$ \quad (computed via \eqref{eq:A-i})
   \ENDFOR

   \STATE
   \STATE {\scriptsize $\blacktriangleright$} \textbf{Detection Stage}
   \STATE $\mathtt{ALARM} \leftarrow \textsc{False}$
   \FOR{$j = N_{\text{train}}+1, N_{\text{train}}+2, \ldots$}
      \FOR{$k=1$ {\bfseries to} $W-1$}
         \STATE $L[k] \leftarrow L[k+1]$
      \ENDFOR
      \STATE $L[W] \leftarrow L[W] + \widehat{\mathcal M}^{(j-1)}$

      \FOR{$k=1$ {\bfseries to} $W$}
         \STATE $R[k] \leftarrow \sum_{i=j-W+k}^{j}\widehat{\mathcal M}^{(i)}$
      \ENDFOR

      \FOR{$k=1$ {\bfseries to} $W$}
         \STATE $n_1 \leftarrow j-W-1+k$
         \STATE $n_2 \leftarrow W-k+1$
         \STATE $\mathcal D \leftarrow n_1^{-1}L[k] - n_2^{-1}R[k]$  \quad (defined in \eqref{eq:expression of difference matrix})

         \IF{$\rsvd  ( \mathcal D  ) > \mathcal C_\alpha \Big(\dfrac{r}{n_2}\Big)^{\frac{\gamma}{2\gamma+p\vee q}} \log (j)$ \quad (see \Cref{algorithm:restricted svd})
         } 
            \STATE $\mathtt{ALARM}\leftarrow\textsc{True}$
            \STATE \textbf{break}
         \ENDIF
      \ENDFOR
   \ENDFOR
\end{algorithmic}
\end{algorithm}

\begin{algorithm}[tb]
   \caption{Restricted SVD: $\rsvd(\mathcal{D})$}
   \label{algorithm:restricted svd}
\begin{algorithmic}
   \STATE {\bfseries Input:} Matrix $\mathcal D\in\mathbb R^{M^p\times M^q}$; rank $r$; sample size $n_2$; dimensions $p,q$; smoothness $\gamma>0$
   
   \STATE
   \STATE {\scriptsize $\blacktriangleright$} \textbf{Adaptive trimming}
   \STATE $m \leftarrow \left\lceil (n_2/r)^{1/(2\gamma+p \vee q)} \right\rceil$
   \STATE $\mathcal B_y \leftarrow \Big\{\phi_{i_1}(x_1)\cdots \phi_{i_p}(x_p)\Big\}_{i_1,\ldots,i_p=1}^{m}$
   \STATE $\mathcal B_z \leftarrow \Big\{\phi_{\ell_1}(x_{p+1})\cdots \phi_{\ell_q}(x_{p+q})\Big\}_{\ell_1,\ldots,\ell_q=1}^{m}$
   
   \FOR{$\mu=1$ {\bfseries to} $M^p$}
      \IF{$\Phi_\mu \notin \mathcal B_y$}
         \STATE set the $\mu$-th row $\mathcal D_{\mu,*}\leftarrow 0$
      \ENDIF
   \ENDFOR
   
   \FOR{$\eta=1$ {\bfseries to} $M^q$}
      \IF{$\Psi_\eta \notin \mathcal B_z$}
         \STATE set the $\eta$-th column $ \mathcal D_{*,\eta}\leftarrow 0$
      \ENDIF
   \ENDFOR

   \STATE
   \STATE {\scriptsize $\blacktriangleright$} \textbf{Rank-$r$ projection and score}
   \STATE $\mathcal D[r] \leftarrow \svd(\mathcal D, r)$
   \STATE {\bfseries Output:} $\|\mathcal D[r]\|_{\mathrm F}$
\end{algorithmic}
\end{algorithm}

In \Cref{theorem:main}, we provide statistical guarantees for the overall false alarm  probability      and the detection delay of \Cref{algorithm:main}.

\begin{theorem}[False-alarm control and detection delay]
\label{theorem:main}
Let the univariate basis functions $\{\phi_k\}_{k =1}^\infty $ in \eqref{eq:basis-blocks} be the Legendre polynomials.
Assume the   PPP time series $\{X^{(i)}\}_{i = 1}^\infty $ satisfies
Definition~\ref{def:model_ppp_beta_mixing} on a compact domain $\mathbb X\subset\mathbb R^d$, with $d=p+q\ge2$.
Suppose the training length $N_{\train}$ is sufficiently large.

\smallskip
\noindent\textbf{(a) No change point.}
Under scenario \textbf{(M0)} with intensity $\lambda^*$.
Choosing a sufficient large threshold constant $\mathcal C_\alpha$ in \Cref{algorithm:main},
with probability at least $1-\alpha$, \Cref{algorithm:main} never raises an alarm over the entire time horizon.

\smallskip
\noindent\textbf{(b) Single change point.}
Under scenario \textbf{(M1)} with change point $\bb\ge N_{\train}$ and intensities
$\lambda^*,\lambda_a^*$.
Let $\{\sigma_k(\lambda^*-\lambda_a^*)\}_{k\ge1}$ be the singular values of $\lambda^*-\lambda_a^*$, as specified in
\eqref{eq:SVD difference function}, and choose the rank parameter $r$ in \Cref{algorithm:main} such that
\begin{equation}\label{eq:snr for tail main}
\sqrt{\sum_{k=r+1}^\infty \sigma_k^2(\lambda^*-\lambda_a^*)}
\;\le\;
\frac{\|\lambda^*-\lambda_a^*\|_{\lt}}{5}.
\end{equation}
Let $\kappa=\|\lambda^*-\lambda_a^*\|_{\lt}$ and define
\begin{equation}\label{eq:detection delay bound main}
\Delta \;=\; \Big\lceil C_{\lag}\, r \,\big(\log(\bb)/\kappa\big)^{\,2+(p\vee q)/\gamma}\Big\rceil,
\end{equation}
where $C_{\lag}$ is a sufficiently large constant depending only on $\mathcal C_\alpha$.
If the window size satisfies $W\ge \Delta$, then with probability at least $1-\bb^{-3}$,
\Cref{algorithm:main} raises an alarm within the time interval $(\bb,\bb+\Delta]$.
\end{theorem}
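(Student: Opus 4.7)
For both parts the core task is to control, uniformly over the sliding window index $k\in\{1,\dots,W\}$ and the current time $j$, the deviation of the random CUSUM matrix $\mathcal D$ from its population counterpart $\mathcal D^{*}:=\E[\mathcal D]$ after the restricted SVD operation. I would start by decomposing $\mathcal D=\mathcal D^{*}+\mathcal E$, where under \textbf{(M0)} the population term vanishes, and under \textbf{(M1)}, for the split point $k^{*}=W-\Delta+1$ at time $j=\bb+\Delta$, the population term equals $\tfrac{\bb}{\bb+\Delta-W}\,\mathcal M(\lambda^{*}-\lambda_a^{*})$ which is essentially $\mathcal M(\lambda^{*}-\lambda_a^{*})$ up to a prefactor close to $1$.

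The workhorse is a concentration inequality for $\mathcal E$. I would apply the Matrix Bernstein inequality for $\beta$-mixing PPPs (\Cref{thm:bernstein_beta_mixing_ppp_rect_nonstat}) after restricting to the trimmed basis $\mathcal B_y\times\mathcal B_z$ of size $m^{p}\times m^{q}$ with $m=\lceil(n_2/r)^{1/(2\gamma+(p\vee q))}\rceil$, which gives a high-probability operator-norm bound of order $\sqrt{m^{p\vee q}/n_2}\,\mathrm{polylog}$. Since the rank-$r$ truncation $\mathcal D[r]$ satisfies $\|\mathcal D[r]-\mathcal D^{*}[r]\|_{\F}\le 2\sqrt r\,\|\mathcal E\|_{\op}$ plus a trimming bias term, the choice of $m$ balances bias and stochastic variance to produce a Frobenius-norm error of order $(r/n_2)^{\gamma/(2\gamma+(p\vee q))}\,\mathrm{polylog}$, which exactly matches the threshold scale in \Cref{algorithm:main}.

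For part (a), since $\mathcal D^{*}=0$ under \textbf{(M0)}, I would take a union bound over all pairs $(j,k)$ with $j\ge N_{\train}+1$ and $k\in\{1,\dots,W\}$. The threshold $\mathcal C_\alpha(r/n_2)^{\gamma/(2\gamma+(p\vee q))}\log(j)$ carries an extra $\log(j)$ factor whose polynomial-in-$j$ inflation of the confidence level, combined with summability of $\sum_j j^{-c}$ for sufficiently large $\mathcal C_\alpha$ (driven by the Bernstein constants from the $\beta$-mixing tail), ensures that the overall false-alarm probability is at most $\alpha$. For part (b), two things must be shown: first, no alarm fires during $(N_{\train},\bb]$, which follows from the same argument as part (a) but now with failure probability absorbed into the $\bb^{-3}$ budget; and second, at time $j=\bb+\Delta$ the specific CUSUM with $k=k^{*}$ crosses the threshold. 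For the latter, $\|\mathcal D^{*}[r]\|_{\F}\ge\|\mathcal M(\lambda^{*}-\lambda_a^{*})[r]\|_{\F}$ up to constants, and the tail-projection bound (\Cref{lem:tail-projection}) together with the condition \eqref{eq:snr for tail main} on the singular-value tail gives $\|\mathcal M(\lambda^{*}-\lambda_a^{*})[r]\|_{\F}\gtrsim \kappa$. Plugging $n_2=\Delta$ into the threshold and solving the inequality $\kappa\gtrsim (r/\Delta)^{\gamma/(2\gamma+(p\vee q))}\log(\bb)$ yields precisely the expression for $\Delta$ in \eqref{eq:detection delay bound main}.

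\textbf{Main obstacle.} The principal difficulty is the uniform control of the restricted SVD error across all sliding-window lengths $n_2=W-k+1$ simultaneously. Since the adaptive trimming parameter $m=m(n_2)$ changes with $k$, each $\mathcal D[r]$ lives in a different low-dimensional basis, and the Matrix Bernstein bound must be applied with a carefully tuned variance proxy that uses the boundedness $\sup_i\|\lambda_i^{*}\|_\infty<\infty$ of the Poisson intensities together with the geometric $\beta$-mixing assumption. The coupling-blocking step inherent to mixing-based Bernstein inequalities inflates the effective sample size by a logarithmic factor from the mixing coefficient decay, and I would need to verify that this inflation, together with the basis-size-dependent Frobenius-to-operator conversion, does not disturb the exponent $\gamma/(2\gamma+(p\vee q))$. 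Finally, the tail-projection step bridging $\|\mathcal D^{*}[r]\|_{\F}$ and $\|\lambda^{*}-\lambda_a^{*}\|_{\lt}$ via \Cref{lem:tail-projection} must be combined with the approximation error \eqref{eq:density finite matrix error main} so that the low-rank and basis-truncation biases do not erode more than a constant fraction of $\kappa$, which is exactly why \eqref{eq:snr for tail main} is imposed.
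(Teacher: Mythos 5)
Your proposal follows essentially the same route as the paper's proof: decompose the CUSUM matrix into its mean plus noise, invoke the matrix Bernstein inequality for $\beta$-mixing PPPs (\Cref{thm:bernstein_beta_mixing_ppp_rect_nonstat}), convert the resulting operator-norm noise bound to a Frobenius bound after rank-$r$ SVD, balance the basis-truncation bias $m^{-\gamma}$ against the stochastic term $\sqrt{r\,m^{p\vee q}/n_2}$ to get the rate $(r/n_2)^{\gamma/(2\gamma+(p\vee q))}$, then union bound over $(j,k)$ using the extra $\log j$ headroom in the threshold for part (a), and compare signal to noise at $j=\bb+\Delta$, $k=W-\Delta+1$ for part (b). This is exactly the structure of the paper's \Cref{lemma:main deviation}, \Cref{lemma:type 1}, and \Cref{lemma:type 2}. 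One technical step needs correcting: the claimed perturbation bound $\|\mathcal D[r]-\mathcal D^{*}[r]\|_{\F}\le 2\sqrt r\,\|\mathcal E\|_{\op}$ (plus basis-trimming bias) is not valid in general, since the top-$r$ singular subspaces of $\mathcal D$ and $\mathcal D^{*}$ can move substantially even for small $\|\mathcal E\|_{\op}$ when the singular gap at index $r$ is small. The paper instead bounds $\|\mathcal D[r]-\mathcal D^{*}\|_{\F}$ — the distance from the truncated estimator to the \emph{untruncated} population target — via \Cref{thm:matrix}, which necessarily carries the additional tail term $\sqrt{\sum_{k>r}\sigma_k^2(\mathcal D^{*})}$. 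Condition \eqref{eq:snr for tail main} is precisely what caps this tail at $\kappa/5$, so the rank-$r$ projection cannot erase more than a constant fraction of the change signal. You do invoke \eqref{eq:snr for tail main} later when lower-bounding $\|\mathcal M(\lambda^{*}-\lambda_a^{*})[r]\|_{\F}$, so the required ingredients are all present; the fix is simply to state the SVD perturbation lemma in the form the paper uses (comparison to $\mathcal D^{*}$ rather than $\mathcal D^{*}[r]$, with the rank-$r$ tail term kept explicit).
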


In \Cref{theorem:main} \textbf{(a)}, $\mathcal{C}_\alpha$ is the threshold parameter that controls the overall false-alarm probability.
We discuss a data-driven strategy for calibrating $\mathcal{C}_\alpha$ in \Cref{remark:tuning parameters}.

\begin{remark}[Generality of \Cref{theorem:main} across coordinate partitions]
\label{remark:partition scope}
The condition~\eqref{eq:snr for tail main} in \Cref{theorem:main} follows from a commonly used functional regularity condition in the literature. In particular, we show in \Cref{lemma:eigen decay} in the Appendix that, if the singular values $\{\sigma_k(\lambda^*-\lambda^*_a)\}_{k=1}^\infty$ decay at a polynomial (with degree $>1/2$) or exponential rate, then~\eqref{eq:snr for tail main} holds with a constant rank $r$. Such functional singular-value decay assumptions are standard in the literature \citep[e.g.,][]{hall2006properties,raskutti2012minimax}, and they accommodate separable, additive, and finite-rank intensity classes as well as more general smooth interactions. The factor $5$ in the denominator of~\eqref{eq:snr for tail main} is chosen for convenience and we do not attempt to optimize it.

Importantly, \Cref{theorem:main} establishes the false-alarm and detection-delay guarantees for \emph{any} coordinate partition $[d]=\mathcal I_1\cup\mathcal I_2$ for which the matricized intensity satisfies~\eqref{eq:snr for tail main}; the proof does not rely on any specific property of the partition-selection rule. The empirical cross-group-correlation criterion in \Cref{remark:tuning parameters} is therefore a practical and numerically stable heuristic for locating such a partition rather than a structural requirement of the theorem; robustness to alternative rank choices and to random coordinate partitions is verified empirically in our numerical studies.
\end{remark}

\Cref{theorem:main} also implies that, when a change point is present, the detection delay is at most   $ O(\kappa^{-2-(p\vee q)/\gamma})$. By comparison, \citet{padilla2021optimal} and \citet{madrid2023change}   show that, for  nonparametric density change point detection in the offline setting, the error scales as $O\big(\kappa^{-2-d/\gamma}\big)$, where $\kappa$ is the size of the change in  $\lt $-norm, $\gamma$ is the degree of smoothness, and $d$ is the ambient dimension of the density function. Although we study Poisson intensity changes for time series data, our detection delay bound is strictly better in order. This is because   $p+q=d$, and thus   $p\vee q<d$.   In the difficult regime where the change size is small, $\kappa\to  0$, we have
$$  \kappa^{-2-(p\vee q)/\gamma} \ll \kappa^{-2-d/\gamma}. $$

\section{Numerical  Studies}
\label{section:simulation}
We designed simulated experiments to evaluate the performance of \Cref{algorithm:main}, which we refer to as the \textbf{Matrix detector}, against two benchmarks. First, the \textbf{MMD detector} follows \citet{li2015m}: it represents the data in each window via its empirical kernel mean embedding and computes a blockwise maximum mean discrepancy (MMD) statistic between the point samples from the pre- and post-change segments. Second, the \textbf{KIE detector} is adapted from the density change point detector of \citet{madrid2023change}: it estimates the pre- and post-change intensities using a kernel intensity estimator (KIE) and then forms a CUSUM statistic based on the discrepancy between the two estimated intensities, measured in the $\lt$-norm. A Python implementation of our method is available at \href{https://github.com/HaotianXu/Online-Change-Point-Detection-for-Multivariate-PPP}{the GitHub repository}.
\subsection{Selection of Tuning Parameters}
 \label{remark:tuning parameters}
\noindent  \textbf{MMD detector.} For the MMD detector, we adopt the default tuning-parameter selection in \citet{li2015m} and use a Gaussian kernel throughout the numerical experiments.

\noindent  \textbf{KIE detector.} We follow the default choice in \citet{madrid2023change}, and select the kernel bandwidth as well as the detection threshold using cross-validation on the training data.

\noindent  \textbf{Matrix detector.}
We set the smoothness parameter $\gamma=2$, meaning the intensity functions are at least twice differentiable, which is standard in the nonparametric literature \citep[e.g.,][]{wasserman2006all}.
We partition the $d$ coordinates into two groups using a correlation-based split criterion.
Let $\corr(i,j)$ denote the empirical  correlation between the $i$-th and $j$-th coordinate
of     all points   in the training data.
For any partition $(A,B)$ of $\{1,\ldots,d\}$, define
\[
\Delta(A,B) \;=\; \frac{1}{|A||B|}\sum_{i\in A}\sum_{j\in B} \big | \corr(i,j) \big| .
\]
We select the coordinate partition used in \Cref{algorithm:main} by searching over all nontrivial partitions $(A,B)$ and choosing
\[
(A^*,B^*)\in \argmin_{(A,B)} \; \Delta(A,B),
\]
and then set $p=|A^*|$ and $q=|B^*|$ such that  $p+q=d$.

Given the coordinate partition  and hence $p$ and $q$, we next select the rank parameter $r$ using sample splitting and a goodness-of-fit  criterion.
Without loss of generality, assume $N_{\train}$ is even.
For any candidate rank $r$, define
\begin{align*}
    \widehat{\mathcal W}_{1}
&= 
\frac{2}{N_{\train}}\sum_{i=1}^{N_{\train}/2}\widehat{\mathcal M}^{(i)}
 \text{ and} 
 \\
\widehat{\mathcal W}_{2}
 &= 
\frac{2}{N_{\train}}\sum_{i=N_{\train}/2+1}^{N_{\train}}\widehat{\mathcal M}^{(i)},
\end{align*}

where $\widehat{\mathcal M}^{(i)}$ is defined in \eqref{eq:A-i}.
Let $\svd(\mathcal W,r)$ denote the rank-$r$ truncated SVD approximation of $\mathcal W$.
We choose $r$ by
\[
\widehat r \in \argmin_{ r\in \{1,\ldots,\min(M^p,M^q)\}}\;
\Big\|\svd(\widehat{\mathcal W}_{1},r)-\svd(\widehat{\mathcal W}_{2},r)\Big\|_{\mathrm F}.
\]

 The last parameter to select is the threshold $\mathcal C_\alpha$.
 {We calibrate $\mathcal C_\alpha$ using a block-permutation procedure on the training data. Fix a block length $L_{\mathrm{block}}$ and let $K_{\mathrm{block}} = \lfloor N_{\train}/L_{\mathrm{block}} \rfloor$. We partition the first $K_{\mathrm{block}} L_{\mathrm{block}}$ training windows into consecutive blocks
\[
\mathcal B_\ell
=
\{(\ell-1)L_{\mathrm{block}}+1,\ldots,\ell L_{\mathrm{block}}\},
\quad \ell=1,\ldots,K_{\mathrm{block}}.
\]
For each repetition $b=1,\ldots,500$, we randomly permute the $K_{\mathrm{block}}$ blocks and split the permuted blocks into two groups of sizes $\lfloor K_{\mathrm{block}}/2\rfloor$ and $K_{\mathrm{block}}-\lfloor K_{\mathrm{block}}/2\rfloor$ (equal when $K_{\mathrm{block}}$ is even, and differing by one block otherwise).
Let $\mathcal J_{1,b}$ and $\mathcal J_{2,b}$ denote the indices of the first and second block groups, and define the corresponding time-index sets
\[
\mathcal I_{1,b}=\bigcup_{\ell\in \mathcal J_{1,b}}\mathcal B_\ell,
\qquad
\mathcal I_{2,b}=\bigcup_{\ell\in \mathcal J_{2,b}}\mathcal B_\ell.
\]
We then compute the corresponding two-sample CUSUM matrix
\[
\widehat{\mathcal D}_b
=
\frac{1}{|\mathcal I_{1,b}|}\sum_{i\in \mathcal I_{1,b}}\widehat{\mathcal M}^{(i)}
-
\frac{1}{|\mathcal I_{2,b}|}\sum_{i\in \mathcal I_{2,b}}\widehat{\mathcal M}^{(i)},
\]
and record the test statistic $\|\widehat{\mathcal D}_b\|_{\mathrm F}$.
Finally, we set $\mathcal C_\alpha$ to be the $(1-\alpha)$-quantile of
\[
\left\{
\frac{\|\widehat{\mathcal D}_b\|_{\mathrm F}}
{\left(\widehat r / (|\mathcal I_{1,b}|\wedge |\mathcal I_{2,b}|)\right)^{\gamma/(2\gamma+p\vee q)}\log(N_{\train})}
\right\}_{b=1}^{500},
\]
and use it as the input threshold in \Cref{algorithm:main}.}

\subsection{3D Intensity with Temporally Dependent Latent Variables}
We generate Poisson point process    time series data in $d=3$ dimensions with a change point at time $\bb=1200$.
We sample each process window using the thinning algorithm \citep{lewis1979simulation}. Specifically, for $t\le \bb$,
the intensity function is
\[
\lambda_t^*(x)=   z^+ _{t,1}  \prod_{j=1}^3\{\sin(x_j)+1\}
+ z^+_{t,2}\prod_{j=1}^3\{\cos(x_j)+1\},
\]
while for $t>\bb$ the intensity function is
\[
\lambda_t^*(x)= z^+_{t,1}\prod_{j=1}^3 \exp(-x_j^2)
+ z^+_{t,2}\prod_{j=1}^3 x_j .
\]
Here $x\in[0,1]^3$, $z_t^+ =\max\{z_t,0 \}$, and $z_t=(z_{t,1},z_{t,2})\in\mathbb R^2$ follows an autoregressive model
\[
z_{t+1}=
\begin{bmatrix}0.5&0.1\\0.1&0.5\end{bmatrix}z_t+\epsilon_t,
\qquad
\epsilon_t\overset{i.i.d.}{\sim}N([3,1],I_2),
\]
with $I_2$ the two-dimensional identity matrix.  
Each run consists of $N_{\train}=1000$ pre-change samples and $N_{\text{total}}=1500$ samples in total.
All approaches are initialized on the same training data and evaluated over 100 Monte Carlo replications.
In \Cref{tab:sim_3d_results}, we summarize the performance of the three methods using the default tuning-parameter
choices described in \Cref{remark:tuning parameters}.
 
 \begin{table}[tb]
\centering
\caption{Simulation results in the 3D setting over 100 replications. False Alarm denotes an alarm raised at or before $\bb$, Correct Detection denotes an alarm raised in $(\bb, N_{\text{total}}]$, and No Alarm denotes no alarm by $N_{\text{total}}$. The average detection delay (ADD) and the corresponding standard deviation (SD) are reported conditional on correct detection.}
\label{tab:sim_3d_results}
\begin{tabular}{lccc}
\toprule
 & Matrix & MMD & KIE \\
\midrule
False Alarm       & 6\%  & 5\%  & 6\% \\
Correct Detection & 94\% & 93\% & 89\% \\
No Alarm          & 0\%  & 2\%  & 5\% \\
ADD (SD)          & 9.19 (3.80) & 40.39 (29.79) & 43.98 (34.63) \\
\bottomrule
\end{tabular}
\end{table}


To further understand the performance of our proposed method, we evaluate  $15$ threshold values per method, ranging from  low detection sensitivity  to high. Performance was summarized using two metrics: (i) false alarm probability
(FAP) and (ii) average detection delay (ADD). If no alarm occurs by $N_{\text{total}}$, we set the delay to
$N_{\text{total}}-\bb$.

\begin{figure}[ht]
  \centering
  \includegraphics[scale=0.37]{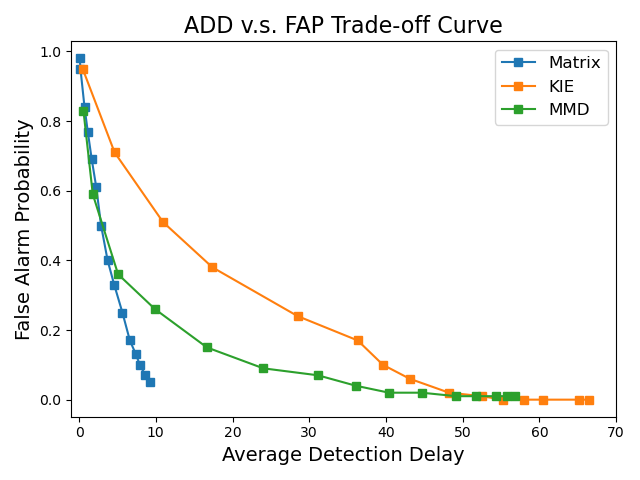}
  \caption{FAP vs.\ ADD comparison among three detectors under the 3D simulation setting.}
  \label{fig:simu_tradeoff}
\end{figure}

Figure~\ref{fig:simu_tradeoff} presents the trade-off between empirical FAP and ADD. Our method (\textbf{Matrix}) generally achieves a smaller average detection delay than the other two methods at comparable false alarm probabilities.

  \subsection{4D Intensity with Autoregressive Scale}
\label{section:4d}
We generate Poisson point process time series data in $d=4$ dimensions with a change point at time $\bb=1200$.
The intensity function is
\[
\lambda_t^*(x)= y_t^+\bigg\{\prod_{j=1}^4 2x_j^3 + \prod_{j=1}^4 2\exp(-x_j)\bigg\},
\qquad x\in[0,1]^4,
\]
where the scale sequence $\{y_t\}_{t= 1}^\infty $ follow the autregressive model
\[
y_{t+1}= 
\begin{cases}
    0.1\cdot |X^{(t)}|+8+\epsilon_t  &\text{for } t\le \bb\\
    0.1\cdot |X^{(t)}|+4+\epsilon_t \ &\text{for } t>\bb
\end{cases},
\]
with $\epsilon_t \overset{i.i.d.}{\sim} N(0,1)$, and $y^+_t =\max\{ y_t, 0\} $. Here $X^{(t)}$ denotes the Poisson point process observed at time $t$
with intensity $\lambda_t^*$, and $|X^{(t)}|$ is its cardinality.
Each run consists of $N_{\train}=1000$ pre-change samples and $N_{\text{total}}=1500$ samples in total.
All approaches were initialized on the same training data and evaluated over 100 Monte Carlo replications.
In \Cref{tab:sim_4d_results}, we summarize the numerical performance of the three methods based on the default
tuning-parameter choices described in \Cref{remark:tuning parameters}.
 \begin{table}[ht]
\centering
\caption{Simulation results in the 4D setting over 100 Monte Carlo replications. The average detection delay (ADD) and the corresponding standard deviation (SD) are reported conditional on correct detection.}
\label{tab:sim_4d_results}
\begin{tabular}{lccc}
\toprule
 & Matrix & MMD & KIE \\
\midrule
False Alarm       & 3\%  & 4\%  & 3\% \\
Correct Detection & 97\% & 94\% & 92\% \\
No Alarm          & 0\%  & 2\%  & 5\% \\
ADD (SD)          & 13.44 (5.21)  & 20.83 (10.90)   & 31.50 (21.88)  
\\
\bottomrule
\end{tabular}
\end{table}

\begin{figure}[ht]
  \centering
  \includegraphics[scale=0.4]{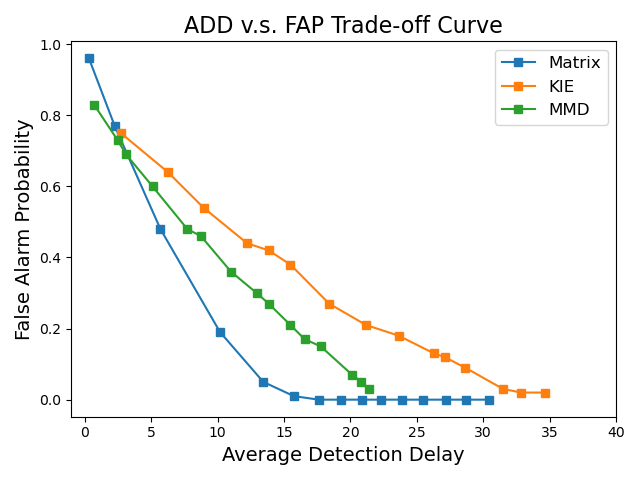}
  \caption{FAP vs.\ ADD comparison among three detectors under the 4
  D simulation setting.}
  \label{fig:simu_tradeoff_4d}
\end{figure}

Figure~\ref{fig:simu_tradeoff_4d} presents the trade-off between empirical false alarm probability (FAP) and
average detection delay (ADD). Our method achieves substantially lower detection delays than
the other two methods at comparable false alarm probabilities.

\subsection{A Real Data Example}
Poisson point processes are widely used to model earthquake activity \cite{anagnos1988review,cornell1968engineering,wang2006understanding}.
We test our online change detection method on an earthquake dataset from the state of Oklahoma covering January 2000 to December 2018,
obtained from the \cite{usgs_comcat_2025}.
The dataset records each earthquake's longitude, latitude, depth, and magnitude, and we aggregate events by week so that each month forms one
Poisson point process observation.
We use data from January 2000 to December 2007 for training.

\begin{figure*}[t]
  \centering
  \includegraphics[width=\textwidth,trim=0 10 0 10,clip]{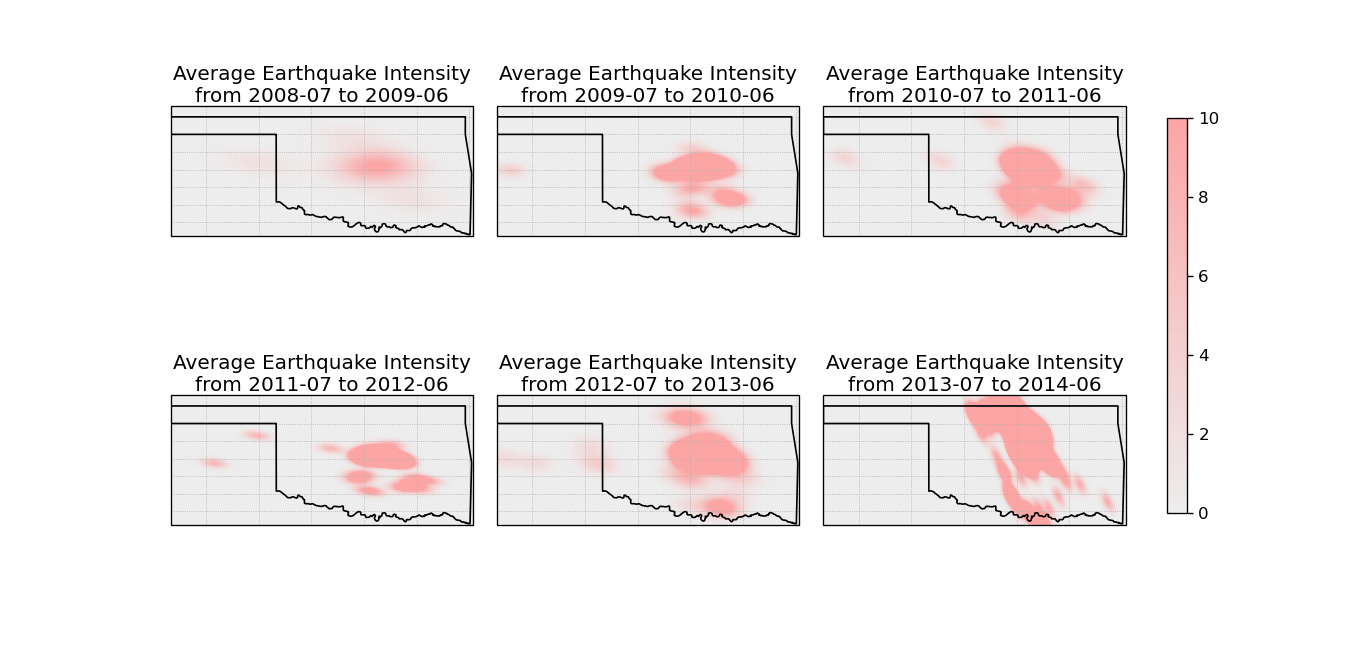}
  \caption{{Estimated yearly average earthquake intensity in Oklahoma over six consecutive 12-month windows from July 2008 through June 2014. The change point detected by \Cref{algorithm:main} is June 2009, which lies between the first two panels. The panels reveal a clear and progressive increase in seismic activity in the years following the detected change.}}
  \label{fig:ek}
\end{figure*}
 In this real-data experiment, we select all tuning parameters according to the procedure described in
\Cref{remark:tuning parameters}. Our online procedure raises an alarm in   June 2009.
This timing is consistent with reports by U.S. national agencies indicating that earthquake activity in Oklahoma,
particularly for events with magnitude below 3, increased rapidly beginning around 2009 and has been linked to
wastewater disposal and injection practices; see \cite{usgs_oklahoma_nodate}.
{Figure~\ref{fig:ek} visualizes this trend through a time series of estimated yearly average intensity surfaces over six consecutive 12-month windows; the detected change point lies between the first two windows, and the seismic-activity hotspots expand markedly in the windows that follow.} In contrast, the \textbf{KIE} detector raises an alarm in July 2007,
while the \textbf{MMD} detector  does not raise an alarm by December 2018.

\section{Conclusion}

We studied online change point detection for multivariate inhomogeneous Poisson point processes (PPP) time series under the $\beta$-mixing  temporal dependence assumption. Our approach maps each PPP realization to a finite-dimensional intensity matrix via an orthonormal basis expansion, and then leverages approximate low-rank structure through a restricted SVD. This representation yields a single-pass detection algorithm with constant per-observation cost and total runtime linear in the number of windows.  Theoretically, we develop a framework  jointly capture the approximation bias and stochastic variance and provides explicit guarantees for both false-alarm control and detection delay. Empirically, our method demonstrates strong detection performance and robustness across a range of regimes, while maintaining computational efficiency that scales linearly in the time series length.

\newpage

\bibliographystyle{plainnat}
\bibliography{icml_paper}

\newpage
\appendix
\section{Proof of \Cref{theorem:main}}

\noindent\textbf{Setup and notation.}
Let $\{X^{(i)}\}_{i\ge1}$ be a sequence of PPPs satisfying
\Cref{def:model_ppp_beta_mixing} on a compact domain $\mathbb X\subset\mathbb R^d$.
Let $\{\phi_k\}_{k\ge1}$ be an orthonormal basis of $\lt(\Omega)$ such that
$\|\phi_k\|_\infty \le C_\phi$ for all $k$, where $0<C_\phi<\infty$ is an absolute constant.
Fix integers $p,q\ge1$ with
\[
p+q=d
\qquad\text{and}\qquad
p\ge q.
\]
For $x=(x_1,\ldots,x_d)\in \mathbb X = \Omega^{\otimes d}$, write the coordinate split
\[
x=(y,z),
\qquad
y=(x_1,\ldots,x_p)\in\Omega^{\otimes p},
\qquad
z=(x_{p+1},\ldots,x_{p+q})\in\Omega^{\otimes q}.
\]
For an integer $m\ge1$, define the tensor-product bases
\begin{align*}
\{\Phi_\mu(y)\}_{\mu=1}^{m^p}
&= \Big\{\phi_{\mu_1}(x_1)\cdots\phi_{\mu_p}(x_p)\Big\}_{\mu_1,\ldots,\mu_p=1}^m,\\
\{\Psi_\eta(z)\}_{\eta=1}^{m^q}
&= \Big\{\phi_{\ell_1}(x_{p+1})\cdots\phi_{\ell_q}(x_{p+q})\Big\}_{\ell_1,\ldots,\ell_q=1}^m.
\end{align*}

For each window $i$ and basis size $m$, define the empirical intensity matrix
$\widehat A^{(i),m}\in\mathbb R^{m^p\times m^q}$ entrywise by
\begin{equation*}
\widehat A^{(i),m}_{(\mu,\eta)}
=
\sum_{x=(y,z)\in X^{(i)}} \Phi_\mu(y)\,\Psi_\eta(z).
\end{equation*}
Given integers $1\le t < n$, define the (two-sample) CUSUM matrix
$\widehat A^{m}_{n,t}\in\mathbb R^{m^p\times m^q}$ by
\begin{equation*}
\widehat A^{m}_{n,t}
=
\frac{1}{t}\sum_{i=1}^{t}\widehat A^{(i),m}
\;-\;
\frac{1}{n-t}\sum_{i=t+1}^{n}\widehat A^{(i),m},
\qquad
\widehat A^{m}_{n,t}[r] = \svd(\widehat A^{m}_{n,t},r).
\end{equation*}

In our procedure, the basis size is chosen adaptively as a function of $(n,t)$:
\begin{equation}
\label{eq:size parameter}
m_{n,t}
=
\Big\lceil \Big(\frac{n-t}{r}\Big)^{1/(2\gamma+p)} \Big\rceil.
\end{equation}
For brevity, we write
\begin{equation}
\label{eq:cusum poisson general}
\widehat A_{n,t} = \widehat A^{m_{n,t}}_{n,t},
\qquad
\widehat A_{n,t}[r] = \svd(\widehat A_{n,t},r).
\end{equation}
Finally, define the threshold
\begin{equation}
\label{eq:threshold parameter}
\tau_{n,t}
=
C_\alpha
\Big(\frac{r}{n-t}\Big)^{\gamma/(2\gamma+p)}\log n,
\end{equation}
where $C_\alpha>0$ is a sufficiently large constant depending only on $\alpha$.

\begin{proof}[Proof of \Cref{theorem:main}]
      \Cref{theorem:main} directly follows  \Cref{lemma:type 1} and \Cref{lemma:type 2}.
\end{proof}

\begin{lemma} \label{lemma:type 1}
    Let $m_{n,t}$, $\widehat A_{ n,t }[r]$ and $\tau_{n,t }$ be defined in \eqref{eq:size parameter}, \eqref{eq:cusum poisson general} and \eqref{eq:threshold parameter}, respectively. Suppose $\{ X^{(i)} \}_{i=1}^{\infty }$ are sampled from \textbf{(M0)} with the common intensity function $\lambda^*$. Then 
    $$ \p \big( \|\widehat A_{ n,t }  [r] \|_\F \ge \tau_{n,t} \text{ for all } 1\le t < n <\infty\big) \le 1-\alpha  .$$
\end{lemma}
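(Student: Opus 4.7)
The plan is to reduce the overall false-alarm event to concentration statements at each pair $(n,t)$, apply the appendix's Matrix Bernstein inequality for geometrically $\beta$-mixing PPP time series, and union-bound over $(n,t)$. Under \textbf{(M0)}, Campbell's theorem yields that for every basis size $m$, all matrices $\widehat A^{(i),m}$ share a common expectation (the $m^{p}\times m^{q}$ truncated basis matrix of $\lambda^*$), so both sample averages appearing in $\widehat A_{n,t}$ have the same mean and hence $\E[\widehat A_{n,t}]=0$. Because the restricted SVD returns a matrix of rank at most $r$,
\[
\|\widehat A_{n,t}[r]\|_{\F}\;\le\;\sqrt{r}\,\|\widehat A_{n,t}\|_{\op},
\]
so the task reduces to an operator-norm deviation bound for $\widehat A_{n,t}$.

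For each fixed $(n,t)$, I would apply the appendix's Matrix Bernstein inequality separately to $t^{-1}\sum_{i=1}^{t}\widehat A^{(i),m_{n,t}}$ and $(n-t)^{-1}\sum_{i=t+1}^{n}\widehat A^{(i),m_{n,t}}$, viewing each average as a centered sum of basis statistics of a $\beta$-mixing PPP sequence with uniformly bounded basis functions ($\|\phi_k\|_\infty\le C_\phi$) and uniformly bounded intensity ($\|\lambda^*\|_\infty<\infty$). After restriction to the active $m_{n,t}^{p}\times m_{n,t}^{q}$ block, the Bernstein variance proxy is $O(\|\lambda^*\|_\infty\,m_{n,t}^{p})$, so the inequality yields, with probability at least $1-\delta$,
\[
\|\widehat A_{n,t}\|_{\op}\;\lesssim\;\sqrt{\frac{m_{n,t}^{p}\log(m_{n,t}^{d}/\delta)}{n-t}}+\text{(polylog lower order)},
\]
where the lower-order term carries the $\beta$-mixing overhead from Definition~\ref{def:model_ppp_beta_mixing}\,\textbf{(B)}.

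Substituting the adaptive basis size $m_{n,t}=\lceil((n-t)/r)^{1/(2\gamma+p)}\rceil$ from \eqref{eq:size parameter} and setting $\delta=n^{-\mathcal C_\alpha^2/2}$ gives $\log m_{n,t}=O(\log n)$ together with the identity
\[
\sqrt{r}\cdot\sqrt{\frac{m_{n,t}^{p}}{n-t}}\;=\;\Big(\frac{r}{n-t}\Big)^{\gamma/(2\gamma+p)},
\]
so the right-hand side of the concentration bound is at most $\tau_{n,t}/2$ once $\mathcal C_\alpha$ is taken sufficiently large in terms of $\|\lambda^*\|_\infty$ and the mixing constant. The per-pair failure probability is then at most $n^{-\mathcal C_\alpha^2/2}$, and a union bound gives
\[
\sum_{n=2}^{\infty}\sum_{t=1}^{n-1}n^{-\mathcal C_\alpha^2/2}\;\le\;\sum_{n=2}^{\infty}n^{\,1-\mathcal C_\alpha^2/2}\;\le\;\alpha,
\]
for $\mathcal C_\alpha$ sufficiently large in terms of $\alpha$, yielding the required overall false-alarm control.

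The hard part will be engineering the Matrix Bernstein bound so that the variance proxy scales like the block dimension $m_{n,t}^{p}$ rather than the full matrix dimension $m_{n,t}^{d}$, since only this scaling combines with the $\sqrt{r}$ from the restricted SVD to produce the target rate $(r/(n-t))^{\gamma/(2\gamma+p)}$. A secondary difficulty is that the $\beta$-mixing assumption forces a blocking argument whose polylogarithmic overhead must be absorbed by the single $\log n$ factor in $\tau_{n,t}$ uniformly across all $(n,t)$, including the small-$(n-t)$ regime where $m_{n,t}=1$. Both issues come down to careful constant tracking in the appendix's Matrix Bernstein inequality.
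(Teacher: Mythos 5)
Your proposal is correct and follows essentially the same approach as the paper: apply the appendix's Matrix Bernstein inequality for $\beta$-mixing PPP windows to the (zero-mean under \textbf{(M0)}) CUSUM matrix at each scale $m_{n,t}$, balance bias and variance through the adaptive basis size, and union-bound over all $(n,t)$ with the polynomial tail exactly as you describe. The one minor stylistic difference is that the paper factors this through the shared deviation bound of Lemma~\ref{lemma:main deviation} (which covers both null and alternative, via Theorem~\ref{thm:matrix} and the approximation error \eqref{eq:approximation errors}), whereas you shortcut it under the null by noting $\E[\widehat A_{n,t}]=0$ and using the direct inequality $\|\widehat A_{n,t}[r]\|_{\F}\le\sqrt{r}\,\|\widehat A_{n,t}\|_{\op}$; both routes produce the same $\big(r/(n-t)\big)^{\gamma/(2\gamma+p)}$ rate, and your constant-tracking remarks (the $\log^2$ factor from the mixing overhead vs.\ the single $\log n$ in $\tau_{n,t}$) in fact flag a small slack that is also present, unaddressed, in the paper's own proof.
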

\begin{proof}
Since there is no change points, we have $\delta^* = \lambda^* - \lambda_a^* = 0$. 
We apply  \Cref{lemma:main deviation} to any give $t, n$ and have that with probability at most $2\epsilon n^{-3}$,
$$ \| \widehat A _{n,t}[r] \|_\F \ge C_2  m_{n,t}^{-\gamma  } + C_\epsilon \sqrt r\{ \log^2 (m_{n,t})+\log^2(n)  \} \sqrt {\frac{m_{n,t}^{p}}{n-t} }  , $$
    where we have used  $\sigma_r(\delta^*)=0$ when $\delta^*=0$. Since 
    $m_{n,t}=  \lceil \big( \frac{n-t }{r}  )^{1/(2\gamma +p)} \rceil $, it follows that
    $$ \p\bigg( \| \widehat A_{n,t}  [r] \|_\F \ge C_\epsilon^{\prime}  \big( \frac{r}{n-t}   \big )^{ \gamma /(2\gamma +p)} \log^2(n)   \bigg)  \le 2\epsilon n^{-3}.$$
    By a  union bound,   for any  $  n\in \mathbb Z^+$ it holds that 
     $$ \p\bigg( \| \widehat A_{n,t} [r]\|_\F \ge C_\epsilon'  \big( \frac{r}{n-t}\big )^{ \gamma /(2\gamma +p)}  \log(n)   \text{ for all } 1\le t < n \bigg)  \le 2\epsilon n^{-2} .$$
     Since 
     $\sum_{n=1}^\infty n^{-2} = \pi^2/6,$ if follows that if 
     $\epsilon= 3\alpha  /\pi^2$,  then by another union bound, 
     $$ \p\bigg( \| \widehat A _{n,t}[r]\|_\F \ge C_\alpha    \big( \frac{r}{n-t}\big )^{ \gamma /(2\gamma +p)}   \log(n)   \text{ for all } 1\le t < n <\infty  \bigg)  \le \alpha .$$
     The desired result follows from the choice of $\tau_{n,t}$ in \eqref{eq:threshold parameter}. 
\end{proof}
\begin{lemma}\label{lemma:type 2}
    Let  $\tau_{n,t }$ be defined in \eqref{eq:threshold parameter}. Suppose $\{ X^{(i)} \}_{i=1}^{\bb  } \cup \{ X^{(i)} \}_{i=\bb+1}^{\infty }$ are  sampled from \textbf{(M1)} with true change point $\bb$, pre-change intensity function $\lambda^*$ and post-change intensity $\lambda^*_a$. 
    Suppose   that $p\ge q$ and that 
    \begin{align}
        \label{eq:snr for tail}\sqrt{ \sum_{ k=r+1}^\infty \sigma_k^2 (\lambda^* -\lambda^*_a)}  \le 
    \frac{\| \lambda^* -\lambda^*_a \|_\lt}{5}. 
    \end{align}   
 Let $ \kappa = \|\lambda^* -\lambda^*_a \|_\lt $.    Suppose  in addition that 
 \begin{align}
     \label{eq:detection delay bound}\Delta \ge  C_{ \lag} r ( \log(\bb)/\kappa)^ {2+p/\gamma } 
 \end{align}     
    where $C$ is a sufficiently large constant only depending on $C_\alpha $ in \eqref{eq:threshold parameter}. 
    Then  
    $$ \p \big( \|\widehat A_{ \Delta+\bb,\bb  }[r] \|_\F \ge \tau_{\Delta+\bb,\bb}  \big) \ge 1-\bb^{-3},$$
    where $\widehat A_{\Delta+\bb ,\bb}[r]$ is defined according  to \eqref{eq:cusum poisson general}.
\end{lemma}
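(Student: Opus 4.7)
The plan is to lower bound $\|\widehat A_{\Delta+\bb,\bb}[r]\|_{\F}$ by a reverse triangle argument that separates the signal (coming from the genuine change $\delta^* = \lambda^* - \lambda_a^*$) from the stochastic and approximation error. Writing $n = \Delta + \bb$, $t = \bb$, so $n - t = \Delta$ and $m := m_{n,t} = \lceil (\Delta/r)^{1/(2\gamma+p)}\rceil$, the expected CUSUM is $\mathbb E[\widehat A_{n,t}] = \mathcal M^m(\delta^*)$ by \eqref{eq:expected value of matrix at time t} and the linearity of the basis embedding. Letting $\mathcal M^m(\delta^*)[r]$ denote its best rank-$r$ truncation, the reverse triangle inequality gives
\[
\|\widehat A_{n,t}[r]\|_{\F}\ \ge\ \|\mathcal M^m(\delta^*)[r]\|_{\F}\ -\ \|\widehat A_{n,t}[r] - \mathcal M^m(\delta^*)[r]\|_{\F}.
\]

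The first term is a deterministic signal bound. Because $\|\delta^*\|_{\lt}^2 = \sum_{k\ge1}\sigma_k^2(\delta^*) = \kappa^2$ and the tail condition \eqref{eq:snr for tail} gives $\sum_{k>r}\sigma_k^2(\delta^*)\le \kappa^2/25$, we have $\sum_{k\le r}\sigma_k^2(\delta^*)\ge 24\kappa^2/25$. I would then transfer this lower bound from the function $\delta^*$ to the matrix $\mathcal M^m(\delta^*)$ via \Cref{lem:tail-projection} in the appendix: the truncation error $\|\delta^* - \lambda^*_M\|_{\lt}$ is controlled by $C\|\delta^*\|_{W^{2,\gamma}} m^{-\gamma}$ (by \eqref{eq:density finite matrix error main}), and since the matrix singular values approximate the functional singular values up to an $O(m^{-\gamma})$ bias, we conclude $\|\mathcal M^m(\delta^*)[r]\|_{\F}\ge \tfrac{4\kappa}{5} - C' m^{-\gamma}$.

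For the stochastic term, I would invoke \Cref{lemma:main deviation}, exactly the bound already used in the proof of \Cref{lemma:type 1}, which says with probability at least $1 - 2\epsilon n^{-3}$,
\[
\|\widehat A_{n,t}[r] - \mathcal M^m(\delta^*)[r]\|_{\F}\ \le\ C_2\, m^{-\gamma} + C_\epsilon \sqrt r\{\log^2 m + \log^2 n\}\sqrt{\tfrac{m^p}{n-t}}.
\]
With the choice $m = \lceil (\Delta/r)^{1/(2\gamma+p)}\rceil$, both the bias $m^{-\gamma}$ and variance $\sqrt{rm^p/\Delta}$ terms collapse to the rate $(r/\Delta)^{\gamma/(2\gamma+p)}$, up to polylog factors. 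Since $n\le 2\bb$ when $\Delta\le \bb$ and one can otherwise reduce to that regime, picking $\epsilon$ a small absolute constant yields a failure probability of at most $\bb^{-3}$.

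Putting the two bounds together,
\[
\|\widehat A_{n,t}[r]\|_{\F}\ \ge\ \tfrac{4\kappa}{5} - C''\bigl(r/\Delta\bigr)^{\gamma/(2\gamma+p)}\log^2(n).
\]
Now comes the calibration step: the lag hypothesis \eqref{eq:detection delay bound} is equivalent, after rearrangement, to $(r/\Delta)^{\gamma/(2\gamma+p)}\log(\bb)\le \kappa/C_{\lag}^{\gamma/(2\gamma+p)}$. Choosing $C_{\lag}$ a sufficiently large multiple of $C_\alpha$ (from \eqref{eq:threshold parameter}) and of the absolute constant $C''$, the right-hand side above exceeds $2\tau_{n,t}$, in particular exceeds $\tau_{n,t} = C_\alpha(r/\Delta)^{\gamma/(2\gamma+p)}\log(n)$, so the alarm fires. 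The main obstacle is the second paragraph: controlling the rank-$r$ truncated signal $\|\mathcal M^m(\delta^*)[r]\|_{\F}$ in terms of the functional singular values of $\delta^*$, since the basis truncation to an $m^p\times m^q$ matrix perturbs the entire singular value spectrum. This is where the $p\ge q$ asymmetry and \Cref{lem:tail-projection} become essential, because one must ensure that the leading $r$ singular values of the matrix capture enough energy of $\delta^*$ despite the $m^{-\gamma}$ approximation error being spread across all singular components.
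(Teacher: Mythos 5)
Your high-level strategy (signal-vs-noise split, calibration via the lag condition) matches the paper's, and your signal lower bound and the algebra turning \eqref{eq:detection delay bound} into a bound on $(r/\Delta)^{\gamma/(2\gamma+p)}\log(\bb)$ are both correct. However, there is a concrete gap in the stochastic term: you claim that \Cref{lemma:main deviation} bounds $\|\widehat A_{n,t}[r] - \mathcal M^m(\delta^*)[r]\|_{\F}$ by bias plus variance with no tail contribution, but that is not what the lemma says. Its conclusion is a bound on $\big|\,\|\widehat A[r]\|_\F - \|\delta^*\|_\lt\big|$, and the right-hand side explicitly contains the tail term $4\sqrt{\sum_{k>r}\sigma_k^2(\delta^*)}$. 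Moreover, the quantity you target is not controllable without that tail term: the rank-$r$ truncation $[\,\cdot\,][r]$ is not a Lipschitz map in Frobenius norm, and what \Cref{thm:matrix} actually gives is a bound on $\|\widehat A[r] - \mathcal M^m(\delta^*)\|_\F$ (note the second matrix is \emph{not} truncated) that again carries a $\sqrt{\sum_{k>r}\sigma_k^2}$ term. If you then try to pass to $\|\widehat A[r] - \mathcal M^m(\delta^*)[r]\|_\F$ via triangle inequality, the tail term reappears a second time. So your stochastic bound, as written, is false, and your final lower bound would be too optimistic.

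The fix is also the cleaner route and is what the paper does: skip the detour through $\mathcal M^m(\delta^*)[r]$ entirely, and apply \Cref{lemma:main deviation} directly. The lemma gives, with probability at least $1-2\epsilon\bb^{-3}$,
\[
\big|\,\|\widehat A_{\Delta+\bb,\bb}[r]\|_\F - \kappa\,\big|
\le C_2 m^{-\gamma} + C_\epsilon \sqrt r\,\{\log(\bb)+\log(m)\}\sqrt{m^p/\Delta}
+ 4\sqrt{\textstyle\sum_{k>r}\sigma_k^2(\delta^*)},
\]
and then \eqref{eq:snr for tail} makes the tail term at most $4\kappa/5$, so $\|\widehat A_{\Delta+\bb,\bb}[r]\|_\F \ge \kappa/5 - (\text{bias}+\text{variance})$; with the choice of $m_{\Delta+\bb,\bb}$ the bias and variance collapse to $C_3\log(\bb)(r/\Delta)^{\gamma/(2\gamma+p)}$, and \eqref{eq:detection delay bound} (for $C_{\lag}$ large) makes this $<\kappa/30$, giving $\|\widehat A_{\Delta+\bb,\bb}[r]\|_\F > \kappa/6$; the same lag condition also forces $\tau_{\Delta+\bb,\bb}\le\kappa/6$, closing the argument. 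Your calibration paragraph is essentially correct once the stochastic bound is repaired in this way; the essential point you were missing is that the $4\sqrt{\sum_{k>r}\sigma_k^2(\delta^*)}$ term must appear somewhere and is precisely what condition \eqref{eq:snr for tail} (and the factor of $5$ there) is designed to absorb.
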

\begin{proof}
    We apply  \Cref{lemma:main deviation} to  
    $  \widehat A_{ \Delta+\bb,\bb  }[r]
    $  with the difference of ground truth intensity functions being $\delta^*= \lambda^* -\lambda^*_a$ to deduce  that,  with probability at least  $1- 2\epsilon \bb^{-3}$,
 \begin{align} \nonumber
     &\bigg | \|\widehat A_{ \Delta+\bb,\bb  }[r] \|_\F    -  \| \delta^*\|_\lt \bigg| 
     \\ \nonumber
     \le &   C_2 m _{\Delta+\bb,\bb }^{-\gamma  } + C_\epsilon  \sqrt r\big\{\log(\bb) +\log( m_{\Delta+\bb,\bb }) \big\}     \sqrt { \frac{m^p  _{\Delta+\bb,\bb } }{\Delta }}    +  4\sqrt { \sum_{k=r+1}^\infty \sigma^2_k( \delta^*)}  ,
     \\ \label{eq:alternative term 0}
     \le & C_2 m _{\Delta+\bb,\bb }^{-\gamma  } + 2 C_\epsilon   \sqrt r \log(\bb)       \sqrt { \frac{m^p  _{\Delta+\bb,\bb }  }{\Delta}}    +  4\sqrt { \sum_{k=r+1}^\infty \sigma^2_k( \delta^*)}   
 \end{align}  
    Here  the second inequality follows from assumption that 
    $N_{\train}$ is sufficiently large, $\bb \ge N_{\train}$, and the observation that 
    $$  m_{\Delta+\bb,\bb } =\lceil  \big(  \Delta  /r  )^{1/(2\gamma +p)}\rceil \le \bb . $$
    It follows that with probability at least  $1- 2\epsilon \bb^{-3}$, 
    \begin{align}\nonumber
        \|\widehat A_{ \Delta+\bb,\bb  }[r] \|_\F   \ge   &\frac{\| \delta^*\|_\lt}{5}   - \bigg( 
      C_2 m^{-\gamma  } _{\Delta+\bb,\bb }  + 2 C_\epsilon   \sqrt r \log(\bb)       \sqrt { \frac{m^p  _{\Delta+\bb,\bb } }{\Delta}}      \bigg) 
      \\\nonumber
      = & \frac{\kappa}{5}   - \bigg( 
      C_2 m^{-\gamma  }  _{\Delta+\bb,\bb }+ 2 C_\epsilon   \sqrt r \log(\bb)       \sqrt { \frac{m^p  _{\Delta+\bb,\bb } }{\Delta}}      \bigg)  
      \\ \label{eq:alternative term 1}
      \ge & \frac{\kappa}{5} -C_3   \log(\bb) (r /\Delta) ^{\gamma /(2\gamma+p )} 
      > \frac{\kappa}{6},
 \end{align}  
 where the first inequality follows from  \eqref{eq:alternative term 0} and \eqref{eq:snr for tail}, the second inequality follows from $m_{\Delta+\bb,\bb } =\lceil  \big(  \Delta  /r  )^{1/(2\gamma +p)}\rceil$, and the last inequality follows from
 \eqref{eq:detection delay bound} with sufficiently large constant $C_{\lag}$. Note that 
 \begin{align}
     \label{eq:alternative term 2}\tau_{\Delta+\bb,\bb} = C_\alpha \bigg( \frac{r}{\Delta}\bigg)^{\gamma/(2\gamma+p)} \log(\Delta+\bb) \le  2C_\alpha  \bigg( \frac{r}{\Delta}\bigg)^{\gamma/(2\gamma+p)} \log( \bb) \le \kappa/6.
 \end{align} 
Here the equality follows from  \eqref{eq:threshold parameter}, the first  inequality follows from the fact that $\Delta \le \bb $, and the second inequality follows from \eqref{eq:detection delay bound} with sufficiently large $C_{\lag}$. The desired result follows from \eqref{eq:alternative term 1}
and \eqref{eq:alternative term 2}. \end{proof}

\section{Deviation Bounds}
\begin{theorem}[Singular value decomposition in function space]
\label{theorem:svd Hilbert}
Let $F   (y, z) : \mathbb R^{ p } \times \mathbb R^{ q } \to \mathbb R $ be any function such that $\|F \|_{\lt( \mathbb R^{ p + q })}   <\infty $.   There exists   a collection of     singular values   $  \sigma_1  (F )  \ge \sigma_2(F) \ge \cdots \ge 0 $, and two collections of orthonormal basis functions $\{ f_\ri(y)  \}_{\ri=1}^{\infty }  \subset \lt(\mathbb R^p)  $ and $\{ g_\ri   (z)  \}_{\ri=1}^{\infty } \subset \lt(\mathbb R^q) $  such that 
\begin{align}\label{eq:svd Hilbert}F  (y, z) = \sum_{\ri=1}^{\infty} \sigma_{\ri} (F ) f _\ri  (y) g_\ri(z).  
\end{align} 
\end{theorem}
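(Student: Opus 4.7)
The plan is to view $F$ as the integral kernel of a Hilbert--Schmidt operator between $\lt(\mathbb R^q)$ and $\lt(\mathbb R^p)$, and then to read off the expansion \eqref{eq:svd Hilbert} from the Schmidt (singular value) decomposition of that operator. First I would introduce
\[
(T_F h)(y) \;=\; \int_{\mathbb R^q} F(y,z)\,h(z)\,\dint z,\qquad h\in\lt(\mathbb R^q),
\]
and verify via Fubini and Cauchy--Schwarz that $T_F$ is Hilbert--Schmidt with squared HS norm $\|T_F\|_{\mathrm{HS}}^2=\iint F(y,z)^2\,\dint y\,\dint z=\|F\|_{\lt(\mathbb R^{p+q})}^2<\infty$. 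In particular $T_F$ and $T_F^*T_F$ are compact.

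Next I would apply the spectral theorem to the compact, positive, self-adjoint operator $T_F^*T_F$ on $\lt(\mathbb R^q)$. This yields a countable sequence of eigenvalues $\lambda_1\ge \lambda_2\ge\cdots\ge 0$ with $\sum_{\ri}\lambda_\ri<\infty$ and a corresponding orthonormal system of eigenfunctions $\{g_\ri\}_{\ri\ge 1}\subset \lt(\mathbb R^q)$. I would then set $\sigma_\ri(F)=\sqrt{\lambda_\ri}$ and, for every index with $\sigma_\ri>0$, define $f_\ri = \sigma_\ri^{-1}\,T_F g_\ri\in\lt(\mathbb R^p)$; the identity $\langle T_F g_\ri,T_F g_{\ri'}\rangle = \langle T_F^*T_F g_\ri,g_{\ri'}\rangle=\lambda_\ri\,\mathbf 1\{\ri=\ri'\}$ immediately shows that $\{f_\ri\}$ is orthonormal in $\lt(\mathbb R^p)$. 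Indices with $\sigma_\ri=0$ contribute nothing and can be completed to orthonormal systems in an arbitrary way. This produces the operator Schmidt decomposition
\[
T_F h \;=\; \sum_{\ri\ge 1}\sigma_\ri(F)\,\langle h,g_\ri\rangle\, f_\ri,\qquad h\in\lt(\mathbb R^q),
\]
with convergence of the partial sums in the Hilbert--Schmidt norm, since $\sum_\ri\sigma_\ri^2(F)<\infty$.

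The step I expect to be the main obstacle is translating this operator-level identity into the kernel-level identity \eqref{eq:svd Hilbert}. The bridge is the canonical isometry
\[
\Theta:\ \text{HS}\big(\lt(\mathbb R^q),\lt(\mathbb R^p)\big)\ \longrightarrow\ \lt(\mathbb R^p)\otimes \lt(\mathbb R^q)\ \cong\ \lt(\mathbb R^{p+q}),
\]
which sends each rank-one operator $h\mapsto \langle h,g_\ri\rangle f_\ri$ to the kernel $f_\ri(y)\,g_\ri(z)$, and which preserves norms. Since the operator-level partial sums $\sum_{\ri=1}^N \sigma_\ri\langle \cdot,g_\ri\rangle f_\ri$ converge to $T_F$ in Hilbert--Schmidt norm, the corresponding kernel partial sums $\sum_{\ri=1}^N\sigma_\ri(F)\,f_\ri(y)\,g_\ri(z)$ converge to $\Theta(T_F)=F$ in $\lt(\mathbb R^{p+q})$. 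The uniqueness of the integral-kernel representation of a Hilbert--Schmidt operator then identifies the limit with $F$ almost everywhere, giving the desired expansion in \eqref{eq:svd Hilbert} together with the monotone ordering $\sigma_1(F)\ge\sigma_2(F)\ge\cdots\ge 0$ inherited from the spectral theorem.
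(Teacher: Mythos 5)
The paper's ``proof'' of this statement is a single-line citation to Section~6 of \cite{brezis2011functional}; it does not present an argument. Your proposal fills that gap with the standard Schmidt (singular value) decomposition for Hilbert--Schmidt operators: identify $F$ with the kernel of $T_F$, check via Fubini and Cauchy--Schwarz that $\|T_F\|_{\mathrm{HS}} = \|F\|_{\lt}$, apply the spectral theorem to the compact positive operator $T_F^*T_F$ on $\lt(\mathbb R^q)$ to obtain $\{\lambda_\ri\}$ and $\{g_\ri\}$, set $\sigma_\ri(F)=\sqrt{\lambda_\ri}$ and $f_\ri=\sigma_\ri^{-1}T_F g_\ri$ for $\sigma_\ri>0$, verify orthonormality of $\{f_\ri\}$ via $\langle T_Fg_\ri, T_Fg_{\ri'}\rangle=\lambda_\ri\mathbf 1\{\ri=\ri'\}$, and then transport the HS-norm-convergent operator expansion to the $\lt(\mathbb R^{p+q})$-convergent kernel expansion \eqref{eq:svd Hilbert} via the canonical isometry $\mathrm{HS}(\lt(\mathbb R^q),\lt(\mathbb R^p))\cong\lt(\mathbb R^p)\otimes\lt(\mathbb R^q)\cong\lt(\mathbb R^{p+q})$. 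This is correct and is precisely the argument the cited reference develops. Two small points worth being explicit about if you write this up fully: (i) to justify $\Theta(T_F)=F$ you should note that the isometry $\Theta$ is defined so that the HS operator with kernel $G$ is sent to $G$, and that $T_F$ is by construction the HS operator with kernel $F$, so the equality is definitional rather than an appeal to ``uniqueness''; and (ii) the completion of $\{f_\ri\}$ and $\{g_\ri\}$ to full orthonormal bases (to match the theorem's phrasing ``orthonormal basis functions'') pads with arbitrary orthonormal bases of the cokernel and kernel of $T_F$, paired with $\sigma_\ri=0$, which you mention but should state explicitly so the indexing in \eqref{eq:svd Hilbert} runs over all $\ri\ge1$.
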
 
\begin{proof} See  Section 6 of \cite{brezis2011functional}.
\end{proof}

     Let $\delta^* (x) = \lambda^*(x) - \lambda^*_a(x)  : \mathbb R^{d } \to \mathbb R $.  Suppose the SVD of $ \delta^* $ satisfies 
     $$\delta^*(x) = \delta^* (y,z) = \sum_{\rho=1}^\infty \sigma_\rho( \delta^* ) f^*_\rho (y) g^*_\rho(z). $$
     For a positive integer $r$, let
     $$  \delta^* [r] (y,z) = \sum_{\rho=1}^r \sigma_\rho( \delta^* ) f^*_\rho (y)  g^*_\rho(z).$$
     Therefore $ \delta^* [r]   $ is the best rank-$r$ estimate of $ \delta^* $.
We can represent  $ \delta^* $ as a matrix in the following way.  
Let $A^*\in \mathbb R^{m^p\times m^q}$ be the coefficient matrix whose $(\mu, \eta)$ entry is
\begin{align}
    \label{eq:define A star}A^* _{(\mu,\eta)} = \iint_{\mathbb R^{p+q}} \delta^*(y,z) \Phi_\mu(y) \Psi_\eta(z) dydz. 
\end{align}
We approximate $\delta^*(y,z)$ by
\begin{align} \label{eq:density finite matrix}
    \delta_m^*(y,z) = \sum_{\mu=1}^{m^p} \sum_{\eta=1}^{m^q} A_{(\mu,\eta)}^* \Phi_{\mu} (y)\Psi_{\eta}(z).
\end{align} 
It was shown in Appendix G1 of \cite{Peng2024VRS-article} that if $\{ \phi\}_{k=1}^\infty$ are chosen to be the univariate Legendre polynomial bases,   then 
\begin{align}
    \label{eq:approximation errors}\| \delta^*-\delta^*_m \|_{\lt} \le C \| \delta^*\|_{W^{2,\gamma }} \cdot m^{-\gamma } 
\end{align}  
where $\| \delta^*\|_{W^{2,\gamma }}$ is the Sobolev norm of $\delta^*$.
\\

\begin{definition} \label{eq:definition appendix model}
    Let   $N_1, N_2 \in \mathbb Z_+$ be such that $N_1+ N_2 \le  N$.  Let $\{ X^{(i)} \}_{i=1}^{N_1+ N_2}$ be a collection of  PPPs satisfying \Cref{def:model_ppp_beta_mixing}. Suppose that  the  intensity function of $\{ X^{(i)} \}_{i=1}^{N_1 } $ is 
$$\lambda^*(x) : \mathbb R^d \to \mathbb R^+,$$ 
and the  intensity function of $\{ X^{(i)} \}_{i=N_1+1}^{N_1+N_2 } $ is $$\lambda^*_a(x) : \mathbb R^d \to \mathbb R^+.$$  
\end{definition}
Let $\widehat A^\ii  \in \mathbb  R^{m^p\times m^q}$ be such that 
 $$\widehat A^\ii_{(\mu,\eta)}    = \sum_{x=(y,z)\in X^\ii }\Phi_\mu (y)     \Psi_\eta (z) .$$  
Define 
\begin{align}
    \label{eq:cusum poisson}\widehat A= \frac{1}{N_1}\sum_{i=1}^{N_1} \widehat A^\ii -  \frac{1}{N_2}\sum_{i=N_1+1}^{N_1+N_2} \widehat A^\ii  .
\end{align}  
With  $x=(x_1,\ldots, x_d)$,  $ y=(x_1, \ldots,x_{ p }) $  and $z = (x_{ p +1},\ldots, x_{ p + q })$, we can write 
\begin{align}
    \label{eq:define hat p}{\widehat \delta} (y, z)   = \sum_{\mu=1}^{m^p} \sum_{\eta=1}^{m^q} \widehat A_{ (\mu, \eta) } \Phi_\mu(y) \Phi_{\eta}(z).
\end{align}

\begin{lemma}
    Let ${\widehat \delta} $ be defined in \eqref{eq:define hat p}. Then $$ \E \widehat \delta = \delta^*_m. $$
\end{lemma}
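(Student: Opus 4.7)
The plan is to reduce the claim to showing $\E \widehat A = A^*$ and then apply Campbell's theorem entrywise. Since $\widehat \delta(y,z)$ depends linearly on the random coefficients $\widehat A_{(\mu,\eta)}$, and the basis functions $\Phi_\mu,\Psi_\eta$ are deterministic, linearity of expectation immediately gives
\[
\E\widehat\delta(y,z) = \sum_{\mu=1}^{m^p}\sum_{\eta=1}^{m^q} \E\bigl[\widehat A_{(\mu,\eta)}\bigr]\,\Phi_\mu(y)\,\Psi_\eta(z).
\]
So if I can identify $\E\widehat A_{(\mu,\eta)}$ with $A^*_{(\mu,\eta)}$ as given in \eqref{eq:define A star}, the conclusion follows from the definition of $\delta^*_m$ in \eqref{eq:density finite matrix}.

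Next I would compute $\E \widehat A^{(i)}_{(\mu,\eta)}$ by applying Campbell's theorem (\Cref{lemma:campbell} in the appendix). For $1\le i\le N_1$, the process $X^{(i)}$ has marginal intensity $\lambda^*$, so
\[
\E\bigl[\widehat A^{(i)}_{(\mu,\eta)}\bigr]
= \E\Bigl[\sum_{x=(y,z)\in X^{(i)}}\Phi_\mu(y)\Psi_\eta(z)\Bigr]
= \iint_{\mathbb R^{p+q}} \lambda^*(y,z)\,\Phi_\mu(y)\,\Psi_\eta(z)\,dy\,dz.
\]
Similarly, for $N_1+1\le i\le N_1+N_2$ the marginal intensity is $\lambda^*_a$, which yields the analogous integral with $\lambda^*_a$ in place of $\lambda^*$. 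Note that only the marginal (Poisson) law of each $X^{(i)}$ is used here, so the $\beta$-mixing dependence across $i$ plays no role in this computation.

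Averaging over $i$ in each block and subtracting as in the definition \eqref{eq:cusum poisson} of $\widehat A$,
\[
\E\widehat A_{(\mu,\eta)}
= \iint \lambda^*(y,z)\,\Phi_\mu(y)\,\Psi_\eta(z)\,dy\,dz
 - \iint \lambda^*_a(y,z)\,\Phi_\mu(y)\,\Psi_\eta(z)\,dy\,dz
= \iint \delta^*(y,z)\,\Phi_\mu(y)\,\Psi_\eta(z)\,dy\,dz,
\]
which is precisely $A^*_{(\mu,\eta)}$. Plugging this into the first display yields $\E\widehat\delta = \sum_{\mu,\eta} A^*_{(\mu,\eta)}\Phi_\mu(y)\Psi_\eta(z) = \delta^*_m$.

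There is no real obstacle; the only care needed is to invoke Campbell's theorem under the integrability condition $\sup_i\|\lambda_i^*\|_\infty<\infty$ (from Assumption (A) in \Cref{def:model_ppp_beta_mixing}) together with the boundedness $\|\phi_k\|_\infty\le C_\phi$, which guarantee that $\Phi_\mu\Psi_\eta$ is integrable against each marginal intensity and hence justifies pulling expectation inside the sum over points of $X^{(i)}$.
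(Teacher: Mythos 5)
Your proof is correct and follows essentially the same route as the paper: apply Campbell's theorem entrywise to get $\E\widehat A^{(i)}_{(\mu,\eta)}$ under each marginal intensity, combine by linearity to identify $\E\widehat A_{(\mu,\eta)}$ with $A^*_{(\mu,\eta)}$ in \eqref{eq:define A star}, and conclude by the definition of $\delta^*_m$ in \eqref{eq:density finite matrix}. Your added remarks on integrability and on the irrelevance of the $\beta$-mixing structure are accurate and slightly more explicit than the paper's presentation, but do not change the argument.
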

 \begin{proof}
 Note that by \Cref{lemma:campbell}, when $i\le N_1$,
 $$ \E (\widehat A^\ii_{( \mu, \eta ) })  = \iint_{\mathbb R^{p+q}} \Phi_\mu (y)\Psi_\eta (z) \lambda^* (y,z) dydz.  $$
 Therefore 
 \begin{align}
     \label{eq:A hat A star} \E (\widehat A_{ (\mu,\eta) }) = \iint_{\mathbb R^{p+q}} \Phi_\mu (y)\Psi_\eta (z) (\lambda^* (y,z) -\lambda^*_a(y,z) )  dydz =A^*_{\mu, \eta},
 \end{align}
    where the last equality follows from  \eqref{eq:define A star}. Therefore 
    $$ \E({\widehat \delta} (y,z) ) =\sum_{\mu=1}^{m^p} \sum_{\eta=1}^{m^q} \E( \widehat A_{ (\mu, \eta) } ) \Phi_\mu(y) \Phi_{\eta}(z)   =   \sum_{\mu=1}^{m^p} \sum_{\eta=1}^{m^q}  A^* _{(\mu, \eta) }   \Phi_\mu(y) \Phi_{\eta}(z)  = \delta^*_m(y, z). $$
 \end{proof}

\begin{lemma} \label{lemma:main deviation}
    Let  $\widehat A \in \mathbb  R^{m^p\times m^q} $  be defined in \eqref{eq:cusum poisson} and   $$ \widehat A[r] = \svd(\widehat A, r).$$ Suppose in   that there exists an absolute constant $C_1$ such that 
    $$ \max\{ \| \lambda^*\|_\infty, \| \lambda^*_a\|_\infty\}  < C_1 \quad  \max\{ \| \lambda^*\|_ {W^{2,\gamma  }}, \| \lambda^*_a\|_ {W^{2,\gamma  }}\}<C_1,$$
    and that 
      \begin{align}
          \label{eq:snr in first lemma}
      N_1 \ge N_2 \ge m^{\max\{ p,q\}}.
      \end{align}Then  for any $\epsilon>0$, with probability at least 
    $1- 2\epsilon N^{-3}$, it holds that 
    $$ \bigg | \| \widehat A [r] \|_\F   -  \| \delta^*\|_\lt \bigg| \le   C_2 m^{-\gamma  } + C_\epsilon  \sqrt r\big\{\log^2(N) +\log^2(m) \big\}   \sqrt { \frac{m^{\max\{ p,q\}} }{N_2}}   +  4\sqrt { \sum_{k=r+1}^\infty \sigma^2_k( \delta^*)}  , $$
    where $C_2 > 0$ is an absolute constant  depending only on $C_1$, and $C_\epsilon > 0$ is an absolute constant  depending only on $C_1$ and $\epsilon$.
\end{lemma}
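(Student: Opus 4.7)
The plan is to bound $|\|\widehat A[r]\|_\F - \|\delta^*\|_\lt|$ by a triangle-inequality chain that isolates the four sources of error in the stated inequality: a basis-truncation bias producing the $C_2 m^{-\gamma}$ term, a Parseval identity (which vanishes), a rank-$r$ truncation tail producing the $4\sqrt{\sum_{k>r}\sigma_k^2(\delta^*)}$ term, and a stochastic matrix deviation producing the $\sqrt{r}\{\log^2(N)+\log^2(m)\}\sqrt{m^{\max(p,q)}/N_2}$ term. Introducing the rank-$r$ truncation $A^*[r]=\svd(A^*,r)$, I would decompose
\begin{equation*}
\big|\|\widehat A[r]\|_\F-\|\delta^*\|_\lt\big|
\le \big|\|\widehat A[r]\|_\F-\|A^*[r]\|_\F\big|
+\big|\|A^*[r]\|_\F-\|A^*\|_\F\big|
+\big|\|A^*\|_\F-\|\delta^*\|_\lt\big|.
\end{equation*}
Orthonormality of $\{\Phi_\mu\Psi_\eta\}$ in $\lt(\Omega^{\otimes d})$ gives $\|A^*\|_\F=\|\delta^*_m\|_\lt$ by Parseval, so the third term is at most $\|\delta^*-\delta^*_m\|_\lt\le C m^{-\gamma}$ by \eqref{eq:approximation errors} together with the uniform Sobolev bound on $\lambda^*,\lambda^*_a$. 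The second term satisfies $|\|A^*[r]\|_\F-\|A^*\|_\F|\le \sqrt{\sum_{k>r}\sigma_k^2(A^*)}$ (using $|a-b|\le\sqrt{|a^2-b^2|}$ for $a,b\ge 0$ and the SVD of $A^*$), which the tail-projection lemma (\Cref{lem:tail-projection}) then bounds by a constant multiple of $\sqrt{\sum_{k>r}\sigma_k^2(\delta^*)}$ up to an additive $O(m^{-\gamma})$ slack absorbed into the basis bias.

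For the first term, I would exploit that $\widehat A[r]-A^*[r]$ has rank at most $2r$ to write
\begin{equation*}
\big|\|\widehat A[r]\|_\F-\|A^*[r]\|_\F\big|
\le \|\widehat A[r]-A^*[r]\|_\F
\le \sqrt{2r}\,\|\widehat A[r]-A^*[r]\|_\op,
\end{equation*}
and then chain $\|\widehat A[r]-A^*[r]\|_\op \le \|\widehat A-A^*\|_\op + \sigma_{r+1}(\widehat A)+\sigma_{r+1}(A^*)$ with Weyl's inequality $\sigma_{r+1}(\widehat A)\le\sigma_{r+1}(A^*)+\|\widehat A-A^*\|_\op$ to obtain $\|\widehat A[r]-A^*[r]\|_\op\le 2\|\widehat A-A^*\|_\op+2\sigma_{r+1}(A^*)$. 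The $\sigma_{r+1}(A^*)$ contribution is absorbed into the low-rank tail via \Cref{lem:tail-projection}, leaving $\|\widehat A-A^*\|_\op$ as the target of matrix concentration.

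The final and most technical step is bounding $\|\widehat A-A^*\|_\op$ with high probability using the new Matrix Bernstein inequality for geometrically $\beta$-mixing PPP time series (\Cref{thm:bernstein_beta_mixing_ppp_rect_nonstat}). Since $\E\widehat A=A^*$ by Campbell's theorem (\Cref{lemma:campbell}), I write
\begin{equation*}
\widehat A-A^*=\frac{1}{N_1}\sum_{i=1}^{N_1}\big(\widehat A^\ii-\E\widehat A^\ii\big)-\frac{1}{N_2}\sum_{i=N_1+1}^{N_1+N_2}\big(\widehat A^\ii-\E\widehat A^\ii\big)
\end{equation*}
and apply the Bernstein bound to each average. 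The crucial input is that the matrix variance proxies $\|\E[\widehat A^\ii(\widehat A^\ii)^\top]\|_\op$ and $\|\E[(\widehat A^\ii)^\top\widehat A^\ii]\|_\op$ are both of order $m^{\max(p,q)}$ rather than the naive $m^{p+q}$: by Campbell's second-moment formula, each proxy decomposes into a rank-one $A^*(A^*)^\top$-type piece of constant order and a diagonal single-point piece that reduces, via orthonormality of $\{\phi_k\}$ and the pointwise Legendre estimate $\sum_{k=1}^m \phi_k^2 \lesssim m$, to $\|\lambda^*\|_\infty\cdot m^{\max(p,q)}$. Combining with a Poisson envelope bound for $\|\widehat A^\ii\|_\op$, a standard Berbee coupling/blocking argument for the geometric $\beta$-mixing coefficients (producing the $\log^2$ factors), the sample-size condition $N_1\ge N_2\ge m^{\max(p,q)}$, and failure probability $\epsilon N^{-3}$, the Bernstein step yields $\|\widehat A-A^*\|_\op\le C_\epsilon\{\log^2(N)+\log^2(m)\}\sqrt{m^{\max(p,q)}/N_2}$ with probability at least $1-2\epsilon N^{-3}$; multiplying by $\sqrt{2r}$ delivers the stated stochastic term. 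The hard part will be this variance-proxy calculation: establishing the $m^{\max(p,q)}$ (rather than $m^{p+q}$) rate requires carefully exploiting Campbell's second-moment formula for PPPs together with the separable product structure of the tensor-product Legendre basis, after which the $\beta$-mixing blocking and bookkeeping steps follow standard templates.
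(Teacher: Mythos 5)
Your decomposition into basis-truncation bias, rank-$r$ tail, and stochastic deviation is the right architecture and mirrors the paper's strategy, and your use of \Cref{lem:tail-projection} to bound $\sqrt{\sum_{k>r}\sigma_k^2(A^*)}$ directly by $\sqrt{\sum_{k>r}\sigma_k^2(\delta^*)}$ is in fact a cleaner route than the paper's, which goes through the Mirsky-in-Hilbert-space inequality (\Cref{lemma:Mirsky in Hilbert space}) and picks up an extra $\|\delta^*_m-\delta^*\|_\F$ term. You also correctly identify the $m^{\max(p,q)}$ rate for the Bernstein variance proxy. One small caveat there: the pointwise claim $\sum_{k\le m}\phi_k^2(x)=O(m)$ is false for Legendre polynomials near the endpoints of $\Omega$ (where it is $O(m^2)$); what actually delivers the rate is orthonormality \emph{after} integrating against the bounded intensity, $\int\sum_{k\le m}\phi_k^2(x)\lambda(x)\,dx\le\|\lambda\|_\infty\, m$.

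The genuine gap is in your handling of the first term. Passing from $\|\widehat A[r]-A^*[r]\|_\F$ to $\sqrt{2r}\,\|\widehat A[r]-A^*[r]\|_\op$ and then applying the Weyl chain yields $2\sqrt{2r}\,\|\widehat A-A^*\|_\op+2\sqrt{2r}\,\sigma_{r+1}(A^*)$. You claim the $\sigma_{r+1}(A^*)$ piece is absorbed into the tail, but that would require $\sqrt{r}\,\sigma_{r+1}(A^*)\le C\sqrt{\sum_{k>r}\sigma_k^2(A^*)}$ for an absolute $C$, which fails in general (take $\sigma_{r+1}>0$ and $\sigma_k=0$ for $k>r+1$; the ratio is $\sqrt{r}$). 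Your route therefore delivers a tail contribution of order $\sqrt{r}\,\sqrt{\sum_{k>r}\sigma_k^2(\delta^*)}$ rather than the stated $4\sqrt{\sum_{k>r}\sigma_k^2(\delta^*)}$, and this extra $\sqrt{r}$ breaks the downstream argument: in \Cref{lemma:type 2} the tail is arranged to satisfy $\sqrt{\sum_{k>r}\sigma_k^2(\delta^*)}\le\kappa/5$ so that four times it is still dominated by $\kappa$, which fails once $\sqrt{r}$ exceeds a small constant. The paper avoids this entirely by invoking \Cref{thm:matrix} (Lemma 16 of \citealt{xu2025supp}), a rank-$r$ SVD perturbation bound of the form $\|Y[r]-X\|_\F\le(2+\sqrt2)\bigl(\sqrt{\sum_{i>r}\sigma_i^2(X)}+\sqrt{r}\,\|Z\|_\op\bigr)$, in which the singular-value tail appears with a pure constant and only the operator-norm noise carries the $\sqrt{r}$. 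The fix is to drop the intermediate $A^*[r]$ from your decomposition and bound $\bigl|\|\widehat A[r]\|_\F-\|A^*\|_\F\bigr|\le\|\widehat A[r]-A^*\|_\F$ directly via \Cref{thm:matrix}.
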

\begin{proof}Let  
$${\widehat \delta} [r](y, z) =\sum_{\mu=1}^{m^p} \sum_{\eta=1}^{m^q}  \widehat A[r] _{(\mu, \eta) } \Phi_\mu(y) \Phi_{\eta}(z) . $$
Observe that 
\begin{align}\label{eq:signal lower bound 1}
    \| \delta^* - \widehat \delta [r] \|_\lt \le     \| \delta^* - \delta^*_m  \|_\lt + \| \delta^*_m - \widehat \delta [r] \|_\lt = \| \delta^* - \delta^*_m  \|_\lt + \| A^* - \widehat A[r]\|_\F  ,
\end{align}
where  $\delta_m^*$ is defined in \eqref{eq:density finite matrix}, and the equality follows from  \Cref{lemma:singular value preserving}.
In addition, by \Cref{thm:matrix},
\begin{align*}
   \|  A^* - \widehat A[r]\| _\F \le 4  \sqrt { \sum_{k=r+1}^\infty \sigma^2_k(A^*)}  + 4 \sqrt r \|A^*-\widehat A \|.
\end{align*}
 Note that 
 \begin{align*}
       \sqrt { \sum_{k=r+1}^\infty \sigma^2_k(A^*)} = &\sqrt { \sum_{k=r+1}^\infty \sigma^2_k(\delta^*_m)} 
     \le   \sqrt { \sum_{k=r+1}^\infty \sigma^2_k(\delta^*_m -\delta^*)} + \sqrt { \sum_{k=r+1}^\infty \sigma^2_k( \delta^*)} \\ 
     \le &\| \delta^*_m - \delta^*\|_\F + \sqrt { \sum_{k=r+1}^\infty \sigma^2_k( \delta^*)}  ,
\end{align*}
where the   equality follows from  \Cref{lemma:singular value preserving}, the first inequality follows from \Cref{lemma:Mirsky in Hilbert space}, and the second inequality follows from  the fact that 
$$  \| \delta^*_m - \delta^*\|_\F = \sqrt { \sum_{k= 1}^\infty \sigma^2_k(\delta^*_m -\delta^*)}  \ge \sqrt { \sum_{k=r+1}^\infty \sigma^2_k(\delta^*_m -\delta^*)}. $$
Therefore 
\begin{align}
    \label{eq:signal lower bound 2} \|  A^* - \widehat A[r]\| _\F \le 4\| \delta^*_m - \delta^*\|_\F + 4\sqrt { \sum_{k=r+1}^\infty \sigma^2_k( \delta^*)} +4  \sqrt r \| A^*- \widehat A\| .
\end{align}  
Combining \eqref{eq:signal lower bound 1} and \eqref{eq:signal lower bound 2}, we have
\begin{align*}
    \| \delta^* - \widehat \delta [r] \|_\lt  &\le 5\| \delta^*_m - \delta^*\|_\F + 4\sqrt { \sum_{k=r+1}^\infty \sigma^2_k( \delta^*)} +4  \sqrt r \| A^*- \widehat A\|
    \\
    &\le C_2 m^{-\gamma }  + 4\sqrt { \sum_{k=r+1}^\infty \sigma^2_k( \delta^*)} +4  \sqrt r \| A^*- \widehat A\|,
\end{align*}   
where the last inequality follows from \eqref{eq:approximation errors}.
By \Cref{lemma:poisson matrix Bernstein} and the  fact that $A^* =\E(\widehat A)$ as specified in \eqref{eq:A hat A star}, we have with probability at least $1 - 2\epsilon N^{-3}$
\begin{align*}
\| \widehat A - \E(\widehat A) \|  
\le C_\epsilon^{\prime} \sqrt{\frac{\max\{ \| \lambda^*\|_\infty , \| \lambda^*_a\|_\infty\}(m^p + m^q)}{N_1\wedge N_2}}\log^2(N \vee (m^p + m^q)).
 \end{align*}
Therefore 
 with probability at least 
    $ 1-2 N^{-3}$, it holds that 
   \begin{align*}
  &  \bigg|     \| \widehat \delta [r]\|_\lt  -  \| \delta^*\|_\lt  \bigg| \le  \|\delta^* -\widehat \delta [r] \|_\lt 
       \\ \le&    C_2 m^{-\gamma  } + C_{\epsilon}^{\prime}\sqrt r \sqrt{\frac{\max\{ \| \lambda^*\|_\infty , \| \lambda^*_a\|_\infty\}(m^p + m^q)}{N_1\wedge N_2}}\log^2(N \vee (m^p + m^q)) +  4\sqrt { \sum_{k=r+1}^\infty \sigma^2_k( \delta^*)}
       \\
       \le &     C_2 m^{-\gamma  } + C_\epsilon \sqrt r \big\{\log^2(N) +\log^2(m) \big\}  \bigg(  \sqrt { \frac{m^{p}+m^{q}}{N_2}} \bigg) +  4\sqrt { \sum_{k=r+1}^\infty \sigma^2_k( \delta^*)},
   \end{align*} 
where the last inequality follows from  \eqref{eq:snr in first lemma}.
The desired result follows from the fact that $\| \widehat \delta   [r]\|_\lt =\| \widehat A[r]\|_\F.$
\end{proof}

\subsection{Poisson Matrix Properties}

\begin{lemma} \label{lemma:poisson matrix Bernstein}    
Let $\widehat A$ be defined in \eqref{eq:cusum poisson}. 
with probability at least $1-2N^{-3}$ 
\begin{align*}
\| \widehat A - \E(\widehat A) \|  
\le C\sqrt{\frac{\max\{ \| \lambda^*\|_\infty , \| \lambda^*_a\|_\infty\}(m^p + m^q)}{N_1\wedge N_2}}\log^2(N \vee (m^p + m^q)).
 \end{align*}

\end{lemma}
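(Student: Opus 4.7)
The plan is to reduce this bound to the new Matrix Bernstein inequality for geometrically $\beta$-mixing PPP time series (\Cref{thm:bernstein_beta_mixing_ppp_rect_nonstat} in the appendix). I decompose
\begin{equation*}
\widehat A - \E \widehat A
=
\frac{1}{N_1}\sum_{i=1}^{N_1}\bigl(\widehat A^{(i)} - \E \widehat A^{(i)}\bigr)
-
\frac{1}{N_2}\sum_{i=N_1+1}^{N_1+N_2}\bigl(\widehat A^{(i)} - \E \widehat A^{(i)}\bigr),
\end{equation*}
and bound each stationary $\beta$-mixing block average separately, combining via the triangle inequality. The appendix theorem then applies once I supply a per-summand almost-sure operator norm bound $L$ and a matrix variance proxy $v_1$.

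For $L$, I write each $\widehat A^{(i)} = \sum_{x\in X^{(i)}}\Phi(y)\Psi(z)^\top$ as a sum of rank-one matrices; uniform basis boundedness $\|\phi_k\|_\infty\le C_\phi$ gives $\|\Phi(y)\|_2\le C_\phi^{p}m^{p/2}$, $\|\Psi(z)\|_2\le C_\phi^{q}m^{q/2}$, hence $\|\widehat A^{(i)}\|_\op \le C|X^{(i)}| m^{d/2}$. A Chernoff bound for the Poisson cardinality together with a union bound over $i\le N$ yields $\max_i|X^{(i)}|\le C\log N$ with probability $1-N^{-4}$, so $L=Cm^{d/2}\log N$. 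For $v_1$, Campbell's second-moment formula (\Cref{lemma:campbell}) gives
\begin{equation*}
\E\bigl[\widehat A^{(i)}(\widehat A^{(i)})^\top\bigr]_{\mu,\mu'}
=
\int \Phi_\mu(y)\Phi_{\mu'}(y)\,h(y)\,dy
+
\E\widehat A^{(i)}_{\mu,\cdot}\bigl(\E\widehat A^{(i)}_{\mu',\cdot}\bigr)^{\!\top},
\end{equation*}
with $h(y)=\int\bigl[\sum_\eta \Psi_\eta(z)^2\bigr]\lambda(y,z)\,dz \le C\|\lambda\|_\infty m^q$ via $\sum_\eta\Psi_\eta(z)^2 \le C_\phi^{2q}m^q$. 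A Bessel-type estimate $u^\top B u=\int(\sum_\mu u_\mu\Phi_\mu)^2 h\,dy \le \|h\|_\infty\|u\|_2^2$ against the orthonormality of $\{\Phi_\mu\}$ yields $\|\E[\widehat A^{(i)}(\widehat A^{(i)})^\top]\|_\op = O(\|\lambda\|_\infty m^q)$, once one notes that the rank-one second term is $O(1)$; symmetrically $\|\E[(\widehat A^{(i)})^\top\widehat A^{(i)}]\|_\op = O(\|\lambda\|_\infty m^p)$. Hence $v_1 = C\max\{\|\lambda^*\|_\infty,\|\lambda^*_a\|_\infty\}(m^p+m^q)$.

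Plugging $v=N_k v_1$ and $L$ into the $\beta$-mixing matrix Bernstein tail, the variance term scales as $\sqrt{v_1/N_k}$ multiplied by one $\log(N\vee(m^p+m^q))$ factor from the matrix Bernstein union bound over $m^p+m^q$ dimensions and a second $\log$ factor from the effective-sample-size penalty under geometric $\beta$-mixing (standard block decomposition); the boundedness term $L\log(\cdot)/N_k$ is dominated by the variance term whenever $N_k \ge m^{\max\{p,q\}}$, which is exactly the standing hypothesis \eqref{eq:snr in first lemma}. The main obstacle will be obtaining the sharp variance scaling $m^p+m^q$ rather than the naive $m^{p+q}$: this requires exploiting the full orthonormality of the tensor-product bases against the PPP second-moment measure, not merely pointwise basis bounds. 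Additional care is needed to close the Poisson cardinality tail uniformly over all $N$ windows and to track the $\log^2$ constant produced by compounding the matrix Bernstein tail with the block-length overhead from $\beta$-mixing.
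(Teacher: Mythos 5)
Your overall strategy matches the paper's: split $\widehat A-\E\widehat A$ into the two centered block averages via triangle inequality, apply \Cref{thm:bernstein_beta_mixing_ppp_rect_nonstat} to each block with $F_{\mu,\eta}(y,z)=\Phi_\mu(y)\Psi_\eta(z)$, and use orthonormality of the tensor-product bases to obtain the $m^p+m^q$ variance proxy. The paper simply cites Lemma~5 of \citealt{xu2025supp} for $\nu_{\max}\le\max\{\|\lambda^*\|_\infty,\|\lambda^*_a\|_\infty\}(m^p+m^q)$; your sketch of deriving it directly via Campbell's formula and a Bessel-type contraction against the orthonormal $\{\Phi_\mu\}$ is a reasonable alternative and correctly identifies that exploiting orthonormality (rather than pointwise bounds) is what gives $m^p+m^q$ rather than $m^{p+q}$.

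There is one genuine misstep in your proposal. You treat $L$ in \Cref{thm:bernstein_beta_mixing_ppp_rect_nonstat} as a per-window a.s.\ bound on $\|\widehat A^{(i)}\|_{\op}$ and therefore introduce a Chernoff bound on the Poisson cardinality $|X^{(i)}|$ together with a union bound over $i\le N$. But the theorem's hypothesis only requires the per-point bound $\sup_{x\in\mathbb X}\|F(x)\|_{\op}\le L$; since $F(x)=\Phi(y)\Psi(z)^\top$ is rank one, $L\le C_\phi^d m^{d/2}$ directly from uniform basis boundedness, and the theorem's own proof handles the unboundedness of $\sum_{x\in X^{(i)}}F(x)$ internally through its truncation step. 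Your cardinality-tail argument is therefore both unnecessary and a departure from what the theorem actually asks for; it also inflates $L$ by an extraneous $\log N$ factor. This is the one place you diverge from the paper's proof, and the paper's route is the simpler and correct one.

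A smaller observation, which applies to the paper's own proof as well as yours: the final step drops the $L$-dependent term. The theorem's stated conclusion is $C\sqrt{n}(\sqrt{\nu_{\max}}+L)\log^2(\cdot)$, so after normalizing by $N_k$ the boundedness contribution scales as $L/\sqrt{N_k}$, not $L\log(\cdot)/N_k$ as you wrote. Absorbing $L\asymp m^{d/2}$ into the $\sqrt{\nu_{\max}/N_k}$ term requires a comparison that is not the one you stated; the paper glosses over this too. Your invocation of the standing hypothesis \eqref{eq:snr in first lemma} as ``exactly'' what's needed rests on the classical $L\log/N_k$ form and does not follow verbatim from the cited theorem's $\sqrt{n}(\sqrt{\nu_{\max}}+L)\log^2$ form; if you are going to close this, you should track the intermediate (sharper) bound in Step~7 of the theorem's proof rather than its boxed conclusion.
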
 
 \begin{proof}
 Under \Cref{eq:definition appendix model}, we can apply \Cref{thm:bernstein_beta_mixing_ppp_rect_nonstat}, with $Y_i=\widehat A^\ii$, $F_{\mu,\eta}(y,z) = \Phi_{\mu}(y)\Psi_{\eta}(z)$, $1 \le \mu \le m^p$ and $1 \le \mu \le m^q$. We have with probability at least $1 - 2N^{-3}$,
 \begin{align*}
\Big\|\sum_{i=1}^{N_1} \widehat A^\ii - \E(\sum_{i=1}^{N_1} \widehat A^\ii )\Big\|
\;\le\;
C\sqrt{N_1}\big(\sqrt{\nu_{\max}} + L\big)\log^2(N \vee (m^p + m^q)),
\end{align*}
and
 \begin{align*}
\Big\|\sum_{i=N_1+1}^{N_1+N_2} \widehat A^\ii - \E(\sum_{i=N_1+1}^{N_1+N_2} \widehat A^\ii )\Big\|
\;\le\;
C\sqrt{N_2}\big(\sqrt{\nu_{\max}} + L\big)\log^2(N \vee (m^p + m^q)),
\end{align*}
where
\[
\nu_i
=\max\Big\{
\big\|\int_{\mathbb X}F(x)F(x)^\top \lambda_i^*(x)\,dx\big\|_{\op},\;
\big\|\int_{\mathbb X}F(x)^\top F(x)\lambda_i^*(x)\,dx\big\|_{\op}
\Big\},
\qquad
\nu_{\max}=\max_{1\le i\le n}\nu_i.
\]
By the orthogonality of $\Phi$ and $\Psi$, the proof of Lemma 5 in \cite{xu2025supp} gives that
$$\nu_{\max} \le \max\{ \| \lambda^*\|_\infty , \| \lambda^*_a\|_\infty\}(m^p + m^q).$$
Therefore, we have with probability at least $1-2N^{-3}$ 
\begin{align*}
\| \widehat A - \E(\widehat A) \|  &\le   \frac{1}{N_1}\|   \sum_{i=1}^{N_1} \widehat A^\ii - \E(\sum_{i=1}^{N_1} \widehat A^\ii )   \|  + 
 \frac{1}{N_2}\|  \sum_{i=N_1+1}^{N_1+N_2 } \widehat A^\ii - \E(\sum_{i=N_1+1}^{N_1+N_2 } \widehat A^\ii )   \|\\
 &\le C\sqrt{\frac{\max\{ \| \lambda^*\|_\infty , \| \lambda^*_a\|_\infty\}(m^p + m^q)}{N_1\wedge N_2}}\log^2(N \vee (m^p + m^q)).
 \end{align*}
\end{proof}

\section{Auxiliary Results}

\begin{lemma}[Lemma 25 in \citealt{xu2025supp}: Mirsky in Hilbert space] \label{lemma:Mirsky in Hilbert space} 
    Suppose $A$ and $B$ are two compact operators in $\mathcal W\otimes \mathcal W'$, where $\mathcal W$ and $\mathcal W'$ are two separable Hilbert spaces. Let 
$ \{ \sigma_k (A)   \}_{k=1}^ \infty    $  be the singular values of $\mathcal A$ in the decreasing order,  and $ \{ \sigma_k (\mathcal B)  \}_{k=1}^\infty    $   be the singular values of $\mathcal B$ in the decreasing order. Then 
$$\sum_{k=1}^\infty (\sigma_k(A) -\sigma_k(B))^2 \le \|A - B \|_\F ^2  =  \sum_{k=1}^ \infty   \sigma_{k} ^2 (A  - B)     . $$
\end{lemma}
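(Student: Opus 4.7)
The plan is to reduce the infinite-dimensional statement to the classical finite-dimensional Mirsky inequality by finite-rank truncation and then pass to a limit. First I would dispose of the equality $\|A - B\|_\F^2 = \sum_{k=1}^\infty \sigma_k^2(A - B)$: this is simply the definition of the Hilbert--Schmidt (Frobenius) norm in terms of singular values, so it requires no argument beyond the fact that $A - B$ is Hilbert--Schmidt. I may assume the right-hand side is finite throughout, since otherwise the inequality is vacuous, and observe that this is exactly the regime in which the claim carries content.

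Next, I would fix orthonormal bases $\{e_i\}_{i\ge 1}$ of $\mathcal W$ and $\{f_j\}_{j\ge 1}$ of $\mathcal W'$, and let $P_n$ and $Q_n$ denote the orthogonal projections onto the first $n$ basis elements in each basis. Define the truncations $A_n = P_n A Q_n$ and $B_n = P_n B Q_n$. In the chosen bases these are $n\times n$ matrices, so the classical (finite-dimensional) Mirsky inequality yields
$$\sum_{k=1}^n (\sigma_k(A_n) - \sigma_k(B_n))^2 \le \|A_n - B_n\|_\F^2$$
for every $n\ge 1$, which is the finite analogue of what I want to prove.

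Finally I would pass to the limit $n\to\infty$ on both sides. On the right-hand side, $\|A_n - B_n\|_\F^2 = \|P_n(A-B)Q_n\|_\F^2 \to \|A-B\|_\F^2$ by Parseval applied in the ambient bases, using that $A - B$ is Hilbert--Schmidt. On the left-hand side, the compactness of $A$ (and $B$) implies $A_n \to A$ and $B_n \to B$ in operator norm, so the Weyl-type stability bound $|\sigma_k(A_n) - \sigma_k(A)| \le \|A_n - A\|_{\op}$ gives $\sigma_k(A_n)\to \sigma_k(A)$ and $\sigma_k(B_n)\to \sigma_k(B)$ for every fixed $k$. Truncating the outer sum at any $K$ and then letting $n\to\infty$, followed by $K\to\infty$ (a Fatou argument on the counting measure over $\mathbb N$), gives
$$\sum_{k=1}^\infty (\sigma_k(A)-\sigma_k(B))^2 \le \liminf_{n\to\infty}\sum_{k=1}^n (\sigma_k(A_n)-\sigma_k(B_n))^2 \le \lim_{n\to\infty}\|A_n - B_n\|_\F^2 = \|A-B\|_\F^2.$$

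The main delicate point I expect is coordinating the two distinct modes of convergence on the two sides. Singular-value stability on the left requires only operator-norm convergence $A_n\to A$ and $B_n\to B$, which is exactly what compactness of $A$ and $B$ provides via approximation by finite-rank operators; Frobenius-norm convergence on the right requires the stronger Hilbert--Schmidt assumption on $A - B$, so that the tail of the expansion of $A - B$ in the tensor-product basis $\{e_i\otimes f_j\}$ can be made arbitrarily small. Once these two convergence statements are separated cleanly, the rest of the argument is a routine combination of the classical matrix Mirsky bound with Fatou's lemma.
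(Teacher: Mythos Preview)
The paper does not supply its own proof of this lemma: it is stated as an auxiliary result and attributed verbatim to Lemma~25 of \citet{xu2025supp}, with no argument given in the present paper. So there is no in-paper proof to compare against.

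Your proposal is a correct and standard route. The reduction to finite dimensions via the compressions $A_n=P_nAQ_n$, $B_n=P_nBQ_n$ is exactly the natural device here, and each of the limiting steps is justified as you describe: compactness of $A$ and $B$ gives $A_n\to A$, $B_n\to B$ in operator norm (use that both $A$ and $A^*$ are compact to handle the two-sided projection), hence $\sigma_k(A_n)\to\sigma_k(A)$ for each fixed $k$ by the Weyl-type bound; the Hilbert--Schmidt assumption on $A-B$ (which you may impose without loss since otherwise the inequality is vacuous) gives $\|P_n(A-B)Q_n\|_\F\to\|A-B\|_\F$; and the Fatou step on the left is clean. One cosmetic point: $P_nAQ_n$ acts between two \emph{different} $n$-dimensional subspaces, so strictly speaking you are invoking the rectangular Mirsky inequality rather than the square one, but that version is equally classical.
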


\begin{lemma} \label{lemma:singular value preserving}
   For any  $A \in \mathbb R^{m^p\times m^q}$, let   $\mathcal F$ be a map from $\mathbb  R^{m^p\times m^q} $ to $\lt(\mathbb R^p \times \mathbb R^q)$ such that 
\begin{align} \label{eq:function finite matrix}
   \mathcal  F(A)  (y,z) = \sum_{\mu=1}^{m^p} \sum_{\eta=1}^{m^q} A_{\mu,\eta}  \Phi_{\mu} (y)\Psi_{\eta}(z),
\end{align} 
where $\{\Phi_{\mu} (y)\}_{\mu=1}^{m^p}$ and $\{\Psi_{\mu} (z)\}_{\eta=1}^{m^q}$ are generic basis functions of $ \lt(\mathbb R^p) $ and $\lt(\mathbb R^q)$ respectively. Then 
$$ \sigma_k (A) = \sigma_k(\mathcal  F(A)) \quad \text{for any } k \in \mathbb Z^+.$$

\end{lemma}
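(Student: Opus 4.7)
The plan is to reduce the claim to the finite-dimensional matrix SVD by interpreting $\mathcal F$ as a change of basis between the coefficient matrix and the functional representation. Note that although the statement uses the phrase ``generic basis functions,'' the singular value identity genuinely requires the bases $\{\Phi_\mu\}$ and $\{\Psi_\eta\}$ to be orthonormal in $\lt(\mathbb R^p)$ and $\lt(\mathbb R^q)$ respectively, as is the case throughout the paper. I would state this at the outset and then proceed.

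First I would take the matrix SVD $A = U\Sigma V^\top$ with $U\in\mathbb R^{m^p\times m^p}$, $V\in\mathbb R^{m^q\times m^q}$ orthogonal and $\Sigma$ containing the singular values $\sigma_1(A)\ge \sigma_2(A)\ge \cdots \ge 0$ on the diagonal. Substituting into \eqref{eq:function finite matrix} and rearranging the double sum, I would write
\begin{equation*}
\mathcal F(A)(y,z)
= \sum_{k=1}^{\min(m^p,m^q)} \sigma_k(A)\, f_k(y)\, g_k(z),
\end{equation*}
where
\begin{equation*}
f_k(y) = \sum_{\mu=1}^{m^p} U_{\mu,k}\,\Phi_\mu(y),
\qquad
g_k(z) = \sum_{\eta=1}^{m^q} V_{\eta,k}\,\Psi_\eta(z).
\end{equation*}

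Next I would verify that $\{f_k\}_k$ is orthonormal in $\lt(\mathbb R^p)$ and $\{g_k\}_k$ is orthonormal in $\lt(\mathbb R^q)$. This is where orthonormality of the input bases is used: because $\int \Phi_\mu \Phi_{\mu'} = \mathbf 1\{\mu=\mu'\}$, we have $\int f_k f_{k'} = \sum_{\mu,\mu'} U_{\mu,k} U_{\mu',k'}\int \Phi_\mu\Phi_{\mu'} = (U^\top U)_{k,k'} = \mathbf 1\{k=k'\}$, and likewise for $\{g_k\}$. Hence the displayed expansion of $\mathcal F(A)$ matches the functional SVD format of \Cref{theorem:svd Hilbert}, with orthonormal left and right singular functions and nonnegative coefficients arranged in nonincreasing order.

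Finally, I would conclude by uniqueness of singular values in the functional SVD: any two such orthonormal expansions of the same $\lt$ function must share the same singular values. Therefore $\sigma_k(\mathcal F(A)) = \sigma_k(A)$ for every $k\in\mathbb Z_+$, where singular values beyond $\min(m^p, m^q)$ are zero on both sides. There is no substantive obstacle here; the only subtlety worth flagging is the orthonormality assumption on $\{\Phi_\mu\},\{\Psi_\eta\}$ without which the identity fails (a non-orthonormal basis would introduce a Gram matrix that distorts the spectrum).
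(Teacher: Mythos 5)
Your proof is correct, and it takes a genuinely more explicit route than the paper. The paper's proof simply observes that $\mathcal F$ preserves the Frobenius-to-$\lt$ distance, $\|A-B\|_\F = \|\mathcal F(A)-\mathcal F(B)\|_\lt$, and then asserts ``since $\mathcal F$ is distance preserving, $\mathcal F$ also preserves the singular values.'' That last step is terse and, read literally, is not a valid implication in general: an arbitrary linear isometry of the matrix space into a Hilbert space (e.g.\ any unitary on $\mathbb R^{m^p m^q}$ after vectorizing) preserves Frobenius norm but can scramble the singular-value structure. What actually makes the paper's claim true is the additional \emph{tensor-product} structure of $\mathcal F$: because the bases $\{\Phi_\mu\}$, $\{\Psi_\eta\}$ are orthonormal, $\mathcal F$ factors as $U_1 (\cdot) U_2^\ast$ for two isometric embeddings $U_1:\mathbb R^{m^p}\to\lt(\mathbb R^p)$, $U_2:\mathbb R^{m^q}\to\lt(\mathbb R^q)$, and conjugation by isometries on each side preserves singular values. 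Your construction makes this implicit fact explicit: you take the matrix SVD $A=U\Sigma V^\top$, push it through $\mathcal F$, verify that the images $\{f_k\}$ and $\{g_k\}$ are orthonormal (using the orthonormality of the bases and of $U,V$), and invoke uniqueness of the functional SVD from \Cref{theorem:svd Hilbert}. This buys a fully spelled-out argument at the cost of a few more lines; the paper's version buys brevity at the cost of leaving the tensor structure implicit. Your caveat that ``generic basis functions'' must be read as ``orthonormal basis functions'' is correct and worth flagging, since the lemma statement is loosely worded and the identity does fail without orthonormality.
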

\begin{proof}
  Note that  the map $\mathcal{F}$ is distance  preserving in the sense that for any $A, B \in \mathbb R^{m^p\times m^q}$,
$$\| A-B\| _\F = \|\mathcal  F(A) -\mathcal  F(B) \|_\lt.$$
Since $\mathcal F$ is distance preserving,  $\mathcal F$ also preserves the singular values.   
\end{proof}

\begin{theorem}\label{thm:matrix}
Let $X$ and $Z$ be two generic matrices in $\mathbb R^{p\times q}$ and  that 
$$Y=X+Z .$$ Denote 
$ Y[r]= \svd(Y, r)$. Then
\begin{equation*}
\|Y[r]-X^*\|_{\mathrm{F}}
\ \le\ (2+\sqrt{2})\Big(\sqrt{ \sum_{i=r+1}^{\min\{m,n\}}\sigma_i^2(X^*)}  +\sqrt{r}\,\|Z\|\,\Big),
\end{equation*}
 
\end{theorem}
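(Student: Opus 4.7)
My plan is to exploit the optimality of $Y[r]$ as the Frobenius-best rank-$r$ approximation to $Y$, and then use duality between the nuclear and operator norms to convert what would otherwise be a $\|Z\|_{\F}$ term into the target $\sqrt{r}\,\|Z\|$. Throughout I identify $X$ and $X^*$ in the theorem statement (the star appears only on the right-hand side).

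First I would introduce the rank-$r$ truncation $X^*[r]=\svd(X^*,r)$, which is a legitimate competitor for $Y[r]$ since it has rank at most $r$. Optimality of $Y[r]$ therefore gives
\[
\|Y - Y[r]\|_{\F}^{2} \;\le\; \|Y - X^*[r]\|_{\F}^{2}.
\]
Writing $Y=X^*+Z$ on both sides, expanding the squares, and cancelling the common $\|Z\|_{\F}^{2}$ term, this reduces to
\[
\|X^* - Y[r]\|_{\F}^{2} \;\le\; \|X^* - X^*[r]\|_{\F}^{2} + 2\,\langle X^*[r] - Y[r],\,Z\rangle.
\]
Since $\|X^*-X^*[r]\|_{\F}=\sqrt{\sum_{i>r}\sigma_i^2(X^*)}$, the first term on the right is already of the required form; everything reduces to controlling the inner product.

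Next, the crucial observation is that $X^*[r]-Y[r]$ has rank at most $2r$. I would apply the bound $\langle A,B\rangle \le \sqrt{\rank(A)}\,\|A\|_{\F}\,\|B\|$, which follows from $\langle A,B\rangle \le \sum_i \sigma_i(A)\cdot\|B\|$ and Cauchy--Schwarz, with $A=X^*[r]-Y[r]$ and $B=Z$. Combined with the triangle inequality $\|X^*[r]-Y[r]\|_{\F}\le \|X^*-X^*[r]\|_{\F}+\|X^*-Y[r]\|_{\F}$ and the abbreviations $a=\|X^*-Y[r]\|_{\F}$, $b=\|X^*-X^*[r]\|_{\F}$, $c=\sqrt{2r}\,\|Z\|$, the preceding display becomes a scalar quadratic inequality
\[
a^2 \;\le\; b^2 + 2c(a+b).
\]
Rearranging, $(a-c)^2 \le (b+c)^2$, so $a\le b+2c$. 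Since $\max\{1,2\sqrt{2}\}\le 2+\sqrt{2}$, this yields $\|X^*-Y[r]\|_{\F}\le b+2\sqrt{2}\,\sqrt{r}\,\|Z\|\le (2+\sqrt{2})\bigl(b+\sqrt{r}\,\|Z\|\bigr)$, which is the desired bound.

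The main obstacle is the passage from the cross term $\langle X^*[r]-Y[r],Z\rangle$ to the desired $\sqrt{r}\,\|Z\|$ scaling. A naive triangle inequality through $\|Y-X^*\|_{\F}=\|Z\|_{\F}$ is catastrophically loose for the intended applications, since $\|Z\|_{\F}$ scales with $\sqrt{\min\{m,n\}}$. The optimality of $Y[r]$ is precisely what isolates an inner product whose first argument is low rank, and the rank-$2r$ bound paired with trace/nuclear--operator duality is what converts a $\sqrt{\min\{m,n\}}$ factor into the sharper $\sqrt{r}$. After that, solving the scalar quadratic is mechanical.
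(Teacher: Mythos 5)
Your argument is correct and, unlike the paper's one-line deferral to Lemma 16 of \citet{xu2025supp}, it is fully self-contained. The mechanism is clean: use $X^*[r]$ as a rank-$r$ competitor, invoke Eckart--Young optimality of $Y[r]$ for $Y$ in Frobenius norm, cancel the common $\|Z\|_{\F}^2$, and then exploit that $X^*[r]-Y[r]$ has rank at most $2r$ together with nuclear--operator duality and Cauchy--Schwarz to get $|\langle X^*[r]-Y[r],Z\rangle|\le \sqrt{2r}\,\|X^*[r]-Y[r]\|_{\F}\,\|Z\|$. The triangle inequality and the scalar quadratic then close the argument. One small slip that does not affect anything: after expanding and cancelling, the cross term should be $2\langle Y[r]-X^*[r],\,Z\rangle$, not $2\langle X^*[r]-Y[r],\,Z\rangle$ — the sign is opposite to what you wrote. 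Since the next step bounds the cross term in absolute value, the conclusion is unchanged, but you should state $\le 2|\langle\cdot,\cdot\rangle|$ explicitly before applying the nuclear--operator bound.

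Worth noting: your derivation actually yields the slightly sharper estimate $\|X^*-Y[r]\|_{\F}\le \sqrt{\sum_{i>r}\sigma_i^2(X^*)} + 2\sqrt{2}\,\sqrt{r}\,\|Z\|$, which you then relax to the symmetric constant $(2+\sqrt{2})$ multiplying both terms so as to match the stated form. Since the paper gives no proof of its own for this theorem (it is referenced out to an external lemma), your argument serves as a legitimate independent derivation; it is also the kind of perturbation bound one would want to have on hand to verify the constant.
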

\begin{proof}
The desired result is a direct consequence of Lemma 16 in \cite{xu2025supp}.
\end{proof}

\begin{theorem}[Theorem 2.2 in \citealt{baddeley2007spatial}: Campbell's Theorem]\label{lemma:campbell}
Let $X\subset \mathbb R ^d$ be a Poisson  process   with the intensity function $\lambda^*$.  For all measurable function $f: \mathbb R ^d \to \mathbb{R}$,  it holds that
    \[
     \E ( \sum_{x\in X} f(x))= \int_{ \mathbb R^d}f(x)\lambda^*(x)dx.
    \]
\end{theorem}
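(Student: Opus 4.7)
The plan is to prove Campbell's formula by the standard measure-theoretic ladder: indicators $\to$ simple functions $\to$ nonnegative measurable $\to$ general measurable. The key input is the defining property of a Poisson point process, which directly pins down the expectation for indicator functions of measurable sets.

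First I would handle $f = \mathbf{1}_S$ for a measurable set $S \subset \mathbb{R}^d$. In this case $\sum_{x \in X} f(x) = |X \cap S|$, and by property \textbf{(1)} of a PPP in the introduction this count is Poisson with mean $\int_S \lambda^*(x)\,\dint x$. Hence $\mathbb{E}\bigl(\sum_{x \in X} \mathbf{1}_S(x)\bigr) = \int_S \lambda^*(x)\,\dint x = \int_{\mathbb{R}^d} \mathbf{1}_S(x)\lambda^*(x)\,\dint x$, which gives the identity for indicators.

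Next I would extend by linearity. For a nonnegative simple function $f = \sum_{j=1}^J c_j \mathbf{1}_{S_j}$ with $c_j \geq 0$, linearity of expectation applied to the (a.s.\ finite on bounded $S_j$) sum gives
\begin{equation*}
\mathbb{E}\Bigl(\sum_{x \in X} f(x)\Bigr) = \sum_{j=1}^J c_j\, \mathbb{E}\bigl(|X \cap S_j|\bigr) = \sum_{j=1}^J c_j \int_{\mathbb{R}^d} \mathbf{1}_{S_j}(x)\lambda^*(x)\,\dint x = \int_{\mathbb{R}^d} f(x)\lambda^*(x)\,\dint x.
\end{equation*}
For a general nonnegative measurable $f$, I would pick an increasing sequence of nonnegative simple functions $f_n \uparrow f$ pointwise, apply the simple-function identity to each $f_n$, and pass to the limit via the monotone convergence theorem on both sides (the left side also uses Tonelli / monotone convergence applied to the random sum $\sum_{x\in X} f_n(x) \uparrow \sum_{x\in X} f(x)$, which is justified because a Poisson process is a locally finite random counting measure, so the partial sums are well-defined random variables).

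Finally, for general measurable $f$, decompose $f = f^+ - f^-$ and apply the nonnegative case to each part, under the standing integrability assumption $\int_{\mathbb{R}^d} |f(x)|\lambda^*(x)\,\dint x < \infty$ (so both sides are finite and subtraction is legal). The main subtlety, not a deep obstacle but worth flagging, is justifying the interchange of expectation and the random sum in the nonnegative step; this is handled cleanly by viewing $X$ as a random counting measure $N_X$ on $\mathbb{R}^d$ with $\mathbb{E} N_X(S) = \int_S \lambda^*\,\dint x$, so that $\sum_{x\in X}f(x) = \int f\,\dint N_X$ and Campbell's identity becomes the statement $\mathbb{E}\int f\,\dint N_X = \int f\,\dint(\mathbb{E} N_X)$, which is exactly Tonelli applied to the mean measure.
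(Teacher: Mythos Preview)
Your proof is correct and follows the standard measure-theoretic ladder argument for Campbell's theorem. However, note that the paper does not actually provide its own proof of this statement: it is listed as an auxiliary result and simply cited as Theorem~2.2 in \citet{baddeley2007spatial}, so there is nothing to compare against. What you have written is precisely the textbook derivation one would find in a point process reference, and the only caveat you already flag correctly---namely that the identity should be read as holding whenever $\int |f|\lambda^*\,\dint x<\infty$ (or for nonnegative $f$ with both sides possibly $+\infty$).
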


\begin{lemma}
\label{lem:tail-projection}
Let $\lambda \in \lt(\Omega^{\otimes p}\times \Omega^{\otimes q})$.
For any integer $M\ge 1$, define the finite-dimensional subspaces
\[
\mathcal U_M = \mathrm{span}\{\Phi_\mu:\mu=1,\ldots,M^p\}\subset \lt(\Omega^{\otimes p}),
\qquad
\mathcal V_M = \mathrm{span}\{\Psi_\eta:\eta=1,\ldots,M^q\}\subset \lt(\Omega^{\otimes q}),
\]
and let $\mathcal P_{\mathcal U_M}$ and $\mathcal P_{\mathcal V_M}$ be the orthogonal projections onto
$\mathcal U_M$ and $\mathcal V_M$.

Let $T_\lambda:\lt(\Omega^{\otimes q})\to \lt(\Omega^{\otimes p})$ be the integral operator
\[
(T_\lambda g)(y)= \int_{\Omega^{\otimes q}} \lambda(y,z)\,g(z)\,dz .
\]
Then $T_\lambda$ is Hilbert--Schmidt (hence compact), and we define $\{\sigma_k(\lambda)\}_{k\ge1}$
as the singular values of $T_\lambda$ (in nonincreasing order).

Define the matrix $\mathcal M(\lambda)\in\mathbb R^{M^p\times M^q}$ by
\[
\mathcal M(\lambda)_{(\mu,\eta)}
= \iint_{\Omega^{\otimes p}\times \Omega^{\otimes q}}
\lambda(y,z)\,\Phi_\mu(y)\,\Psi_\eta(z)\,dy\,dz.
\]
Let $\{\sigma_k(\mathcal M(\lambda))\}_{k\ge1}$ denote the singular values of the matrix
$\mathcal M(\lambda)$.
Then for every integer $R\ge1$,
\[
\sum_{k>R}\sigma_k^2\!\big(\mathcal M(\lambda)\big)
\;\le\;
\sum_{k>R}\sigma_k^2(\lambda).
\]
\end{lemma}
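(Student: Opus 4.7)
The plan is to reduce the tail-sum inequality to Bessel's inequality for the orthonormal product system $\{\Phi_\mu\otimes\Psi_\eta\}$, combined with the Eckart--Young variational characterization of the tail of the squared singular values. The key observation is that the truncated functional SVD of $\lambda$ yields, via projection onto the basis coefficients, an honest rank-$R$ matrix competitor against $\mathcal M(\lambda)$.

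First, I would invoke \Cref{theorem:svd Hilbert} to write
\[
\lambda(y,z)=\sum_{k=1}^{\infty}\sigma_k(\lambda)\,f_k(y)\,g_k(z),
\]
where $\{f_k\}\subset\lt(\Omega^{\otimes p})$ and $\{g_k\}\subset\lt(\Omega^{\otimes q})$ are orthonormal and $\sigma_k(\lambda)$ coincide with the singular values of $T_\lambda$ (the two definitions agree since the kernel theorem identifies $T_\lambda$ with the Hilbert--Schmidt expansion of $\lambda$). Fix $R\ge1$ and set $\lambda_R(y,z)=\sum_{k\le R}\sigma_k(\lambda)f_k(y)g_k(z)$, so that orthonormality gives $\|\lambda-\lambda_R\|_{\lt}^2=\sum_{k>R}\sigma_k^2(\lambda)$.

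Next, I would lift this truncation to a rank-$R$ matrix by recording the coefficients of $f_k$ and $g_k$ in the finite bases. Define $(\tilde f_k)_\mu=\langle f_k,\Phi_\mu\rangle_{\lt}$ and $(\tilde g_k)_\eta=\langle g_k,\Psi_\eta\rangle_{\lt}$, and set
\[
\mathcal N_R=\sum_{k=1}^{R}\sigma_k(\lambda)\,\tilde f_k\,\tilde g_k^\top\in\mathbb R^{M^p\times M^q}.
\]
By linearity of the defining integrals in \eqref{eq:define M star}, the entries of the residual matrix are exactly the Fourier coefficients of the functional residual:
\[
\mathcal M(\lambda)_{(\mu,\eta)}-(\mathcal N_R)_{(\mu,\eta)}=\langle \lambda-\lambda_R,\Phi_\mu\otimes\Psi_\eta\rangle_{\lt}.
\]

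Finally, I would close the argument by chaining Eckart--Young with Bessel's inequality. Since $\mathrm{rank}(\mathcal N_R)\le R$, the Eckart--Young theorem gives
\[
\sum_{k>R}\sigma_k^2\!\big(\mathcal M(\lambda)\big)=\min_{\mathrm{rank}(B)\le R}\|\mathcal M(\lambda)-B\|_{\F}^2\le\|\mathcal M(\lambda)-\mathcal N_R\|_{\F}^2,
\]
and since $\{\Phi_\mu\otimes\Psi_\eta\}_{\mu,\eta}$ is an orthonormal (though possibly incomplete) system in $\lt(\Omega^{\otimes p}\times\Omega^{\otimes q})$, Bessel's inequality yields
\[
\|\mathcal M(\lambda)-\mathcal N_R\|_{\F}^2=\sum_{\mu,\eta}\big|\langle\lambda-\lambda_R,\Phi_\mu\otimes\Psi_\eta\rangle_{\lt}\big|^2\le\|\lambda-\lambda_R\|_{\lt}^2=\sum_{k>R}\sigma_k^2(\lambda).
\]
I do not anticipate a serious obstacle; the only subtle point is recognizing that $\mathcal N_R$ is a legitimate rank-$R$ competitor in Eckart--Young even though the vectors $\{\tilde f_k\}$ and $\{\tilde g_k\}$ are in general neither orthogonal nor of unit norm (Bessel only guarantees $\|\tilde f_k\|_2,\|\tilde g_k\|_2\le 1$). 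This is harmless because Eckart--Young requires only an upper bound on the rank of the competitor, which is automatic from the construction.
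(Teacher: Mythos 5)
Your proof is correct and takes essentially the same approach as the paper's: both arguments plug a truncated functional SVD into Eckart--Young (via a rank-$R$ competitor matrix whose construction mirrors the rank-$R$ operator truncation) and then bound the resulting Frobenius distance by the functional $\lt$ tail via a projection-contraction argument. The only difference is cosmetic---the paper phrases the contraction step at the operator level (compressing a Hilbert--Schmidt operator by orthogonal projections shrinks its HS norm), whereas you phrase it at the coefficient level as Bessel's inequality for the orthonormal product system $\{\Phi_\mu\otimes\Psi_\eta\}$; these are the same statement.
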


\begin{proof}
\textbf{Step 1: $\mathcal M(\lambda)$ represents the compressed operator.}
Let
\[
\Pi_M = \mathcal P_{\mathcal U_M}\,T_\lambda\,\mathcal P_{\mathcal V_M}
: \lt(\Omega^{\otimes q}) \to \lt(\Omega^{\otimes p}).
\]
For $1\le \mu\le M^p$ and $1\le \eta\le M^q$,
using $\mathcal P_{\mathcal U_M}\Phi_\mu=\Phi_\mu$ and $\mathcal P_{\mathcal V_M}\Psi_\eta=\Psi_\eta$,
\begin{align*}
\langle \Phi_\mu,\, \Pi_M \Psi_\eta\rangle_{\lt(\Omega^{\otimes p})}
&= \langle \Phi_\mu,\, T_\lambda \Psi_\eta\rangle \\
&= \int_{\Omega^{\otimes p}} \Phi_\mu(y)\left(\int_{\Omega^{\otimes q}}\lambda(y,z)\Psi_\eta(z)\,dz\right)dy \\
&= \iint_{\Omega^{\otimes p}\times \Omega^{\otimes q}}\lambda(y,z)\Phi_\mu(y)\Psi_\eta(z)\,dy\,dz \\
&= \mathcal M(\lambda)_{(\mu,\eta)}.
\end{align*}
Hence, the restriction $\Pi_M|_{\mathcal V_M}:\mathcal V_M\to \mathcal U_M$ has matrix
$\mathcal M(\lambda)$ in the orthonormal bases $\{\Psi_\eta\}$ and $\{\Phi_\mu\}$.
Therefore the nonzero singular values of $\Pi_M$ coincide with those of $\mathcal M(\lambda)$,
and
\[
\sum_{k>R}\sigma_k^2\!\big(\mathcal M(\lambda)\big)
=
\inf_{\mathrm{rank}(A)\le R}\|\Pi_M-A\|_{\mathrm{HS}}^2,
\]
where $\|\cdot\|_{\mathrm{HS}}$ is the Hilbert--Schmidt norm.

\textbf{Step 2: Eckart--Young for Hilbert--Schmidt operators.}
Since $T_\lambda$ is Hilbert--Schmidt, it admits an SVD
$T_\lambda=\sum_{k\ge1}\sigma_k(\lambda)\,u_k\otimes v_k$ with orthonormal $\{u_k\},\{v_k\}$.
Let $T_{\lambda,R} =\sum_{k=1}^R\sigma_k(\lambda)\,u_k\otimes v_k$.
Then $T_{\lambda,R}$ has rank at most $R$ and
\[
\inf_{\mathrm{rank}(B)\le R}\|T_\lambda-B\|_{\mathrm{HS}}^2
=\|T_\lambda-T_{\lambda,R}\|_{\mathrm{HS}}^2
=\sum_{k>R}\sigma_k^2(\lambda).
\]

\textbf{Step 3: Compression by orthogonal projections is a contraction in HS norm.}
For any Hilbert--Schmidt operator $A$ and orthogonal projections $P,Q$,
\[
\|PAQ\|_{\mathrm{HS}}^2
= \mathrm{tr}\big((PAQ)^*(PAQ)\big)
= \mathrm{tr}\big(QA^*PAQ\big)
\le \mathrm{tr}\big(QA^*AQ\big)
\le \mathrm{tr}(A^*A)
= \|A\|_{\mathrm{HS}}^2,
\]
where we used $0\le P\le I$ and $0\le Q\le I$ and the monotonicity of the trace on positive
trace-class operators.

\textbf{Step 4: Compare best rank-$R$ errors.}
Let $T_{\lambda,R}$ be defined in \textbf{Step 2}. Then
$\mathcal P_{\mathcal U_M}\,T_{\lambda,R}\,\mathcal P_{\mathcal V_M}$ has rank at most $R$.
By optimality of the best rank-$R$ approximation of $\Pi_M$ and \textbf{Step 3},
\begin{align*}
\sum_{k>R}\sigma_k^2\!\big(\mathcal M(\lambda)\big)
&= \inf_{\mathrm{rank}(A)\le R}\|\Pi_M-A\|_{\mathrm{HS}}^2 \\
&\le \|\Pi_M-\mathcal P_{\mathcal U_M}T_{\lambda,R}\mathcal P_{\mathcal V_M}\|_{\mathrm{HS}}^2 \\
&= \|\mathcal P_{\mathcal U_M}(T_\lambda-T_{\lambda,R})\mathcal P_{\mathcal V_M}\|_{\mathrm{HS}}^2 \\
&\le \|T_\lambda-T_{\lambda,R}\|_{\mathrm{HS}}^2
= \sum_{k>R}\sigma_k^2(\lambda).
\end{align*}
This proves the claimed inequality.
\end{proof}

\begin{lemma}\label{lemma:eigen decay}
    Suppose $\{ \sigma_k(F)\}_{k=1}^\infty$ are the singular values the function $F$ and that $\{ \sigma_k(F)\}_{k=1}^\infty$ decays at a polynomial rate. Then there exists a constant $r$ depending only on the decay rate of $\{ \sigma_k(F)\}_{k=1}^\infty$ such that 
    $$ 
\sqrt{ \sum_{k=r+1}^\infty \sigma_k^2 (F) }  \;\le\; \frac{\| F \|_\lt}{5}.
  $$
\end{lemma}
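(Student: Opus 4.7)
By the functional SVD expansion from \Cref{theorem:svd Hilbert}, one has $\|F\|_\lt^2 = \sum_{k=1}^\infty \sigma_k^2(F)$, so the claim is equivalent to showing that the relative tail satisfies $\sum_{k>r}\sigma_k^2(F) \le \tfrac{1}{25}\sum_{k\ge 1}\sigma_k^2(F)$. The plan is therefore to bound the tail of the squared singular values from above and the full sum from below, each in terms of $\sigma_1^2(F)$, and then to choose $r$ only in terms of the decay parameters.

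To formalize the hypothesis in a normalization-invariant way, I would take the polynomial decay assumption to mean that there exist constants $C>0$ and $\alpha>1/2$ with $\sigma_k(F)\le C\,\sigma_1(F)\,k^{-\alpha}$ for all $k\ge 1$. This is the natural reading in context (cf.\ \citet{hall2006properties,raskutti2012minimax}), since a ``rate'' of decay should not depend on the overall scale of $F$: rescaling $F$ scales every $\sigma_k(F)$ by the same factor, so only the shape of the sequence can be called a rate. Using monotonicity of $x\mapsto x^{-2\alpha}$ and an integral comparison,
\[
\sum_{k=r+1}^\infty \sigma_k^2(F)
\;\le\; C^2\sigma_1^2(F)\sum_{k=r+1}^\infty k^{-2\alpha}
\;\le\; C^2\sigma_1^2(F)\int_r^\infty x^{-2\alpha}\,dx
\;=\; \frac{C^2\sigma_1^2(F)\,r^{\,1-2\alpha}}{2\alpha-1}.
\]
On the other hand, $\|F\|_\lt^2 = \sum_{k\ge 1}\sigma_k^2(F)\ge \sigma_1^2(F)$ trivially, so the ratio of the tail to the full sum is at most $C^2 r^{\,1-2\alpha}/(2\alpha-1)$. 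Choosing any integer
\[
r \;\ge\; \bigl(25\,C^2/(2\alpha-1)\bigr)^{1/(2\alpha-1)}
\]
forces this ratio to be at most $1/25$, which is exactly the desired bound after taking a square root. Crucially, this $r$ depends only on $(C,\alpha)$, matching the lemma's statement that $r$ depends only on the decay rate.

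The main obstacle is not any calculation but pinning down what ``polynomial decay rate'' means strongly enough to give an $r$ independent of $F$. Under the naive reading $\sigma_k\le Ck^{-\alpha}$, the ratio $\sum_{k>r}\sigma_k^2/\|F\|_\lt^2$ cannot be controlled by $r$ alone, because $\|F\|_\lt$ can be made arbitrarily small while $C$ stays fixed, violating scale-invariance. Once the normalization-invariant form above is adopted, the rest is the short integral-comparison argument; an entirely analogous computation, with $C^2\sigma_1^2(F)\,e^{-2\beta r}/(1-e^{-2\beta})$ replacing the polynomial tail, handles the exponential decay case mentioned in the main text, yielding $r=O(\log(1/\beta))$.
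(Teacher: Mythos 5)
Your proof is correct, and it takes a genuinely different and arguably more careful route than the paper's. The paper's proof reads ``decays at a polynomial rate'' as exact equality: it assumes WLOG that $\sigma_k(F)=ck^{-a}$ and $\|F\|_{\lt}=1$, observes that summability of $\sigma_k^2$ forces $a>1/2$, and then concludes by a qualitative limiting argument that the tail $\sum_{k>r}c^2k^{-2a}\to 0$, so some $r$ depending only on $(a,c)$ works. It is a pure existence proof with no explicit bound on $r$, and the ``WLOG'' is in fact a stronger modelling assumption (exact polynomial sequence) than a decay-rate bound. You, by contrast, read the hypothesis as a normalization-invariant upper bound $\sigma_k(F)\le C\sigma_1(F)k^{-\alpha}$ with $\alpha>1/2$, compare the tail sum to an integral to get $\sum_{k>r}\sigma_k^2(F)\le C^2\sigma_1^2(F)\,r^{1-2\alpha}/(2\alpha-1)$, lower-bound $\|F\|_{\lt}^2\ge\sigma_1^2(F)$, and extract an explicit threshold $r\ge\bigl(25C^2/(2\alpha-1)\bigr)^{1/(2\alpha-1)}$. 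What your version buys: it makes the scale-invariance issue explicit (a bound $\sigma_k\le Ck^{-\alpha}$ without any reference to the scale of $F$ cannot give an $r$ independent of $F$, since $\|F\|_{\lt}$ can be made arbitrarily small), it covers the more general inequality form of the hypothesis rather than exact equality, and it is quantitative. What the paper's version buys: brevity, and under its exact-decay assumption with $\|F\|_{\lt}=1$, the constant $c$ is actually pinned down by $a$ via $c^2\sum_{k\ge1}k^{-2a}=1$, so $r$ depends on $a$ alone. Both yield the stated conclusion; your argument is the one I would keep if one wanted the lemma to match the cited literature's usual hypotheses.
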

    \begin{proof}
        Without lost of generality, suppose  that
        $ \sigma_k(F) = c k^{-a}$ and $\|F\|_\lt =1$. Since   $$ \sum_{k=r+1}^\infty  \sigma_k^2 (F ) = \sum_{k=1}^\infty c^2 k^{-2a}  =\|F\|_\lt^2 =1  ,$$  it follows that $a>1/2$. Since $a>1/2$, 
        $$\lim_{r\to \infty }\sum_{k=r+1}^\infty c^2  k^{-2a }=0 .$$
        Therefore there exists $r\in \mathbb Z^+$ depending only on $a$ and $c$ such that 
        $\sqrt{ \sum_{k=r+1}^\infty \sigma_k^2 (F) }  \;\le\; \frac{\| F \|_\lt}{5}$
    \end{proof}

 \begin{theorem}[Lemma 5 in \citealt{xu2025supp}: Matrix Bernstein inequality for a Poisson point process]
\label{thm:bernstein_ppp}
Let $X$ be an inhomogeneous Poisson point process on a compact set
$\mathbb X\subset\mathbb R^d$ with intensity $\lambda:\mathbb X\to\mathbb R_+$
satisfying $\|\lambda\|_\infty<\infty$.
Let $F:\mathbb X\to\mathbb R^{d_1\times d_2}$ be measurable and assume
$\sup_{x\in\mathbb X}\|F(x)\|_{\op}\le L<\infty$.
Define the matrix variance proxy
\[
\nu
=
\max\!\left\{
\left\|\int_{\mathbb X}F(x)F(x)^\top\lambda(x)\,dx\right\|_{\op},
\ \left\|\int_{\mathbb X}F(x)^\top F(x)\lambda(x)\,dx\right\|_{\op}
\right\}.
\]
Then for all $t\ge 0$,
\[
\mathbb P\!\left(
\left\|\sum_{x\in X}F(x)\;-\;\int_{\mathbb X}F(x)\lambda(x)\,dx\right\|_{\op}
\ge t
\right)
\le (d_1+d_2)\exp\!\left(-\frac{t^2/2}{\nu+Lt/3}\right).
\]
Equivalently, for any $a\ge 2$, with probability at least
$1-2(\max\{d_1,d_2\})^{\,1-a}$,
\[
\left\|\sum_{x\in N}F(x)\;-\;\int_{\mathbb X}F(x)\lambda(x)\,dx\right\|_{\op}
\le
\sqrt{2a\,\nu\log(d_1+d_2)}+\frac{2a}{3}L\log(d_1+d_2).
\]
\end{theorem}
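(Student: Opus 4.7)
The plan is to apply Tropp's matrix Laplace-transform method after decomposing the Poisson point process into independent counts on a fine partition. First I would perform the Hermitian dilation $\tilde F(x)=\begin{pmatrix}0 & F(x)\\ F(x)^\top & 0\end{pmatrix}\in\mathbb R^{d\times d}$ with $d=d_1+d_2$: this turns $\|\sum_{x\in X}F(x)-\int F\lambda\|_{\op}$ into $\lambda_{\max}$ of the dilated centered sum $\tilde M$, keeps the pointwise operator-norm bound $L$, and since $\tilde F(x)^2$ is block-diagonal with blocks $F(x)F(x)^\top$ and $F(x)^\top F(x)$, converts the two-sided variance proxy into $\|\int\tilde F^2\lambda\|_{\op}=\nu$. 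It therefore suffices to prove the symmetric-case bound with dimension $d$.

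Next I would discretize $\mathbb X$ into a fine partition $\{A_1,\ldots,A_K\}$ with representatives $x_k\in A_k$, let $N_k=|X\cap A_k|$ be independent scalar $\mathrm{Poi}(\mu_k)$ with $\mu_k=\int_{A_k}\lambda$, and form $M_K=\sum_k(N_k-\mu_k)\tilde F(x_k)$. A direct computation using $\E(e^{\theta Y})^N=\exp(\mu(e^{\theta Y}-I))$ for scalar Poisson $N$ and Hermitian $Y$, together with the commuting shift $e^{-\theta\mu_k\tilde F(x_k)}$, yields the matrix cumulant identity $\E e^{\theta(N_k-\mu_k)\tilde F(x_k)}=\exp(\mu_k(e^{\theta\tilde F(x_k)}-I-\theta\tilde F(x_k)))$. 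Since $\|\tilde F(x_k)\|_{\op}\le L$ implies $\tilde F(x_k)^n\preceq L^{n-2}\tilde F(x_k)^2$ for all $n\ge 2$ (the eigenvalues of $\tilde F(x_k)/L$ lie in $[-1,1]$, and powers of $\tilde F(x_k)$ commute), summing the power series gives
\[
e^{\theta\tilde F(x_k)}-I-\theta\tilde F(x_k)\;\preceq\;\frac{e^{\theta L}-1-\theta L}{L^2}\,\tilde F(x_k)^2 .
\]
Tropp's master tail inequality (via Lieb's concavity), Loewner monotonicity of $\Tr\exp$, and $\Tr e^A\le d\,e^{\lambda_{\max}(A)}$ then give
\[
\E\,\Tr\, e^{\theta M_K}\;\le\; d\,\exp\!\Bigl(\tfrac{e^{\theta L}-1-\theta L}{L^2}\,\|V_K\|_{\op}\Bigr),\qquad V_K=\sum_k\mu_k\tilde F(x_k)^2,
\]
and Markov's inequality with the standard optimization in $\theta$ produces the Bernstein tail for $M_K$ with variance proxy $\|V_K\|_{\op}$.

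Finally I would pass to the continuum limit. As the mesh shrinks, $V_K\to \int\tilde F^2\lambda$ in operator norm, so $\|V_K\|_{\op}\to\nu$, and $M_K\to\tilde M$ in distribution, transferring the tail bound to $\tilde M$. Inverting then yields the high-probability form $\sqrt{2a\nu\log(d_1+d_2)}+\tfrac{2a}{3}L\log(d_1+d_2)$. The main technical obstacle is this continuum passage: ensuring that the matrix MGF bound survives the limit without slackness. The cleanest route is to first approximate $\tilde F$ uniformly by piecewise-constant (simple) functions on a dyadic mesh, for which the discrete sum $M_K$ \emph{equals} the Poisson integral $\tilde M$ exactly (so no approximation error is incurred at the matrix-exponential level), and then let the simple-function approximation converge to $\tilde F$ uniformly, using continuity of the Laplace-transform bound in $\|\tilde F\|_{\infty}$. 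The remaining pieces---dilation, the scalar-Poisson matrix MGF identity, the operator inequality $F^n\preceq L^{n-2}F^2$, Tropp's master bound, and the Bernstein optimization---are routine once this discretization framework is in place.
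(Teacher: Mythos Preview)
The paper does not prove this statement: it is quoted as Lemma~5 of \citet{xu2025supp} in the Auxiliary Results section and used as a black box in the proofs of \Cref{lemma:poisson matrix Bernstein} and \Cref{thm:bernstein_beta_mixing_ppp_rect_nonstat}. There is therefore no proof in the paper to compare against.

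Your proposal is a correct and standard route to the result. The dilation reduces to the Hermitian case with the right variance proxy; the identity $\E\,e^{\theta(N-\mu)Y}=\exp\bigl(\mu(e^{\theta Y}-I-\theta Y)\bigr)$ for scalar Poisson $N$ and fixed Hermitian $Y$ is exactly right; the Loewner bound $e^{\theta Y}-I-\theta Y\preceq L^{-2}(e^{\theta L}-1-\theta L)\,Y^2$ together with Lieb/Tropp subadditivity then gives the master MGF bound, and the Bernstein optimization is routine. The only place to be slightly careful is the continuum step: the theorem assumes $F$ is merely measurable and bounded, so \emph{uniform} approximation by simple functions need not be available. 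The fix is painless: approximate $\tilde F$ by simple $\tilde F_n$ pointwise a.e.\ with $\|\tilde F_n(x)\|_{\op}\le L$. Since $\lambda$ is bounded on the compact $\mathbb X$, the process $X$ is almost surely finite, so $\sum_{x\in X}\tilde F_n(x)\to\sum_{x\in X}\tilde F(x)$ a.s., while dominated convergence gives $\int \tilde F_n\lambda\to\int\tilde F\lambda$ and $\nu_n\to\nu$; the tail bound then passes to the limit. With that adjustment your argument is complete.
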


\begin{theorem}[Theorem 1 in \citealt{banna2016bernstein}: Matrix Bernstein inequality for geometrically $\beta$-mixing self-adjoint matrices]
\label{thm:bernstein_beta_mixing_sa}
Let $\{X_i\}_{i=1}^n$ be a sequence of centered, self-adjoint random matrices in
$\mathbb R^{d\times d}$ with $\mathbb E[X_i]=0$ and satisfying
\[
\lambda_{\max}(X_i)\le M\qquad\text{a.s. for all }i=1,\dots,n.
\]
Let $\beta(k)$ be the $\beta$-mixing coefficients
of the sequence $\{X_i\}$, and assume geometric $\beta$-mixing: there exists
$c>0$ such that
\[
\beta(k)\le e^{-c(k-1)}\qquad\text{for all }k\ge 1.
\]
Define
\[
v^2
=
\sup_{\emptyset\neq K\subset\{1,\dots,n\}}
\frac{1}{|K|}
\,\lambda_{\max}\!\left(
\mathbb E\Big(\sum_{i\in K}X_i\Big)^2
\right),
\]
and
\[
\iota(c,n)
=
\frac{\log n}{\log 2}\max\!\left\{2,\frac{32\log n}{c\log 2}\right\}.
\]
Then for all $x>0$ and all integers $n>2$,
\[
\mathbb P\!\left(
\lambda_{\max}\!\left(\sum_{i=1}^n X_i\right)\ge x
\right)
\le
d\exp\!\left(
-\frac{C x^2}{v^2 n + c^{-1}M^2 + xM\,\iota(c,n)}
\right),
\]
where $C>0$ is a universal constant.
\end{theorem}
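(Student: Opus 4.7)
The plan is to combine a Bernstein-type blocking scheme for $\beta$-mixing sequences with Tropp's matrix Laplace transform machinery. The first step is to reduce to an independent setting. I would partition $\{1,\ldots,n\}$ into consecutive blocks of size $\ell$ (to be chosen of order $\log n / c$), group them into two alternating families $\{B_{2j-1}\}$ and $\{B_{2j}\}$, and invoke Berbee's coupling lemma iteratively to replace each block sum $S_j = \sum_{i \in B_j} X_i$ in one family by an independent copy $\widetilde S_j$ with coupling error at most $\beta(\ell) \le e^{-c(\ell-1)}$ in total variation. This produces two sums of independent self-adjoint matrices, each of size roughly $n/(2\ell)$, that together equal $\sum_{i=1}^n X_i$ except on an event of probability at most $(n/\ell)\,\beta(\ell)$, which becomes negligible upon choosing $\ell$ proportional to the logarithmic factor in $\iota(c,n)$.

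Next, I would apply Tropp's matrix Bernstein inequality to each independent block sum. The coupled block sums are bounded in spectral norm by $\ell M$ almost surely, and their matrix second moment is controlled by the subset-supremum quantity $v^2$: by the very definition of $v^2$, for every block $B_j$,
\begin{equation*}
\lambda_{\max}\!\Big(\mathbb E\big(\textstyle\sum_{i\in B_j} X_i\big)^2\Big) \;\le\; |B_j|\,v^2 \;\le\; \ell\,v^2,
\end{equation*}
so the total matrix variance of one family of decoupled blocks is at most $(n/2)\,v^2$. Tropp's inequality then yields, for each family, a tail bound of the form $d\exp\!\big(-Cx^2 / (n v^2 + \ell M x)\big)$. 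Combining the two families with a union bound and adding the $n\ell^{-1}\beta(\ell)$ coupling error produces a pre-tuning bound in which the Bernstein denominator contains $n v^2$ and $\ell M x$.

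The final step is to optimize the block length $\ell$. The coupling error $(n/\ell)\,e^{-c(\ell-1)}$ must be made sub-dominant relative to the Bernstein tail, which forces $\ell \gtrsim \log n / c$; the choice $\ell \asymp \iota(c,n)$ absorbs the failure probability into the leading Gaussian tail. Substituting this choice into $\ell M x$ reproduces the $x M\,\iota(c,n)$ factor in the denominator. The $c^{-1} M^2$ term arises from the residual variance contribution of the within-block dependent structure: even within a block, the boundedness $\lambda_{\max}(X_i)\le M$ only controls $\mathbb E\,S_j^2$ up to $\ell M^2$, and summing $\ell M^2 \cdot n/\ell = n M^2$ is improved to $c^{-1} M^2$ by chaining across blocks and exploiting the geometric decay of $\beta(k)$.

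The main obstacle I anticipate is the control of the variance proxy under blocking in its sharp, subset-supremum form: because $v^2$ is taken as a supremum over all $K\subset\{1,\ldots,n\}$ rather than just the full set, one must verify that the bound $\lambda_{\max}(\mathbb E\,S_j^2)\le \ell v^2$ is attained uniformly over blocks and that cross-block cancellation in $\mathbb E\,(\sum_j S_j)^2$ does not degrade when summing over the alternating partition. A second subtlety is that Berbee's coupling is classically stated for real- or Polish-valued random variables and must be transported to the matrix setting; since $X_i$ lives in a finite-dimensional (hence Polish) space, this is technically routine, but one must preserve the joint law of the entire block $\{X_i : i \in B_j\}$ rather than couple entries individually. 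Once both issues are handled, combining Tropp's matrix Bernstein bound for the decoupled sums with the coupling union bound and optimizing $\ell$ yields the stated inequality.
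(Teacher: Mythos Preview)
The paper does not prove this theorem. It is quoted verbatim as Theorem~1 of \citet{banna2016bernstein} and used as a black box in the proof of \Cref{thm:bernstein_beta_mixing_ppp_rect_nonstat}; there is no ``paper's own proof'' to compare against.

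That said, your sketch is in the right spirit but differs from what Banna--Merlev\`ede--Youssef actually do. They do not use a single alternating-block decomposition followed by one application of Tropp's matrix Bernstein. Instead they carry out a \emph{recursive} Cantor-type decomposition: at each of roughly $\log_2 n$ levels they remove a thin gap from the middle of every surviving block, apply Berbee's coupling across the gap, and control the matrix Laplace transform inductively via the subadditivity lemma for independent self-adjoint summands. The factor $\iota(c,n)=\frac{\log n}{\log 2}\max\{2,\frac{32\log n}{c\log 2}\}$ is exactly the accumulated gap width over all levels of this recursion, and the $c^{-1}M^2$ term comes from bounding the trace-mgf of the small residual blocks at the bottom of the recursion (each of size $O(1/c)$), not from ``chaining across blocks'' as you suggest. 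Your single-level scheme with $\ell\asymp\iota(c,n)$ would produce a denominator of the form $nv^2+\ell Mx$ but would not naturally generate the additive $c^{-1}M^2$ term with the correct scaling, and your explanation of that term (improving $nM^2$ to $c^{-1}M^2$ by exploiting $\beta$-decay) is not how it arises; those two quantities differ by a factor of $n$ and no amount of mixing can bridge that.
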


\begin{theorem}[Matrix Bernstein for geometrically $\beta$-mixing Poisson point process windows]
\label{thm:bernstein_beta_mixing_ppp_rect_nonstat}
Let $\{X^{(i)}\}_{i=1}^n$ be a sequence of (possibly nonstationary) Poisson point processes on a compact set
$\mathbb X\subset\mathbb R^d$ with intensity functions $\lambda_i:\mathbb X\to\mathbb R_+$.
Let $F:\mathbb X\to\mathbb R^{d_1\times d_2}$ be measurable and continuous, and assume
\[
\sup_{x\in\mathbb X}\|F(x)\|_{\op}\le L<\infty .
\]
For each $i$, define the centered rectangular matrix
\[
Y_i
= \sum_{x\in X^{(i)}}F(x)\;-\;\int_{\mathbb X}F(x)\lambda_i(x)\,dx
\in\mathbb R^{d_1\times d_2},
\]
and define the (rectangular) variance proxy
\[
\nu_i
=\max\Big\{
\big\|\int_{\mathbb X}F(x)F(x)^\top \lambda_i(x)\,dx\big\|_{\op},\;
\big\|\int_{\mathbb X}F(x)^\top F(x)\lambda_i(x)\,dx\big\|_{\op}
\Big\},
\qquad
\nu_{\max}=\max_{1\le i\le n}\nu_i.
\]
Assume the dependence across windows satisfies the $\beta$-mixing condition: the sequence $\{N^{(i)}\}_{i\ge1}$ is $\beta$-mixing with
coefficients $\{\beta(k)\}_{k\ge1}$ such that for some constants $c>0$,
\[
\beta(k)\le e^{-c(k-1)},\qquad k\ge 1.
\]
Then, with probability at least $1-\delta$,
\begin{align}
\Big\|\sum_{i=1}^n Y_i\Big\|_{\op}
\;\le\;
C\sqrt{n}\big(\sqrt{\nu_{\max}} + L\big)\log^2\Big(\frac{n(d_1+d_2)}{\delta}\Big),
\end{align}
where $C>0$ is a universal constant.
\end{theorem}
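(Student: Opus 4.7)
The plan is to reduce the rectangular, unbounded, $\beta$-mixing problem to the self-adjoint, bounded, $\beta$-mixing matrix Bernstein inequality of \citet{banna2016bernstein} stated as \Cref{thm:bernstein_beta_mixing_sa}. First, I would apply the Hermitian dilation
$$
\tilde Y_i = \begin{pmatrix} 0 & Y_i \\ Y_i^\top & 0 \end{pmatrix}
\in \mathbb R^{(d_1+d_2)\times(d_1+d_2)},
$$
so that $\bigl\|\sum_{i=1}^n Y_i\bigr\|_{\op} = \lambda_{\max}\bigl(\sum_{i=1}^n \tilde Y_i\bigr)$. Because each $\tilde Y_i$ is a measurable function of $X^{(i)}$, the dilated sequence inherits the geometric $\beta$-mixing of $\{X^{(i)}\}$ with the same rate $c$.

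Second, since \Cref{thm:bernstein_beta_mixing_sa} requires almost-sure boundedness of the increments while the Poisson count $|X^{(i)}|$ is unbounded, I would truncate. Set $T = C_0 \log(n(d_1+d_2)/\delta)$. Because $\|\lambda_i\|_\infty$ is uniformly bounded on the compact set $\mathbb X$, the count $|X^{(i)}|$ is Poisson with bounded mean, so a Chernoff tail bound gives $\p(|X^{(i)}|>T)\le \delta/(2n)$. Define $\mathcal A_i = \{|X^{(i)}|\le T\}$ and the centered truncated dilations
$$
\tilde Z_i = \mathbf 1_{\mathcal A_i}\tilde Y_i - \E\bigl[\mathbf 1_{\mathcal A_i}\tilde Y_i\bigr],
$$
which are uniformly bounded in operator norm by $M := C_1 L\log(n(d_1+d_2)/\delta)$ and still geometrically $\beta$-mixing.

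Third, I would invoke \Cref{thm:bernstein_beta_mixing_sa} on $\{\tilde Z_i\}$, which needs bounds on $M$ (done) and on the variance proxy $v^2 = \sup_K |K|^{-1}\lambda_{\max}\bigl(\E(\sum_{i\in K}\tilde Z_i)^2\bigr)$. For the diagonal, Campbell's identity (\Cref{lemma:campbell}) gives $\E[Y_i Y_i^\top]=\int F(x)F(x)^\top\lambda_i(x)\,dx$ and $\E[Y_i^\top Y_i]=\int F(x)^\top F(x)\lambda_i(x)\,dx$, so $\|\E[\tilde Y_i^2]\|_{\op}=\nu_i\le \nu_{\max}$, and the truncation bias is negligible in $T$. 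For the off-diagonals I would invoke Berbee's coupling lemma (after extending the probability space if necessary): for each pair $i<j$ it produces an independent copy $\tilde Z_j'$ of $\tilde Z_j$ with $\p(\tilde Z_j\ne \tilde Z_j')\le \beta(|i-j|)$, and combined with $\E\tilde Z_i=0$, which forces $\E[\tilde Z_i\tilde Z_j']=0$, this yields the dimension-free bound
$$
\bigl\|\E[\tilde Z_i\tilde Z_j]\bigr\|_{\op}
= \bigl\|\E[\tilde Z_i(\tilde Z_j-\tilde Z_j')]\bigr\|_{\op}
\le 2M^2\beta(|i-j|).
$$
Summing over $j$ using $\beta(k)\le e^{-c(k-1)}$ gives $v^2 \le \nu_{\max}+CM^2/c$. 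Plugging into \Cref{thm:bernstein_beta_mixing_sa} with $\iota(c,n)\asymp \log^2 n$ and calibrating the failure probability to $\delta/4$ yields
$$
\lambda_{\max}\Bigl(\sum_{i=1}^n \tilde Z_i\Bigr)
\le C\Bigl(\sqrt{n\,v^2\log((d_1+d_2)/\delta)}+M\log^2(n)\log((d_1+d_2)/\delta)\Bigr)
\le C\sqrt n\,(\sqrt{\nu_{\max}}+L)\log^2\!\bigl(n(d_1+d_2)/\delta\bigr).
$$

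Finally, I would undo the truncation: on $\bigcap_i \mathcal A_i$, which has probability $\ge 1-\delta/2$, the sum $\sum_i Y_i$ coincides with the raw truncated sum up to the deterministic bias $\sum_i \E[\mathbf 1_{\mathcal A_i^c}Y_i]$, whose operator norm is controlled by $nL\,\E[|X^{(i)}|\mathbf 1_{|X^{(i)}|>T}]$ and is exponentially small in $T$ by the Poisson tail bound, hence absorbed into the stated constants. The main obstacle is Step~3's dimension-free control of the matrix cross-covariances: a naive entrywise scalar $\beta$-mixing covariance inequality would introduce spurious factors of $d_1+d_2$ (through Frobenius-vs-operator-norm or trace bounds) that would break the claimed rate. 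Berbee's coupling resolves this by realizing $\beta(|i-j|)$ as the probability of a coupling failure, which lets the operator norm pass inside a single expectation in one step and eliminates the dimension dependence.
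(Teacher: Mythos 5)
Your proof follows the same high-level architecture as the paper's: Hermitian dilation to reduce to self-adjoint matrices, truncation to enforce almost-sure boundedness, application of the self-adjoint $\beta$-mixing matrix Bernstein inequality of \Cref{thm:bernstein_beta_mixing_sa}, control of the diagonal variance via the second-order Campbell/Mecke identity, control of the off-diagonal cross-covariances via Berbee's coupling lemma, and finally an untruncation step that absorbs the bias term. The one genuine technical difference is the truncation criterion. You truncate on the Poisson count $|X^{(i)}|$ at level $T\asymp\log(n(d_1+d_2)/\delta)$ and justify it with a scalar Chernoff bound on the count, which in turn requires a uniform bound on $\int_{\mathbb X}\lambda_i$; this is available through $\sup_i\|\lambda_i\|_\infty<\infty$ in Definition~\ref{def:model_ppp_beta_mixing}{(A)}, but it makes your bound $M\asymp L\log(\cdot)$ carry a hidden dependence on that bound. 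The paper instead truncates directly on $\|\overline Y_i\|_{\op}$ at level $U_*\asymp\sqrt{\nu_{\max}\log(\cdot)}+L\log(\cdot)$, justified by the single-window Poisson matrix Bernstein inequality (\Cref{thm:bernstein_ppp}); this ties the truncation level to the natural tail scale of $\|\overline Y_i\|_{\op}$ and keeps $M$ expressed solely in $(\nu_{\max},L)$. Your route is slightly more elementary in that it avoids re-invoking a matrix concentration inequality for the per-window tail, while the paper's choice makes the variance proxy $v_{U_*}^2$ slot directly into the Banna et al.\ scaling. Both reach the stated bound. Your closing observation about why Berbee coupling is the right tool --- it realizes $\beta(|i-j|)$ as a coupling-failure probability so the operator norm passes inside a single expectation and no $(d_1+d_2)$ factor is leaked from entrywise covariance bounds --- is exactly the content of the paper's Step~5.
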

\begin{proof}[Proof of \Cref{thm:bernstein_beta_mixing_ppp_rect_nonstat}]
\noindent\textbf{Step 1.}
For each $i$, define the Hermitian dilation
\[
\overline{Y}_i=\begin{pmatrix}0&Y_i\\ Y_i^\top&0\end{pmatrix}.
\]
Then $\overline{Y}_i$ is self-adjoint and
\[
\|\overline{Y}_i\|_{\op}=\|Y_i\|_{\op},
\qquad
\Big\|\sum_{i=1}^n\overline{Y}_i\Big\|_{\op}
=\Big\|\overline{\sum_{i=1}^n Y_i}\Big\|_{\op} = \Big\|\sum_{i=1}^n Y_i\Big\|_{\op}.
\]
Hence it suffices to control $\|\sum_{i=1}^n\overline{Y}_i\big\|_{\op}$. We consider the following decomposition
\[
\overline{Y}_i
=
Z_i + \overline{Y}_i\,\mathbf{1}\{\|\overline{Y}_i\|_{\op}>U\} = Z_i^0 + \mathbb{E}Z_i + \overline{Y}_i\,\mathbf{1}\{\|\overline{Y}_i\|_{\op}>U\},
\]
where, for a fixed $U>0$,
\(
Z_i=\overline{Y}_i\,\mathbf{1}\{\|\overline{Y}_i\|_{\op}\le U\}
\)
and
\(
Z_i^0=Z_i-\mathbb{E}Z_i.
\)

\smallskip
\noindent\textbf{Step 2.}
For all $i$,
\[
\lambda_{\max}(Z_i^0)\le \|Z_i^0\|_{\op}\le \|Z_i\|_{\op}+\|\mathbb{E}Z_i\|_{\op}\le U+U=2U
\quad\text{a.s.}
\]
Also, $\mathbb{E}Z_i^0=0$ by construction.
Moreover, each $Z_i^0$ is a measurable function of $N^{(i)}$ only.
By \Cref{def:model_ppp_beta_mixing}, for any $k\ge 1$,
\[
\beta_{Z_i^0}(k)
\le \beta(k) \le e^{-c(k-1)}.
\]

We apply \Cref{thm:bernstein_beta_mixing_sa} to $\{Z_i^0\}_{i \in \mathbb{Z}}$ with $M=2U$. Note that
\[
\Big\|\sum_{i=1}^n Z_i^0\Big\|_{\op}
=
\max\Big\{
\lambda_{\max}\big(\sum_{i=1}^n Z_i^0\big),\;
\lambda_{\max}\big(-\sum_{i=1}^n Z_i^0\big)
\Big\}.
\]
Applying the same bound to $(-Z_i^0)$ and a union bound yields
\[
\mathbb{P}\Big(\Big\|\sum_{i=1}^n Z_i^0\Big\|_{\op}\ge x\Big)
\le
2(d_1+d_2)\exp\Big(
-\frac{C x^2}{v_U^2 n + 4c^{-1}U^2 + 2Ux\iota(c,n)}
\Big).
\]
Equivalently, we have
with probability at least $1-\delta/2$,
\[
\Big\|\sum_{i=1}^n Z_i^0\Big\|_{\op}
\le
C_0\left(
\sqrt{\Big(n v_U^2 + 4c^{-1}U^2\Big)\log\frac{4(d_1+d_2)}{\delta}}
\;+\;
2U\iota(c,n)\log\frac{4(d_1+d_2)}{\delta}
\right).
\]

\smallskip
\noindent\textbf{Step 3.}
Summing over $i$ gives
\[
\sum_{i=1}^n\overline{Y}_i
=
\sum_{i=1}^n Z_i^0
\;+\;
\sum_{i=1}^n\mathbb{E}Z_i
\;+\;
\sum_{i=1}^n \overline{Y}_i\,\mathbf{1}\{\|\overline{Y}_i\|_{\op}>U\}.
\]
We denote the truncation bias by
\[
b(U) = \max_{i = 1}^n \Big\|\mathbb E\big[\overline{Y}_i\,\mathbf{1}\{\|\overline{Y}_i\|_{\op}>U\}\big]\Big\|_{\op}.
\]
Note that $\mathbb{E}\overline{Y}_i=0$ by \Cref{lemma:campbell}, since each $N^{(i)}$ is a PPP with intensity $\lambda_i$.
Hence $\mathbb{E}Z_i=-\mathbb{E}\big[\overline{Y}_i\mathbf{1}\{\|\overline{Y}_i\|_{\op}>U\}\big]$.
Therefore,
\[
\Big\|\sum_{i=1}^n\mathbb{E}Z_i\Big\|_{\op}
\le
\sum_{i=1}^n
\Big\|\mathbb{E}\big[\overline{Y}_i\mathbf{1}\{\|\overline{Y}_i\|_{\op}>U\}\big]\Big\|_{\op}
\le
nb(U).
\]
Define the tail event
\(
\mathcal{E}_U = \{\max_{1\le i\le n}\|\overline{Y}_i\|_{\op}\le U\}.
\)
On $\mathcal{E}_U$, the random tail term vanishes:
\(
\sum_{i=1}^n \overline{Y}_i\mathbf{1}\{\|\overline{Y}_i\|_{\op} > U \}=0.
\)
Hence on $\mathcal{E}_U$,
\[
\Big\|\sum_{i=1}^n\overline{Y}_i\Big\|_{\op}
\le
\Big\|\sum_{i=1}^n Z_i^0\Big\|_{\op}
+
nb(U).
\]

\smallskip
\noindent\textbf{Step 4.} Recall that 
\[
Y_i
= \sum_{x\in X^{(i)}} F(x) \;-\;\int_{\mathbb{X}} F(x)\lambda_i(x)\,dx
\in\mathbb{R}^{d_1\times d_2}.
\]
By \Cref{thm:bernstein_ppp}, we have that
\[
\mathbb P\big(\|\bar Y_i\|_{\op}\ge t\big)
=
\mathbb P\big(\|Y_i\|_{\op}\ge t\big)
\le
(d_1+d_2)\exp\Big(-\frac{t^2}{2(\nu_i+Lt/3)}\Big)
\le
(d_1+d_2)\exp\Big(-\frac{t^2}{2(\nu_{\max}+Lt/3)}\Big).
\]
For $\mathcal E_U$, apply a union bound:
\[
\mathbb P(\mathcal E_U^c)
=
\mathbb P\Big(\max_{1\le i\le n}\|\bar Y_i\|_{\op}>U\Big)
\le
\sum_{i=1}^n \mathbb P(\|\bar Y_i\|_{\op}>U)
\le
n(d_1+d_2)\exp\Big(-\frac{U^2}{2(\nu_{\max}+LU/3)}\Big).
\]
Set $U_* = \sqrt{2\nu_{\max}u} + 2Lu/3$ and $u = \log\big(2n(d_1+d_2)/\delta\big)$, we have
\begin{equation}\label{eq:U_*1}
\frac{U_*^2}{2(\nu_{\max}+LU_*/3)}\ge u,
\end{equation}
so
\begin{equation}\label{eq:U_*2}
\mathbb P(\mathcal E_{U_*}^c)
\le
n(d_1+d_2)\exp(-u) = \frac{\delta}{2}.
\end{equation}

\smallskip
\noindent\textbf{Step 5.} Recall that
\[
v_{U_*}^2
= \sup_{\varnothing\neq K\subset\{1,\dots,n\}}
\frac{1}{|K|}\,
\lambda_{\max}\Big(\mathbb{E}\big(\sum_{i\in K}Z_i^0\big)^2\Big).
\]
Fix a nonempty set $K\subset\{1,\dots,n\}$. Expand
\[
\mathbb{E}\Big(\sum_{i\in K}Z_i^0\Big)^2
=
\sum_{i\in K}\mathbb{E}(Z_i^0)^2
\;+\;
\sum_{\substack{i,j\in K\\ i\neq j}}
\mathbb{E}(Z_i^0Z_j^0).
\]
Thus,
\[
\lambda_{\max}\Big(\mathbb{E}\big(\sum_{i\in K}Z_i^0\big)^2\Big)
\le
\sum_{i\in K}\big\|\mathbb{E}(Z_i^0)^2\big\|_{\op}
+
\sum_{\substack{i,j\in K\\ i\neq j}}\big\|\mathbb{E}(Z_i^0Z_j^0)\big\|_{\op}.
\]
\emph{(i) Diagonal terms.}
Since $Z_i^0$ is self-adjoint, we have
\[
\mathbb{E}(Z_i^0)^2 = \mathbb{E}Z_i^2-(\mathbb{E}Z_i)^2
\preceq \mathbb{E}(Z_i)^2,
\]
and thus
\[
\big\|\mathbb{E}(Z_i^0)^2\big\|_{\op}\le \big\|\mathbb{E}Z_i^2\big\|_{\op} \le \big\|\mathbb{E}\overline{Y}_i^2\big\|_{\op} = \nu_i \le \nu_{\max}.
\]
\emph{(ii) Off-diagonal terms.}
For any $j>i$,
by Berbee's coupling lemma for $\beta$-mixing process,
there exists $(Z_j^0)'$ such that $(Z_j^0)'\stackrel{d}{=}Z_j^0$, $(Z_j^0)'$ is independent of $Z_i^0$,
and $\mathbb{P}((Z_j^0)'\neq Z_j^0)\le \beta(j-i)$.
Since $\|Z_i^0\|_{\op}\le 2U_*$ a.s., we have
\[
\big\|\mathbb{E}(Z_i^0Z_j^0)\big\|_{\op}
=
\big\|\mathbb{E}\big[Z_i^0(Z_j^0-(Z_j^0)')\big]\big\|_{\op}
\le
\mathbb{E}\big[\|Z_i^0\|_{\op}\,\|Z_j^0-(Z_j^0)'\|_{\op}\big]
\le
8U_*^2\,\beta(j-i),
\]
where the first inequality follows from Jensen's inequality.
Therefore,
\[
\sum_{i\neq j}\big\|\mathbb{E}(Z_i^0Z_j^0)\big\|_{\op}
\le
2\sum_{i<j} 8U_*^2\beta(j-i)
=
16U_*^2\sum_{h\ge 1} \#\{(i,j)\in K^2: j-i=h\}\,\beta(h)
\le
16U_*^2\,|K|\sum_{h\ge 1}\beta(h).
\]
Combining \emph{(i)} and \emph{(ii)},
\[
\lambda_{\max}\Big(\mathbb{E}\big(\sum_{i\in K}X_i\big)^2\Big)
\le
|K|\nu_{\max} + 16U_*^2|K| \sum_{h\ge 1}\beta(h) \le |K|\nu_{\max} + \frac{16U_*^2|K|}{1 - e^{-c}},
\]
where the last inequality follows from \Cref{def:model_ppp_beta_mixing}.
Divide by $|K|$ and take the supremum over $K$:
\[
v_{U_*}^2 \le \nu_{\max} + \frac{16U_*^2}{1 - e^{-c}}.
\]

\smallskip
\noindent\textbf{Step 6.} Recall that 
\begin{align*}
b(U_*)
&=
\max_{1\le i\le n}
\Big\|\mathbb E\big[\bar Y_i\,\mathbf 1\{\|\bar Y_i\|_{\op}>U_*\}\big]\Big\|_{\op} =
\max_{1\le i\le n}\int_{U_*}^\infty \mathbb P(\|\bar Y_i\|_{\op}\ge t)\,dt\\
&\le
(d_1+d_2)\int_{U_*}^\infty \exp\Big(-\frac{t^2}{2(\nu_{\max}+Lt/3)}\Big)\,dt.
\end{align*}
Before computing the above integral, we simplify the notation by letting $a =\nu_{\max} > 0$, $b=L/3 > 0$ and $\psi(t)= t^2/(2(a+bt))$. 
A direct computation gives
\[
\psi'(t)=\frac{t(2a+bt)}{2(a+bt)^2},
\qquad
\psi''(t)=\frac{a^2}{(a+bt)^3}>0,
\]
so $\psi$ is convex and $\psi'$ is nondecreasing.
By convexity, for all $t\ge U_*$,
\(
\psi(t)\ge \psi(U_*)+\psi'(U_*)(t-U_*).
\)
Therefore,
\[
\int_{U_*}^\infty e^{-\psi(t)}dt
\le
e^{-\psi(U_*)}\int_0^\infty e^{-\psi'(U_*)s}\,ds
=
\frac{e^{-\psi(U_*)}}{\psi'(U_*)}.
\]
Note that
\[
\frac{1}{\psi'(U_*)}
=
\frac{2(a+bU_*)^2}{U_*(2a+bU_*)}
=
\frac{2(a+bU_*)}{U_*}\cdot \frac{a+bU_*}{2a+bU_*}
\le
\frac{2(a+bU_*)}{U_*},
\]
since $2a+bU_*\ge a+bU_*$.
We have
\begin{align*}
b(U_*) &\le (d_1+d_2)\int_{U_*}^\infty e^{-\psi(t)}dt
\le
(d_1+d_2)\frac{2(\nu_{\max}+LU_*/3)}{U_*}\exp\Big(-\frac{U_*^2}{2(\nu_{\max}+LU_*/3)}\Big)\\
&\le 2\Big(\frac{\nu_{\max}}{U_*}+\frac{L}{3}\Big)(d_1+d_2)\exp(-u)\\
&\le \Big(\frac{\nu_{\max}}{U_*}+\frac{L}{3}\Big)\frac{\delta}{n},
\end{align*}
where the third inequality follows from \eqref{eq:U_*1}, the fourth inequality follows from \eqref{eq:U_*2}.
Finally, we have
\begin{align*}
nb(U_*) \le \delta\Big(\sqrt{\frac{\nu_{\max}}{2\log(2nd/\delta)}}+\frac{L}{3}\Big),
\end{align*}

\smallskip
\noindent\textbf{Step 7.} Combining the above steps, we have with probability at least $1 - \delta$
\begin{align*}
&\Big\|\sum_{i=1}^n\overline{Y}_i\Big\|_{\op}\\
&\le
C_0\left(
\sqrt{\Big(n v_{U_*}^2 + 4c^{-1}U_*^2\Big)\log\frac{4(d_1+d_2)}{\delta}}
\;+\;
2U_*\iota(c,n)\log\frac{4(d_1+d_2)}{\delta}
\right)
\\
&\quad+
\delta\Big(\sqrt{\frac{\nu_{\max}}{2\log(2n(d_1+d_2)/\delta)}}+\frac{L}{3}\Big)\\
&\le C_1\bigg(
\sqrt{n\Big(\nu_{\max}\log(2n(d_1+d_2)/\delta) + L^2\log^2(2n(d_1+d_2)/\delta)\Big)\log\frac{4d}{\delta}}\\
&\quad\quad\quad +
\Big(\sqrt{\nu_{\max}\log(2n(d_1+d_2)/\delta)} + L\log(2n(d_1+d_2)/\delta)\Big)\iota(c,n)\log\frac{4d}{\delta}
\bigg)\\
&\le C_1\Big(\sqrt{\nu_{\max}\log(2n(d_1+d_2)/\delta)} + L\log(2n(d_1+d_2)/\delta)\Big)\bigg(
\sqrt{n\log\frac{4(d_1+d_2)}{\delta}} + \iota(c,n)\log\frac{4(d_1+d_2)}{\delta}
\bigg)\\
&\le C_2\sqrt{n}\big(\sqrt{\nu_{\max}} + L\big)\log^2\Big(\frac{n(d_1+d_2)}{\delta}\Big).
\end{align*}
\end{proof}

\section{Online 1D Poisson point processes change detection}
\label{section:1d}

For completeness, we introduce a simplified version   of \Cref{algorithm:1d} suitable for detecting change points for Poisson Process  time series data in 1D. 

Let $\{\phi_\mu \}_{\mu=1}^\infty$ be a collection of orthonormal basis functions of $\lt( \Omega)$, where $\Omega \subset \mathbb R$ is a compact interval. Let $M$ be a positive integer. 
For each process $X^{(i)} \subset\Omega $, define its intensity matrix by
\begin{equation}
\label{eq:A-i 1d}
\widehat{\mathcal V}^{(i)}\in\mathbb R^{M },\quad 
\widehat{\mathcal V}^{(i)}_{ \mu}
= \sum_{x  \in X^{(i)}} \phi_\mu(x) .
\end{equation}

\begin{theorem}
\label{theorem:main_1d}
Assume $d=1$ and that the time series   $\{X^{(i)}\}_{i\in\mathbb Z}$ on a compact domain
$\mathbb X\subset\mathbb R$ satisfies \Cref{def:model_ppp_beta_mixing}.  Let the univariate orthonormal basis
$\{\phi_\mu\}_{\mu\ge1}$ in \eqref{eq:A-i 1d} be the Legendre polynomials, and suppose the training size
$N_{\train}$ is sufficiently large.

\smallskip
\noindent\textbf{(a) No change point.}
Assume \textbf{(M0)} in \Cref{def:model_ppp_beta_mixing} holds with intensity $\lambda^*$ and
$\|\lambda^*\|_{W^{2,\gamma}}<\infty$.
If the threshold constant $\mathcal C_\alpha$ in \Cref{algorithm:1d} is chosen sufficiently large,
then with probability at least $1-\alpha$, \Cref{algorithm:1d} never raises an alarm over the entire time horizon.

\smallskip
\noindent\textbf{(b) Single change point.}
Assume \textbf{(M1)} in \Cref{def:model_ppp_beta_mixing} holds with change point $\bb\ge N_{\train}$ and intensities
$\lambda^*,\lambda_a^*$ satisfying $\|\lambda^*\|_{W^{2,\gamma }}<\infty$ and $\|\lambda_a^*\|_{W^{2, \gamma }}<\infty$.
Let $\kappa=\|\lambda^*-\lambda_a^*\|_{\lt}$ and define
\begin{equation}
\label{eq:detection delay bound main 1d}
\Delta \;=\; \Big\lceil C_{\lag}\,\big(\log(\bb)/\kappa\big)^{\,2+1/\gamma}\Big\rceil,
\end{equation}
where $C_{\lag}$ is a sufficiently large constant depending only on $\mathcal C_\alpha$.
If the window size satisfies $W\ge \Delta$, then with probability at least $1-\bb^{-3}$,
\Cref{algorithm:1d} raises an alarm within the time interval $(\bb,\bb+\Delta]$.
\end{theorem}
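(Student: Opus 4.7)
The plan is to mirror the proof of \Cref{theorem:main}, replacing the matrix CUSUM statistic $\widehat{\mathcal M}^{(i)}$ and its restricted SVD by the vector statistic $\widehat{\mathcal V}^{(i)}$ and its Euclidean norm. Because the effective object is a vector, the rank parameter $r$ and the tail-singular-value assumption \eqref{eq:snr for tail main} are unnecessary, and the bias--variance trade-off reduces to one between $M^{-\gamma}$ and $\sqrt{M/(n-t)}$. Concretely, for each window $i$ and truncation $m$, let $\widehat{\mathcal V}^{(i),m}\in\mathbb R^m$ be as in \eqref{eq:A-i 1d}, define the CUSUM vector
\begin{equation*}
\widehat v_{n,t}^{m}
=\frac{1}{t}\sum_{i=1}^{t}\widehat{\mathcal V}^{(i),m}
-\frac{1}{n-t}\sum_{i=t+1}^{n}\widehat{\mathcal V}^{(i),m},
\end{equation*}
and adopt the adaptive truncation $m_{n,t}=\lceil (n-t)^{1/(2\gamma+1)}\rceil$ and threshold $\tau_{n,t}=\mathcal C_\alpha (n-t)^{-\gamma/(2\gamma+1)}\log(n)$.

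The key step is a one-dimensional analog of \Cref{lemma:main deviation}: under \textbf{(M1)} with $N_1=t$, $N_2=n-t$, for any $\epsilon>0$, with probability at least $1-2\epsilon N^{-3}$,
\begin{equation*}
\big|\,\|\widehat v_{n,t}^{\,m}\|_2-\|\lambda^*-\lambda_a^*\|_{\lt}\,\big|
\le C_1 m^{-\gamma}+C_\epsilon\log^2(N)\sqrt{m/N_2}.
\end{equation*}
The bias term comes from the Legendre approximation bound \eqref{eq:density finite matrix error main} specialised to $d=1$, while the stochastic term is a direct application of \Cref{thm:bernstein_beta_mixing_ppp_rect_nonstat} with $F:\Omega\to\mathbb R^{m\times1}$, $F(x)_\mu=\phi_\mu(x)$: orthonormality of $\{\phi_\mu\}$ in $\lt(\Omega)$ yields $\nu_{\max}\le\|\lambda^*\|_\infty$, and boundedness of Legendre polynomials gives $L=\sqrt m\,C_\phi$. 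For part \textbf{(a)}, apply this bound with $\delta^*=0$, plug in $m=m_{n,t}$ to balance the two terms, and obtain $\|\widehat v_{n,t}\|_2\le C_\epsilon'(n-t)^{-\gamma/(2\gamma+1)}\log^2(n)$ with failure probability at most $2\epsilon n^{-3}$. Union-bounding over $1\le t<n$ gives $2\epsilon n^{-2}$; summing via $\sum_{n\ge1}n^{-2}=\pi^2/6$ and choosing $\epsilon=3\alpha/\pi^2$ delivers the overall false-alarm bound $\alpha$, exactly as in \Cref{lemma:type 1}.

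For part \textbf{(b)}, apply the deviation bound at $(n,t)=(\bb+\Delta,\bb)$, so that the pre- and post-change means coincide with the matrix representations of $\lambda^*$ and $\lambda_a^*$, giving
\begin{equation*}
\|\widehat v_{\bb+\Delta,\bb}\|_2
\ge \kappa - C_1 m_{\bb+\Delta,\bb}^{-\gamma}
- C_\epsilon \log(\bb)\sqrt{m_{\bb+\Delta,\bb}/\Delta}
\ge \kappa - C_2 \log(\bb)\,\Delta^{-\gamma/(2\gamma+1)}.
\end{equation*}
Choosing $\Delta\ge C_{\lag}(\log(\bb)/\kappa)^{2+1/\gamma}$ with $C_{\lag}$ sufficiently large depending on $\mathcal C_\alpha$ forces the right-hand side to exceed $\kappa/6$, while $\tau_{\bb+\Delta,\bb}\le2\mathcal C_\alpha\Delta^{-\gamma/(2\gamma+1)}\log(\bb)\le\kappa/6$, so an alarm is triggered within $(\bb,\bb+\Delta]$. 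This parallels \Cref{lemma:type 2}. The only mildly nontrivial piece is verifying the vector-valued Bernstein bound via \Cref{thm:bernstein_beta_mixing_ppp_rect_nonstat}, since the whole machinery of restricted SVD, Mirsky's inequality and the tail-singular-value truncation collapses to a single Euclidean-norm step in the univariate setting; this collapse is precisely what yields the improved exponent $2+1/\gamma$ in \eqref{eq:detection delay bound main 1d} compared with $2+(p\vee q)/\gamma$ in \eqref{eq:detection delay bound main}.
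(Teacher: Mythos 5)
Your proposal correctly reconstructs the argument the paper elides: the paper's own ``proof'' of \Cref{theorem:main_1d} is a one-line remark that it is similar to and simpler than \Cref{theorem:main}, so your sketch --- adapting \Cref{lemma:type 1}, \Cref{lemma:type 2}, and \Cref{lemma:main deviation} to the vector statistic, discarding the SVD truncation/Mirsky/tail-singular-value machinery, balancing $m^{-\gamma}$ against $\sqrt{m/(n-t)}$ to get $m_{n,t}=\lceil(n-t)^{1/(2\gamma+1)}\rceil$, union-bounding over $t<n<\infty$ with $\sum n^{-2}=\pi^2/6$, and applying the deviation bound at $(n,t)=(\bb+\Delta,\bb)$ --- is exactly what the omitted proof must look like, and the exponent bookkeeping $(2\gamma+1)/\gamma=2+1/\gamma$ matches \eqref{eq:detection delay bound main 1d}.

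One small slip worth flagging: you claim orthonormality gives $\nu_{\max}\le\|\lambda^*\|_\infty$, but this only controls the $m\times m$ block $\|\int FF^\top\lambda\|_{\op}$. The scalar block $\int F^\top F\,\lambda = \int\sum_{\mu=1}^m\phi_\mu^2(x)\lambda(x)\,dx$ is of order $m\|\lambda^*\|_\infty$, so in fact $\nu_{\max}\asymp m\|\lambda^*\|_\infty$ (this is consistent with the multivariate bound $\nu_{\max}\lesssim\|\lambda^*\|_\infty(m^p+m^q)$ at $d_1=m$, $d_2=1$). Your final deviation rate $\sqrt{m/N_2}\log^2(N)$ is nonetheless correct, because either way $\sqrt{\nu_{\max}}+L\lesssim\sqrt{m}$ in \Cref{thm:bernstein_beta_mixing_ppp_rect_nonstat} --- your $L=\sqrt{m}\,C_\phi$ already delivers the $\sqrt m$ factor. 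So the error is cosmetic, not structural, but the justification of the variance proxy should be corrected before this could be inserted as a full proof.
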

\begin{proof}
    The proof of \Cref{theorem:main_1d} is similar and simpler than \Cref{theorem:main}, and will be omitted for brevity. 
\end{proof}

\begin{algorithm}[tb]
   \caption{Online 1D PPP change detection}
   \label{algorithm:1d}
\begin{algorithmic}
   \STATE {\bfseries Input:} Smoothness parameter $\gamma>0$; window size $W$; threshold constant $\mathcal C_\alpha$
   
   \STATE
   \STATE {\scriptsize $\blacktriangleright$} \textbf{Initialization Stage}
   \STATE $M \leftarrow \left\lceil W^{1/(2\gamma+1)} \right\rceil$
   \FOR{$k=1$ {\bfseries to} $W$}
      \STATE $L[k] \leftarrow \sum_{i=1}^{N_{\text{train}}-W+k-1}\widehat{\mathcal V}^{(i)} {\in \mathbb{R}^M}$ \quad (computed via \eqref{eq:A-i 1d})
   \ENDFOR

   \STATE
   \STATE {\scriptsize $\blacktriangleright$} \textbf{Detection Stage}
   \STATE $\mathtt{ALARM} \leftarrow \textsc{False}$
   \FOR{$j = N_{\text{train}}+1, N_{\text{train}}+2, \ldots$}
      \FOR{$k=1$ {\bfseries to} $W-1$}
         \STATE $L[k] \leftarrow L[k+1]$
      \ENDFOR
      \STATE $L[W] \leftarrow L[W] + \widehat{\mathcal V}^{(j-1)}$

      \FOR{$k=1$ {\bfseries to} $W$}
         \STATE $R[k] \leftarrow \sum_{i=j-W+k}^{j}\widehat{\mathcal V}^{(i)}$
      \ENDFOR

      \FOR{$k=1$ {\bfseries to} $W$}
         \STATE $n_1 \leftarrow j-W-1+k$
         \STATE $n_2 \leftarrow W-k+1$
         \STATE $\mathcal D \leftarrow n_1^{-1}L[k] - n_2^{-1}R[k]$  
         
         \IF{$\| \mathcal D\| > \mathcal C_\alpha \Big(\dfrac{1}{n_2}\Big)^{\gamma/(2\gamma+1)} \log (j)$}
            \STATE $\mathtt{ALARM}\leftarrow\textsc{True}$
            \STATE \textbf{break}
         \ENDIF
      \ENDFOR
   \ENDFOR
\end{algorithmic}
\end{algorithm}

\section{Additional Numerical Studies}
\label{section:additional-num}

In this section, we report two additional simulation studies that supplement the main-paper experiments. The first is a $10$-dimensional Poisson point process (PPP) experiment in which the intensity functions are \emph{not} exactly low-rank, evaluating the proposed method against three baselines including a neural-network-based detector. The second is a robustness analysis with respect to the rank parameter $r$ and the coordinate partition.

\subsection{10D Intensity with Non-Low-Rank Difference}
\label{section:10d-non-lowrank}

We generate a temporally dependent 10D PPP time series on $[0,1]^{10}$, where the intensity function changes from
\[
\lambda^*(x) \;=\; z_t^+\bigl(1+\sin(\pi \textstyle\sum_{j=1}^{10} x_j)\bigr)
\quad\text{to}\quad
\lambda^*_a(x) \;=\; z_t^+\bigl(1+\cos(\pi \textstyle\sum_{j=1}^{10} x_j)\bigr)
\]
at the change point $\bb=1400$, with $x\in[0,1]^{10}$. Temporal dependence is introduced through an autoregressive intensity scale $\{z_t\}$. Note that neither $\lambda^*$ nor $\lambda^*_a$ admits an exact finite-rank representation, so the matricized intensities have an infinite (geometrically decaying) singular-value tail. We use $N_{\train}=1200$ pre-change samples, $N_{\text{total}}=1800$ samples in total, and average over $100$ Monte Carlo replications.

In addition to the \textbf{Matrix}, \textbf{MMD}, and \textbf{KIE} detectors used in the main paper, we include a fourth, neural-network-based density change-point detector adapted from~\citet{gong2022neural}, which we refer to as the \textbf{NN-CUSUM} detector. This detector trains a small permutation-invariant network online to discriminate the current window from pre-change reference windows and monitors the CUSUM of its held-out score; we use the authors' default architecture and optimizer settings, fixed once and held constant across replications. As in the main-paper experiments, all thresholds are calibrated by the block-permutation procedure described in \Cref{remark:tuning parameters}, and tuning parameters of the competing detectors follow their authors' default choices.

\Cref{tab:sim_10d_nonlowrank} reports the false-alarm rate, correct-detection rate, no-alarm rate, and the average detection delay (ADD) conditional on correct detection. \Cref{fig:simu_tradeoff_10d} shows the corresponding FAP--ADD trade-off curves obtained by sweeping the threshold parameter for each method. The proposed Matrix detector achieves a substantially shorter detection delay than all three competitors at comparable false-alarm rates, even though the underlying intensity difference is not exactly low-rank.

\begin{table}[ht]
\centering
\caption{Simulation results for the dependent 10D PPP experiment on $[0,1]^{10}$ with the change occurring at time $1400$. We use $N_{\train}=1200$, $N_{\text{total}}=1800$, and $100$ Monte Carlo replications. ADD and SD are reported conditional on correct detection.}
\label{tab:sim_10d_nonlowrank}
\begin{tabular}{lcccc}
\toprule
Metric & Matrix & MMD & KIE & NN-CUSUM \\
\midrule
False Alarm        & 5\%          & 6\%             & 4\%             & 4\% \\
Correct Detection  & 95\%         & 88\%            & 90\%            & 89\% \\
No Alarm           & 0\%          & 6\%             & 6\%             & 7\% \\
ADD (SD)           & 24.21 (5.86) & 266.51 (76.57)  & 284.33 (50.28)  & 240.10 (55.81) \\
ADD (no-alarm removed) & 24.21    & 263.73          & 282.93          & 235.39 \\
\bottomrule
\end{tabular}
\end{table}

\begin{figure}[ht]
  \centering
  \includegraphics[width=0.7\linewidth]{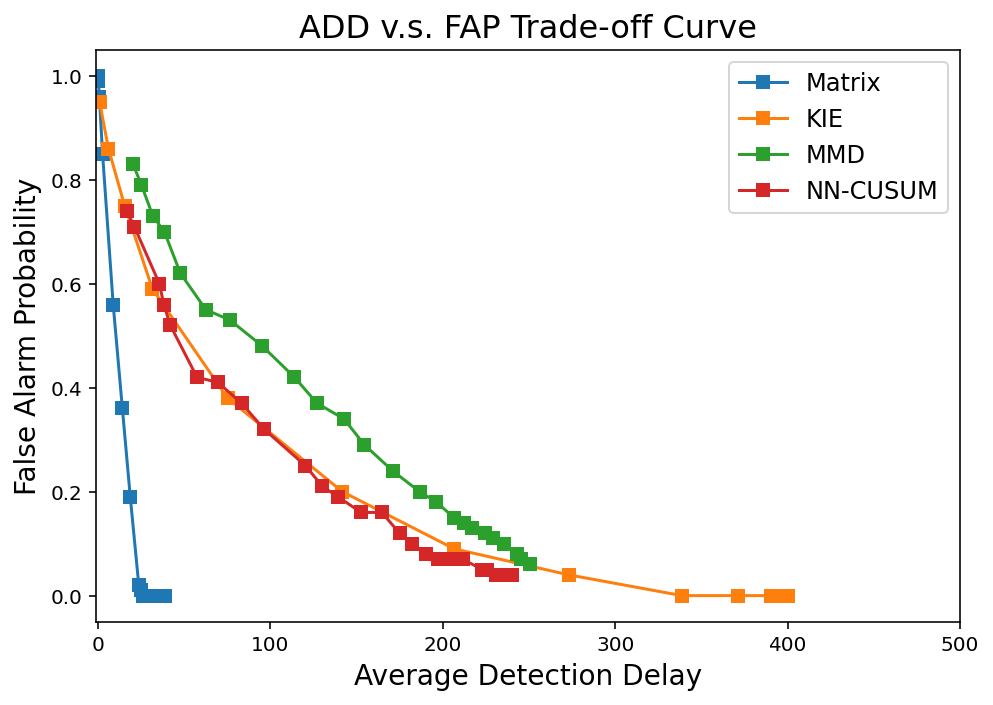}
  \caption{FAP vs.\ ADD trade-off comparison among the four detectors under the 10D non-low-rank setting.}
  \label{fig:simu_tradeoff_10d}
\end{figure}

\paragraph{Cumulative false-alarm rate under \textbf{(M0)}.}
We additionally evaluate the cumulative false-alarm rate over time on a $10$-dimensional no-change scenario generated under the same temporally dependent pre-change model as above. \Cref{tab:sim_10d_cumfar} reports the cumulative false-alarm rates measured at the time grid $\{1200,1400,\ldots,2200\}$, and \Cref{fig:cumfar_10d} shows the corresponding curves. The Matrix, MMD, and NN-CUSUM detectors all maintain a cumulative false-alarm rate within $5\%$ throughout, whereas the KIE detector's false-alarm rate drifts upward over time.

\begin{table}[ht]
\centering
\caption{Cumulative false-alarm rates for the no-change dependent 10D PPP experiment, measured at time grids $\{1200,1400,\ldots,2200\}$. The data are generated under the same temporally dependent 10D PPP pre-change model as in \Cref{tab:sim_10d_nonlowrank}.}
\label{tab:sim_10d_cumfar}
\begin{tabular}{lcccc}
\toprule
Time & MMD & KIE & Matrix & NN-CUSUM \\
\midrule
1200 & 0\%  & 0\%  & 0\%  & 0\% \\
1400 & 0\%  & 0\%  & 2\%  & 0\% \\
1600 & 1\%  & 3\%  & 2\%  & 1\% \\
1800 & 2\%  & 9\%  & 2\%  & 2\% \\
2000 & 2\%  & 11\% & 3\%  & 2\% \\
2200 & 3\%  & 14\% & 4\%  & 2\% \\
\bottomrule
\end{tabular}
\end{table}

\begin{figure}[ht]
  \centering
  \includegraphics[width=0.7\linewidth]{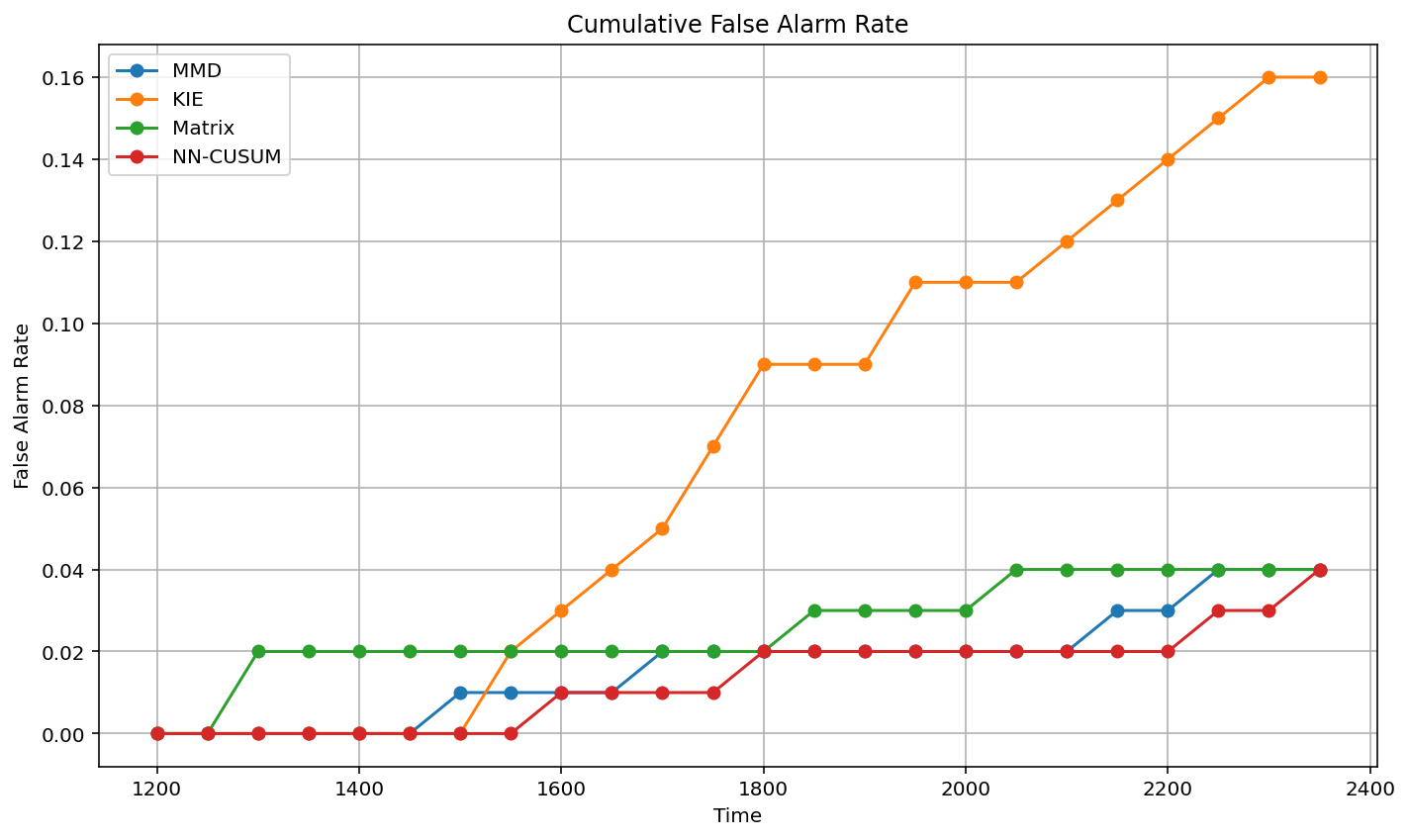}
  \caption{Cumulative false-alarm rate over time for the no-change dependent 10D PPP scenario. The Matrix, MMD, and NN-CUSUM detectors remain well within $5\%$ throughout the time horizon, while the KIE detector exhibits a noticeable upward drift.}
  \label{fig:cumfar_10d}
\end{figure}

\subsection{Robustness to Rank Selection and Coordinate Partition}
\label{section:robustness-supp}

The Matrix detector requires two structural choices: the working rank $r$ used inside the restricted SVD in \Cref{algorithm:restricted svd}, and the coordinate partition $[d]=\mathcal I_1\cup\mathcal I_2$ used to form the matricized intensity. As clarified in \Cref{remark:partition scope}, \Cref{theorem:main} provides false-alarm and detection-delay guarantees for any partition that satisfies the approximate low-rank condition, and the empirical correlation-based criterion in \Cref{remark:tuning parameters} is a practical heuristic. Here we provide an empirical assessment of how sensitive the procedure is to these two choices.

We use the same dependent 10D PPP setting as in \Cref{section:10d-non-lowrank}. We evaluate four configurations of the Matrix detector:
\begin{itemize}
\item the default correlation-based partition with rank $r\in\{3,5,10\}$, and
\item a randomly chosen coordinate partition with rank $r$ selected via the goodness-of-fit criterion of \Cref{remark:tuning parameters}.
\end{itemize}
For each configuration, we sweep the threshold and record the resulting FAP--ADD trade-off. \Cref{fig:robustness_supp} shows the four curves. The trade-off curves remain close to one another across all rank values and under the random-partition baseline, indicating that the proposed procedure is not sensitive to the specific rank choice or to the use of a non-correlation-based partition.

\begin{figure}[ht]
  \centering
  \includegraphics[width=0.7\linewidth]{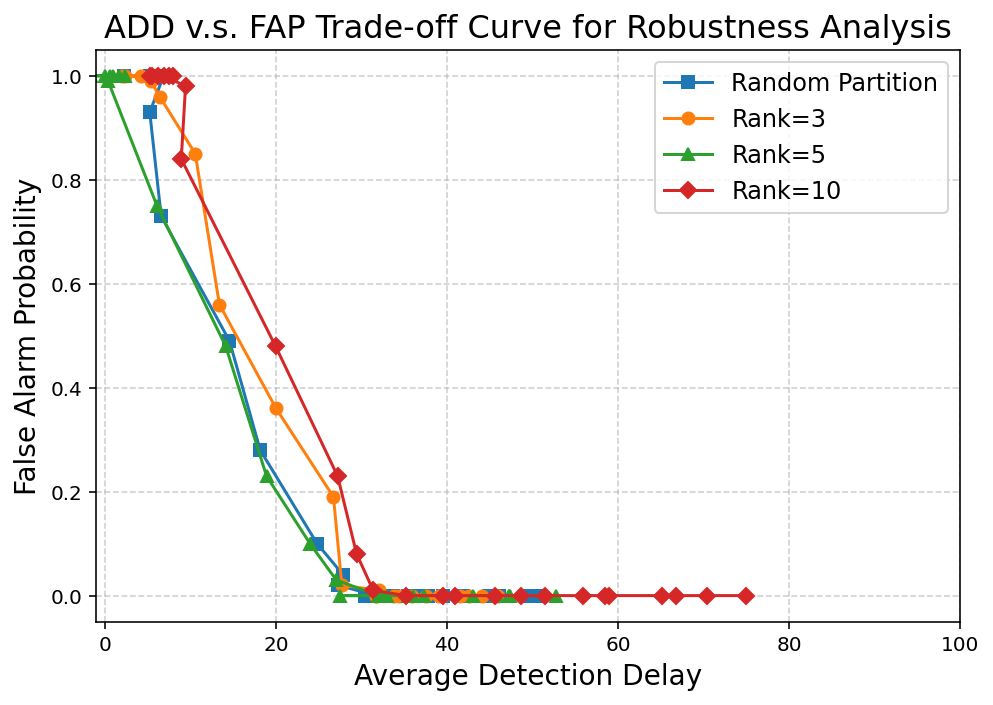}
  \caption{FAP vs.\ ADD trade-off curves for the Matrix detector under different rank choices ($r\in\{3,5,10\}$) and a random coordinate partition, in the dependent 10D PPP setting. The curves remain close to one another, indicating that the procedure is not sensitive to either choice.}
  \label{fig:robustness_supp}
\end{figure}

\end{document}